\newcommand\blfootnote[1]{%
  \begingroup
  \renewcommand\thefootnote{}\footnote
{#1}%
  \addtocounter{footnote}{-1}%
  \endgroup
}
\title{\vspace{-1.0cm}Reconciling the accuracy-diversity trade-off in recommendations}
\author{
  Kenny Peng, Manish Raghavan, Emma Pierson, Jon Kleinberg\textsuperscript{\textdagger}, Nikhil Garg\textsuperscript{\textdagger}\blfootnote{Kenny Peng, Cornell Tech, \url{kennypeng@cs.cornell.edu}; Manish Raghavan, MIT; Emma Pierson, Cornell Tech; Jon Kleinberg, Cornell University; Nikhil Garg, Cornell Tech, \url{ngarg@cornell.edu}. \textsuperscript{\textdagger}Joint senior author.}
}
\date{}
\newtheorem{proposition}{Proposition}
\newtheorem{theorem}{Theorem}
\newtheorem{corollary}[proposition]{Corollary}
\newtheorem{lemma}{Lemma} %
\theoremstyle{definition}
\newtheorem{example}{Example}
\newtheorem{definition}{Definition}
\newenvironment{mylem}[1]
  {\innercustomthm}
  {\endinnercustomthm}
\newcommand{\EE}{\mathbb{E}}
\newcommand{\ZZ}{\mathbb{Z}_{\ge 0}}
\newcommand{\RR}{\mathbb{R}}
\newcommand{\DD}{\mathcal{D}}
\newcommand{\wh}{\widehat}
\newcommand{\norm}[1]{\left\lVert#1\right\rVert}
\newcommand{\simiid}{\overset{\mathrm{iid}}{\sim}}
\newcommand{\vocab}[1]{\textbf{\textit{#1}}}
\newcolumntype{P}[1]{>{\centering\arraybackslash}p{#1}}
\newcolumntype{C}[1]{>{\centering\let\newline\\\arraybackslash\hspace{0pt}}m{#1}}
\DeclareMathOperator*{\argmax}{arg\,max}
\DeclareMathOperator{\Ber}{Ber}
\DeclareMathOperator{\topp}{top}
\DeclareMathOperator{\betaa}{Beta}
\DeclareMathOperator{\Exp}{Exp}
\DeclareMathOperator{\Pareto}{Pareto}
\newcommand{\citet}[1]{\cite{#1}}
\begin{document}

\maketitle

\begin{abstract}
    In recommendation settings, there is an apparent trade-off between the goals of \textit{accuracy} (to recommend items a user is most likely to want) and \textit{diversity} (to recommend items representing a range of categories). As such, real-world recommender systems often explicitly incorporate diversity separately from accuracy. This approach, however, leaves a basic question unanswered: Why is there a trade-off in the first place?

    We show how the trade-off can be explained via a user's \textit{consumption constraints}---users typically only consume a few of the items they are recommended. In a stylized model we introduce, objectives that account for this constraint induce diverse recommendations, while objectives that do not account for this constraint induce homogeneous recommendations. This suggests that accuracy and diversity appear misaligned because standard accuracy metrics do not consider consumption constraints. Our model yields precise and interpretable characterizations of diversity in different settings, giving practical insights into the design of diverse recommendations. 
\end{abstract}

\section{Introduction}

A large body of work in recommendations has developed methods to navigate an apparent trade-off between the goals of \textit{accuracy} (to recommend items a user is most likely to want) and \textit{diversity} (to recommend items from a range of categories) \cite{kunaver2017diversity, Adomavicius2012ImprovingAR, Raza2021DeepNN, Isufi2021AccuracydiversityTI, Alexandridis2015AccuracyVN, Hou2020ATE, Alhijawi2023MultifactorRM, Paudel2017FewerFA, He2022DoesUO, FernndezTobas2016AccuracyAD, Gogna2017BalancingAA, FerrariDacrema2021DemonstratingTE, Liu2012SolvingTA, Eskandanian2020UsingSM, Wu2020ServiceRW, Laci2017BeyondAO, Javari2015APM, Kleinberg2023CalibratedRF, Seymen2021ACO, Abdollahpouri2023CalibratedRA, Zhang2008AvoidingMI, Ashkan2015OptimalGD, Zhou2008SolvingTA}. Real-world recommender systems use heuristics to directly incorporate diversity into recommendations \cite{facebook, ebay}, and empirical evidence demonstrates that users prefer diverse recommendations \cite{10.1145/3366423.3380281, S2022DiversityVR, Kim2021CustomerSO, Park2013FromAT}.

A fundamental question remains: Why is there a trade-off in the first place? More specifically, why is ``accuracy'' unaligned with a user's true preference for diversity? Without a principled understanding of the accuracy-diversity trade-off, attempts to diversify recommendations have difficulty moving beyond a heuristic basis---and difficulty articulating what they are accomplishing at a deeper level.

In this work, we introduce and analyze a stylized model of recommendations that helps explain and reconcile the apparent accuracy-diversity trade-off. Our model studies this trade-off through the lens of a user's \textit{consumption constraints}: users typically examine a list of recommendations and use only the top (highest value) options. (A person can watch only one movie in an evening, a recruiter can select only a handful of candidates to interview.)
This lens has strong explanatory power:
\begin{itemize}[leftmargin=*]
    \item When the goal is to maximize the expected value of the \textit{top} recommended items, diverse recommendations are often optimal in our model. By accounting for a user's consumption constraints, optimizing for what they are most likely to want is aligned with recommending a diverse set of items.
    \item Without accounting for consumption constraints, optimal recommendations in our model are homogeneous. This suggests that the trade-off between accuracy and diversity can be explained by commonly-used accuracy metrics not accounting for consumption constraints.

\end{itemize}

A strength of our model is that we can precisely and interpretably analyze the optimal \textit{amount} of diversity in different settings, allowing us to isolate the effect of consumption constraints. This precision also yields practical insights about the role of diversity in recommendations. In Theorem 3, for example, we uncover natural settings where optimal recommendations overrepresent the category of item a user is \textit{least} likely to want---a paradox that appears in grocery stores, where even though customers are less likely to buy ice cream than milk, stores allocate much more space to ice cream.

\subsection{Model} There are $m$ types of items indexed by $[m]=\{1,2,\cdots,m\}.$ A user prefers exactly one type of item, preferring type $t\in[m]$ with probability $p_t$. The value of the $i$-th item of type $t$ is a random variable $X_i^{(t)}$ if the user prefers type $t$ and $0$ otherwise (so its expected value is $p_t\EE[X_i^{(t)}]$). The recommender knows only how $X_i^{(t)}$ are distributed, not their realizations. We refer to $p_t$ as the \vocab{likelihood} of a type and the random variable $X_i^{(t)}$ as a \vocab{conditional item value} (the value of an item conditional on the user preferring the item's type). 

Let $S_{n,k}$ be the set of $n$ items that maximizes the expected total value of the $k$ items with the highest realized values. We call $S_{n,k}$ the \vocab{optimal} set of $n$ recommendations with respect to this objective. (We omit the dependency of $S_{n,k}$ on the other model parameters, as these will be clear from context.)

$S_{n,k}$ arises naturally from an assumption that the user can only use $k$ items, so that they derive value from the $k$ highest value recommended items. So $S_{n,1}$ results from an assumption that the user uses only one item, while $S_{n,n}$ results from an assumption that the user uses \textit{every} recommended item.

$S_{n,n}$ maximizes the expected total value of all of the recommended items, so by linearity of expectation, $S_{n,n}$ contains the $n$ items with the highest individual expected values. 

Meanwhile, $S_{n,1}$ maximizes the expected value of the highest-value item, an objective that---importantly---is not maximized by choosing the individual items with the highest expected values.

\paragraph{Our overarching technical result and interpretation.} 
Our main results (summarized in \Cref{table:mainresults}) show across several settings that:
\begin{quote}
    $S_{n,k}$ is diverse for $k$ fixed and $n$ growing. Meanwhile, $S_{n,n}$ is homogeneous.
\end{quote} 
This technical result has an interpretation explaining the accuracy-diversity trade-off. Objectives that account for a user's consumption constraints naturally induce diversity, while objectives that do not account for consumption constraints can produce homogeneous recommendations. Thus, the observed trade-off is (partly) a consequence of common accuracy metrics not modeling consumption constraints.

One might suggest that our results show homogeneity is desirable when users can use all recommended items. We do not emphasize this interpretation since there are additional reasons outside of our model to incorporate diversity. Thus, our results are best interpreted as showing (1) that diversity arises under \textit{minimal} assumptions, and (2) that standard objectives can (mistakenly) induce homogeneity.

\paragraph{Evaluating diversity.} To evaluate diversity, we consider the representation of each type in $S_{n,k}$. In many of our results, representation falls on an interpretable continuum: from complete diversity (each type represented equally) to proportional diversity (each type represented proportionally to its likelihood $p_t$) to homogeneity (only the highest likelihood type represented). We formalize our measure of diversity in \Cref{sec:diversity}.

\begin{table}
    \begin{center}
    \caption{Our main results broadly show: (1) consumption constraints induce diverse recommendations, and (2) without accounting for consumption constraints, recommendations tend to be homogeneous.}
    \label{table:mainresults}
      \vspace*{3mm}
      \small
\begin{tabular}{@{}>{\raggedright}p{4.3cm}p{4.3cm}p{4.3cm}}
\toprule
\multicolumn{1}{c}{\textbf{Setting}} & \multicolumn{1}{c}{\textbf{With consumption constraints}} & \multicolumn{1}{c}{\textbf{Without consumption constraints}} \\
\midrule
\textbf{Thm. 1.} 
\begin{equation*}
    X_i^{(t)}\sim \DD
\end{equation*} & As $n$ grows large for fixed $k$, $S_{n,k}$ exhibits diversity depending on the tail behavior of $\DD$. \vspace{0.1cm} \newline For non-heavy-tailed $\DD$, $S_{n,k}$ is at least proportionally diverse. & $S_{n,n}$ contains only one type of item.\\ & & \\
\textbf{Thm. 2.}
\begin{equation*}
    X_i^{(t)}\sim \Ber(q_i)
\end{equation*}for $q_1,q_2,\cdots$ decaying by a power law (roughly, $q_i\propto i^{-\alpha}$). & For moderate amounts of decay $(\alpha < 1)$, as $n$ grows large, $S_{n,1}$ represents each item type equally. & For moderate amounts of decay $(\alpha < 1)$, as $n$ grows large, $S_{n,n}$ is less than proportionally diverse.\\
& & \\
\textbf{Thm. 3.}
\begin{equation*}
    X_i^{(t)}\sim \Ber(q_t)
\end{equation*} for $q_1,q_2,\cdots,q_m$. & For large $n$, $S_{n,1}$ represents each item type in proportion to \begin{equation*}
    \frac{1}{\log \frac{1}{1-q_t}},
\end{equation*}
so that items of \textit{lower} success probability are recommended \textit{more}. & $S_{n,n}$ contains only one type of item.\\
\bottomrule
\end{tabular}
\end{center}
\end{table}

\paragraph{Discussion and limitations.} Our model is purposefully stylized and minimal so that we can abstract a small set of ingredients common to a wide set of domains---namely, those where users prefer a certain category of items and a recommender has noisy estimates of user preferences and item values. Importantly, our model captures the consumption constraints of users: when a user is given a set of recommendations, they can typically use only the best few recommendations---a person can watch only one movie in an evening, a recruiter can select only a handful of candidates to interview. 

\textit{Other reasons for diversity.} Importantly, when we refer to \textit{optimal} sets of recommendations, we mean optimal with respect to our stylized optimization problem---maximizing the expected sum of the $k$ highest item values. In real-world contexts, there are many reasons to incorporate diversity that we do not consider here. For example, our model does not consider any explicit preferences users have for diversity. While such a preference may be empirically well-grounded, omitting such a preference makes our conclusions stronger: simply by modeling user consumption constraints, we show that diversity arises naturally in optimal recommendations \textit{even when} our model does not explicitly value diversity.

\textit{Model generalizations.} Real-world settings can differ from our model in natural ways. For example, users can prefer multiple types of items at a time and some items may fall under multiple types. We discuss generalizations in \Cref{sec:generalizations}. Here, the assumptions that users prefer only one type and items each fall under one type allow us to analyze diversity in an interpretable way: characterizing how represented each item type is in comparison to the user's likelihood of preferring that type.

\subsection{An illustrative example: Recovering Steck's standard of calibration}

We now consider a simple instantiation of our model, motivated by a thought experiment suggested by the Netflix researcher Harald Steck \cite{steck2018calibrated}. A user watches romance movies 70 percent of the time and action movies the other 30 percent of the time. Steck raises a concern that an accuracy-maximizing algorithm can produce entirely homogeneous recommendations in this setting, and proposes a standard of \textit{calibration}, where 70 percent of recommended movies here are romance and 30 percent are action. The example below reflects Steck's concern by showing that---before accounting for consumption constraints---optimal recommendations are homogeneous. Yet, by assuming a user can only watch one movie, the optimal recommendations in the example are in fact naturally \textit{calibrated}.

\begin{example}[Recovering calibration]\label{ex:steck}
    Suppose there are two genres of movies, romance and action, indexed 1 and 2 respectively. A user prefers romance with probability $p_1$ and action with probability $p_2$, with $p_1 > p_2.$ Movies from the user's preferred genre have values drawn i.i.d. from an exponential distribution $\Exp(\lambda)$, while other movies have value $0$. In the language of our model,
    $X_i^{(t)} \simiid \Exp(\lambda)$.
    
    $S_{n,n}$ maximizes the expected total value of recommended items, so by linearity of expectation it contains the items of highest individual expected value. The expected value of each romance and action movie are $p_1 \EE[\Exp(\lambda)]$ and $p_2 \EE[\Exp(\lambda)]$ respectively, so $S_{n,n}$ contains \textit{only} romance movies.

    $S_{n,1}$ maximizes the expected value of the best recommended movie. The expected value of the best item among $a_1$ romance movies and $a_2$ action movies (where $a_1+a_2=n$) is
    \begin{equation}
        p_1\cdot \EE[\text{max of $a_1$ draws from $\Exp(\lambda)$}] + p_2\cdot \EE[\text{max of $a_2$ draws from $\Exp(\lambda)$}]
    \end{equation}
    \begin{equation}
        \approx \frac{p_1}{\lambda} \log(a_1) + \frac{p_2}{\lambda} \log(a_2),
    \end{equation}
     which is maximized when $a_1 = p_1n$ and $a_2 = p_2n$ (as shown using Lagrange multipliers). Therefore, $S_{n,1}$ has proportional representation, recovering Steck's standard of calibration.
\end{example}

\subsection{Intuition for our overarching result}\label{sec:intuition}

In the model we propose, a user's value depends only on the recommended items from the user's preferred type. Therefore, conditional on a user preferring type $t$, the user's expected value from a set of recommendations with $a_t$ items of type $t$ is given by $h_t(a_t)$ for some function $h_t$. Then the user's expected value from a set of recommendations with $a_t$ items of type $t$ for all $t\in [m]$ is of the form
\begin{equation}\label{eq:water}
    \sum_{t=1}^m p_t h_t(a_t).
\end{equation}
The key idea of our work is that when accounting for consumption constraints, the function $h_t$ reflects diminishing returns: recommending additional items from a type becomes less and less valuable when we care only about the value of the best recommended items---when a user can only watch one movie, recommending three options for romance movies might be significantly better than two, because the platform doesn't know which exact romance movie the user may like; however, recommending 20 such movies is barely better than 19. Thus, when maximizing \eqref{eq:water}, it becomes preferable to recommend items from other types. Meanwhile, without modeling user consumption constraints, there are not necessarily diminishing returns since the user can use all of the additional items recommended.

The particular shape of the diminishing returns regulates the amount of resulting diversity. In \Cref{ex:steck}, we had that $h_t(x)\propto \log x,$ in which case \eqref{eq:water} is maximized when $a_t\propto p_t$, yielding proportional representation.\footnote{The mathematical result in this example also appears in the context of resource allocation (e.g., \cite{Ghorbanzadeh2014AUP}) and betting (e.g., \S 22.2 in \cite{easley2010networks}) in the presence of logarithmic utility.} Our technical work thus involves analyzing the functions $h_t$ in different settings (this reduces to analyzing the large order statistics of different distributions). Roughly speaking, heavier-tailed conditional item values imply larger marginal returns, resulting in less diversity.

\subsection{Related work}
As we have noted, there is a wide literature devoted to developing methods to navigate the accuracy-diversity trade-off \cite{kunaver2017diversity, Adomavicius2012ImprovingAR, Raza2021DeepNN, Isufi2021AccuracydiversityTI, Alexandridis2015AccuracyVN, Hou2020ATE, Alhijawi2023MultifactorRM, Paudel2017FewerFA, He2022DoesUO, FernndezTobas2016AccuracyAD, Gogna2017BalancingAA, FerrariDacrema2021DemonstratingTE, Liu2012SolvingTA, Eskandanian2020UsingSM, Wu2020ServiceRW, Laci2017BeyondAO, Javari2015APM, Kleinberg2023CalibratedRF, Seymen2021ACO, Abdollahpouri2023CalibratedRA, Zhang2008AvoidingMI, Ashkan2015OptimalGD, Zhou2008SolvingTA}. Such work is supported by empirical evidence suggesting that a combination of these two metrics is preferred by users \cite{10.1145/3366423.3380281, S2022DiversityVR, Kim2021CustomerSO, Park2013FromAT}. Of particular interest to us, however, is work that focuses on objectives that \textit{implicitly} correspond to diversity. In the context of web search, optimizing for the probability that the user is shown a satisfactory search result has been associated with diversification \cite{10.1145/1498759.1498766, Radlinski2008LearningDR}, since effective search results must account for different intents of queries (``pandas'' can refer to an animal or a Python package). More recent work has adapted this objective to the recommender system setting as a metric that unifies accuracy and diversity \cite{Parapar2021TowardsUM}. Our work, by explicitly characterizing optimal amounts of diversity, identifies consumption constraints as the underlying reason for why such approaches result in diversity. We discuss additional related work in \Cref{sec:additional-related-work}.

\section{Evaluating diversity}\label{sec:diversity}

We now formalize our approach to evaluating diversity. For a set $S$ of items, we define
\begin{equation}
    r_t(S) := \frac{\text{\# of items in }S\text{ of type }t}{|S|},
\end{equation}
the \textbf{representation} of type $t$ in $S$. Intuitively, a set of recommendations is diverse if all types are well represented. We now define an interpretable family of representations that interpolates between maximum diversity and maximum homogeneity, and which arises naturally in many of our results.

\begin{definition}[$\gamma$-homogeneity]
A set $S$ is \textbf{$\bm{\gamma}$-homogeneous} if for all $t\in [m],$
\begin{equation}\label{eq:gamma}
r_t(S) = \frac{p_t^{\gamma}}{\sum_{i=1}^m p_i^{\gamma}}.
\end{equation}
\end{definition}

\noindent $\gamma$-homogeneity captures several intuitive notions of diversity, using $p_1,\cdots,p_m$ as a benchmark:
\begin{itemize}[leftmargin=*]
\item When $\gamma = 0$, $r_t(S)=\frac{1}{m}.$ There is ``equal representation.''
\item When $\gamma = 1$, $r_t(S)=p_t.$ There is ``proportional representation,'' where an item type is represented in proportion to its likelihood.
\item When $\gamma = \infty$, $r_t(S)=1$ for $t=\argmax_{i\in [m]} p_i$ and $r_t(S)=0$ otherwise. There is ``complete homogeneity,'' where only the highest-likelihood item type is represented.
\end{itemize}
A smaller $\gamma$ corresponds to more diversity, with $\gamma\le 1$ indicating \textit{at least proportional} representation. In practice, it is challenging to show that individual sets are $\gamma$-homogeneous; for one, since sets have an integer number of items from each type, it is typically impossible to obtain the exact ratios in \eqref{eq:gamma}. Instead, we will give primarily asymptotic results, showing that as $n$ grows large, the optimal set $S_{n,k}$ approaches $\gamma$-homogeneity. Formally, we define $\gamma$-homogeneity over sequences of sets:

\begin{definition}[$\gamma$-homogeneity for set sequences]
A sequence of sets $\{S_n\}_{n=1}^\infty$ is \textbf{$\bm{\gamma}$-homogeneous} if for all $t\in [m],$
\begin{equation}
\lim_{n\rightarrow \infty} r_t(S_n) = \frac{p_t^{\gamma}}{\sum_{i=1}^m p_i^{\gamma}}.
\end{equation}
\end{definition}

One perhaps surprising aspect of our results is that $\gamma$-homogeneity is sufficient to characterize diversity in a large class of settings, as opposed to requiring more complicated functions of proportions $p_t$. 

\section{Main results}
We now state our main results, which consider several settings reflecting different assumptions about the conditional item values $X_i^{(t)}.$ In \Cref{sec:general}, we assume that $X_i^{(t)}\simiid \DD$ are drawn from a shared distribution, which implies that the recommender has little information about the value of specific items. In \Cref{sec:bernoulli}, $X_i^{(t)}$ are Bernoulli random variables with success probabilities differing depending on $i$ and $t$, meaning that the recommender has information about which items are more likely to satisfy a user.

In each setting, we analyze the diversity of optimal sets $S_{n,k}$ for when $k$ is fixed (i.e., accounting for consumption constraints) and $k=n$ (i.e., not accounting for consumption constraints). We sketch our proof strategy in \Cref{sec:proof-sketch} and defer full proofs to \Cref{sec:proofs}.

\subsection{i.i.d. conditional item values}\label{sec:general}
Consider the setting in which
$X_i^{(t)}\simiid \DD,$
i.e., conditional item values are drawn from a shared distribution. This implies that the values of items behave similarly across types and within types, and the platform cannot easily distinguish between the items in a type. From the perspective of a hiring platform, there may be many candidates with similar backgrounds (e.g., education or work history), none of whom can be distinguished from another by the platform. Conditional on a recruiter preferring this background, candidate values can be modeled as coming from a shared distribution.

We show that for a fixed $k$, as $n$ grows large, the diversity of $S_{n,k}$ theoretically varies between equal representation, proportional representation, and near-complete homogeneity depending on the tail-behavior of $\DD$. In particular, distributions that are bounded or have exponential tails induce at least proportional representation. Meanwhile, $S_{n,n}$ is completely homogeneous.

\begin{theorem}\label{thm:general}
Suppose $X_i^{(t)}\simiid \DD$ where $\DD$ has finite mean. Then the following statements hold.
\begin{enumerate}
\item[(i)] \textbf{[Finite Discrete]} If $\DD$ is a finite discrete distribution, $\{S_{n,k}\}_{n=1}^\infty$ is $0$-homogeneous.
\item[(ii)] \textbf{[Bounded]} If $\DD$ has support bounded from above by $M$ with pdf $f_\DD$ satisfying
\begin{equation}
\lim_{x\rightarrow M} \frac{f_\DD(x)}{(M-x)^{\beta-1}} = c
\end{equation}
for some $\beta, c>0$, then $\{S_{n,k}\}_{n=1}^\infty$ is $\frac{\beta}{\beta+1}$-homogeneous.\\(This pdf class contains beta distributions, including the uniform distribution.)
\item[(iii)] \textbf{[Exponential tail]} If $\DD = \Exp(\lambda)$ for $\lambda > 0,$ then $\{S_{n,k}\}_{n=1}^\infty$ is $1$-homogeneous.
\item[(iv)] \textbf{[Heavy tail]} If $\DD = \Pareto(\alpha)$ for $\alpha>1$, then $\{S_{n,k}\}_{n=1}^\infty$ is $\frac{\alpha}{\alpha-1}$-homogeneous.
\end{enumerate}
Additionally,
\begin{enumerate}
    \item[(v)] $S_{n,n}$ contains only items of type $t = \argmax_{t\in [m]} p_t.$
\end{enumerate}
\end{theorem}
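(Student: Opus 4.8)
My plan is to reduce all five parts to the analysis of a single separable concave objective. Writing $a_t$ for the number of type-$t$ items in a recommendation set, the user derives value only from items of their (single) preferred type, so conditioning on the preferred type gives objective value $\sum_{t=1}^m p_t\, g_k(a_t)$, where $g_k(a) := \EE[\text{sum of the top } \min(k,a) \text{ of } a \text{ i.i.d.\ draws from } \DD]$. For large $n$ and fixed $k$ every relevant $a_t$ exceeds $k$, and since the largest realized values are positive while non-preferred items have value $0$, the top-$k$ realized values indeed come from the preferred type. Thus finding $S_{n,k}$ is exactly the problem of maximizing $\sum_t p_t g_k(a_t)$ over nonnegative integers with $\sum_t a_t = n$.

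The first structural step is to show each $g_k$ is concave, i.e.\ that its forward differences $\Delta_k(a) := g_k(a+1)-g_k(a)$ are nonincreasing. For this I would use the identity $\Delta_k(a) = \EE[(Y - X_{(k:a)})^+]$, where $X_{(k:a)}$ is the $k$-th largest of $a$ i.i.d.\ draws and $Y$ is an independent draw: adding one item to a pool either displaces the current $k$-th largest (gain $Y - X_{(k:a)}$) or changes nothing. Since $X_{(k:a)}$ is stochastically increasing in $a$, $\Delta_k$ is nonincreasing, so the objective is separable concave and the greedy rule is optimal: $S_{n,k}$ consists of the $n$ largest terms of the multiset $\{p_t \Delta_k(j)\}_{t\in[m],\, j\ge 0}$. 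Equivalently, at the optimum there is a threshold $\tau_n$ with $p_t \Delta_k(a_t-1) \ge \tau_n > p_t \Delta_k(a_t)$ for each $t$, so the marginal values are equalized across types.

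Everything then reduces to the large-$a$ asymptotics of $\Delta_k(a)$, which I would extract from the tail $\bar F$ of $\DD$ via $\Delta_k(a) = \EE[\int_{X_{(k:a)}}^\infty \bar F(y)\,dy]$ together with the concentration of $X_{(k:a)}$ around the level $u_a$ solving $\bar F(u_a) = k/a$. Carrying this out regime by regime yields a power-law decay $\Delta_k(a) \sim C\, a^{-\rho}$: for $\Exp(\lambda)$ one gets $\rho = 1$; for $\Pareto(\alpha)$, $\rho = (\alpha-1)/\alpha$; and for the bounded class with $f_\DD(x) \sim c(M-x)^{\beta-1}$, $\rho = (\beta+1)/\beta$ (these are exactly the Gumbel, Fréchet, and reversed-Weibull extreme-value domains). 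Inverting the equalization condition $p_t \Delta_k(a_t) \approx \tau_n$ gives $a_t \propto p_t^{1/\rho}$, and normalizing by $\sum_t a_t = n$ yields $r_t(S_{n,k}) \to p_t^{1/\rho}/\sum_i p_i^{1/\rho}$, i.e.\ $\gamma$-homogeneity with $\gamma = 1/\rho$, recovering $1$, $\alpha/(\alpha-1)$, and $\beta/(\beta+1)$ for parts (iii), (iv), (ii). For a finite discrete $\DD$ with top atom $v_{\max}$ of mass $q$, the same identity shows $\Delta_k(a)$ decays like $a^{k-1}(1-q)^a$, i.e.\ exponentially; the equalization condition then forces $a_t = n/m + O(\log n)$, so $r_t \to 1/m$ and $\gamma = 0$, giving part (i). Finally part (v) is immediate: when $k=n$ the objective is $\sum_t p_t\,\EE[\sum_{i\le a_t} X_i^{(t)}] = \EE[\DD]\sum_t p_t a_t$ by linearity, which is maximized by placing all $n$ items on $t = \argmax_{t\in[m]} p_t$.

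The main obstacle I anticipate is the asymptotic analysis in the third step: making the heuristic $\Delta_k(a) \approx \int_{u_a}^\infty \bar F$ rigorous requires controlling the fluctuations of $X_{(k:a)}$ around $u_a$ and showing they perturb neither the exponent $\rho$ nor (where needed) the leading constant $C$. One must also verify that the lower-order terms arising when inverting the equalization condition are $o(n)$, so that the limiting ratios are \emph{exactly} $p_t^{1/\rho}/\sum_i p_i^{1/\rho}$ rather than merely of the right order; the exponential-tail and finite-discrete cases, which sit at the boundary between the polynomial regimes, will demand the most careful error tracking.
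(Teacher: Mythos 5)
Your proposal is correct and arrives at all the right exponents, but it runs on a genuinely different engine from the paper's. The paper's central tool (its Lemma \ref{lem:fennel}) works with \emph{global} asymptotics of $h(a)=\EE[\topp_k\{X_1,\dots,X_a\}]$ itself---e.g., $A-h(a)\sim Ba^{\sigma}$ or $h(a)\sim Ba^{\sigma}$---and proves the limiting ratios by a compactness argument: extract a convergent subsequence of $(a_t^{(n)}/n)$, compare against a competitor sequence converging to the claimed ratios, and derive a contradiction via a Lagrange-multiplier optimization over the simplex; notably, in the saturating cases (parts (i)--(ii)) it needs only monotonicity of $h$, not concavity. You instead work at the \emph{marginal} level: concavity of $g_k$ via the identity $\Delta_k(a)=\EE[(Y-X_{(k:a)})^+]$ and stochastic monotonicity of $X_{(k:a)}$, then greedy/threshold optimality for separable concave integer programs, then inversion of the equalization condition $p_t\Delta_k(a_t)\approx\tau_n$. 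Your concavity argument is arguably cleaner and more uniform than the paper's case-by-case verifications (exact harmonic sums for the exponential, a cited product formula for the Pareto), and the threshold characterization gives finite-$n$ structural information the paper's asymptotic lemma does not. The trade-off is that you must establish asymptotics of the \emph{differences} $\Delta_k(a)$, which is strictly finer information than asymptotics of $h(a)$ (one cannot formally difference an asymptotic expansion), so your third step carries a heavier rigor burden than the paper's order-statistics computations via $\mu_{\DD'}(i,a)=\sum_{j<i}\int_0^\infty\binom{a}{j}G^j(1-G)^{a-j}\,dx$ and dominated convergence; your identity plus concentration of $X_{(k:a)}$ around the quantile $u_a$ with $\bar F(u_a)=k/a$ does give direct access, and in fact slow variation of $\Delta_k$ suffices for the limiting ratios, so the exact constant $C$ you worry about is not actually needed. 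Two small points to make explicit in a full write-up: you need $a_t^{(n)}\to\infty$ for every $t$ (your threshold $\tau_n\to 0$ delivers this, just as the paper's exchange argument does), and both your reduction and the paper's implicitly use nonnegativity of $\DD$ so that the top-$k$ realized values come from the preferred type. Your part (v) is the paper's argument verbatim.
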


As \Cref{table:thmgeneral} illustrates, the theorem shows how for fixed $k$, the diversity of optimal solutions depends on the tail behavior of $\DD$. In fact, we can obtain $\gamma$-homogeneity for any $\gamma$:
\begin{corollary}
    For any $\gamma\ge 0,$ there exists $\DD$ such that when $X_i^{(t)}\simiid \DD$ and $k$ is fixed, $\{S_{n,k}\}_{n=1}^\infty$ is $\gamma$-homogeneous.
\end{corollary}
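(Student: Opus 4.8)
The plan is to prove the corollary as a direct case analysis built entirely on \Cref{thm:general}, exploiting the fact that parts (i)--(iv) already realize a continuum of homogeneity exponents as their distributional parameters vary. First I would observe that the four parts of the theorem collectively cover the full range $\gamma \in [0,\infty)$: the finite discrete case gives the endpoint $\gamma = 0$, the bounded case sweeps out the open interval $(0,1)$, the exponential case pins down $\gamma = 1$, and the Pareto case sweeps out $(1,\infty)$. The entire content of the proof is to verify this covering and to solve for the parameter that hits each target $\gamma$.

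Concretely, I would proceed by cases on $\gamma$. For $\gamma = 0$, take $\DD$ to be any finite discrete distribution and invoke part (i). For $\gamma \in (0,1)$, I want a bounded distribution whose pdf satisfies the tail condition of part (ii) with exponent $\beta$ chosen so that $\frac{\beta}{\beta+1} = \gamma$; solving gives $\beta = \frac{\gamma}{1-\gamma}$, which is a positive real whenever $\gamma \in (0,1)$, so a Beta distribution with the appropriate shape parameter works. For $\gamma = 1$, take $\DD = \Exp(\lambda)$ for any $\lambda > 0$ and apply part (iii). For $\gamma > 1$, I want a Pareto distribution with tail index $\alpha$ satisfying $\frac{\alpha}{\alpha-1} = \gamma$; solving gives $\alpha = \frac{\gamma}{\gamma-1}$, which satisfies $\alpha > 1$ precisely when $\gamma > 1$, so part (iv) applies.

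The only thing to check carefully is that these parameter inversions land in the admissible ranges required by the theorem, namely $\beta > 0$ in part (ii) and $\alpha > 1$ in part (iv); both follow immediately from the arithmetic above, so there is no genuine obstacle. I would close by noting that since every $\gamma \in [0,\infty)$ falls into exactly one of these four cases and each case exhibits an explicit $\DD$ for which $\{S_{n,k}\}_{n=1}^\infty$ is $\gamma$-homogeneous, the claim follows. If anything warrants a sentence of care, it is simply confirming that the maps $\beta \mapsto \frac{\beta}{\beta+1}$ and $\alpha \mapsto \frac{\alpha}{\alpha-1}$ are continuous bijections from $(0,\infty)$ onto $(0,1)$ and from $(1,\infty)$ onto $(1,\infty)$ respectively, which guarantees surjectivity onto the two open intervals with no gaps.
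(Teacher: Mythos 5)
Your proposal is correct and matches the paper's intended argument exactly: the corollary is established by inverting the parameter maps $\beta \mapsto \frac{\beta}{\beta+1}$ and $\alpha \mapsto \frac{\alpha}{\alpha-1}$ across the four cases of \Cref{thm:general}, precisely as summarized in the paper's \Cref{table:thmgeneral}. The arithmetic checks out ($\beta = \frac{\gamma}{1-\gamma}$ for $\gamma \in (0,1)$ and $\alpha = \frac{\gamma}{\gamma-1}$ for $\gamma > 1$), and Beta distributions realize the required pdf tail condition of part (ii).
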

Intuitively, heavy-tailed distributions (part (iv)) induce less diverse recommendations since the marginal returns of recommending more items from the same type remains high: drawing more samples from a heavy-tailed distribution produces ever-increasing item values. This contrasts with bounded distributions like the uniform distribution (part (ii)), where once an item has close to the maximum value, additional draws of that type will not further improve the utility significantly. 
\begin{table}\label{table:thmgeneral}
    \begin{center}
    \caption{A summary of Theorem 1. For $X_i^{(t)}\simiid \DD$, distributions $\DD$ with heavier tails induce less diversity.}
    \label{table:thmgeneral}
      \vspace*{3mm}
      \begin{tabular}{l c c c c c}
        \toprule %
        \textbf{} & \multicolumn{3}{c}{\textbf{bounded}} & \textbf{exp. tail} & \textbf{heavy tail}\\
        & \multicolumn{3}{c}{Thm. 1(ii)} & Thm. 1(iii) & Thm. 1(iv)\\
        \cmidrule(r){2-4}
        \cmidrule(r){5-5}
        \cmidrule(r){6-6}
        example $\DD$ &  & $\betaa(\cdot,\beta)$ &  & $\Exp(\lambda)$ & $\Pareto(\alpha)$\\
         & \small{$0<\beta<1$} & \small{$\beta=1$} & \small{$\beta>1$} & \small{$\lambda>0$} & \small{$\alpha>1$}\\        
\includegraphics{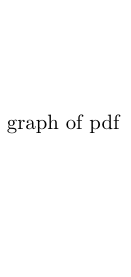}
&\includegraphics{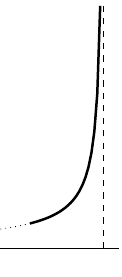}

& \includegraphics{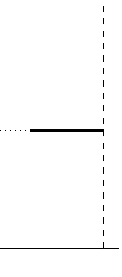}

& \includegraphics{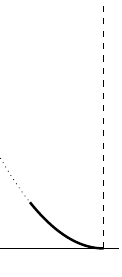} 

&\includegraphics{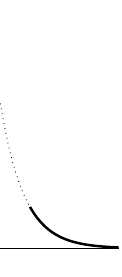}

&\includegraphics{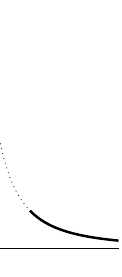}
\\
$\{S_{n,k}\}_{n=1}^\infty$\\$\gamma$-homog.\\for $\gamma\in$ & $(0,1/2)$ & $1/2$ & $(1/2,1)$ & $1$ & $(1,\infty)$\\
&&&&(i.e., proportional)&\\
&&&&&\\
&\multicolumn{5}{c}{$\longleftarrow$ more diverse \qquad \qquad \qquad \qquad \qquad less diverse $\longrightarrow$}\\
\midrule
\end{tabular}
\end{center}
\end{table}

\paragraph{A result for finite $n$ and larger $k$.} One limitation of our main results is that they are asymptotic $(n\rightarrow \infty)$ and are restricted to fixed consumption constraints $k$. Stronger results can be obtained by considering specific distributions. For example, the result below characterizes for any $n, k$ the representation of each type when conditional item values are uniformly distributed on $[0,1].$

\begin{proposition}\label{prop:uniform}
When $X_i^{(t)}\simiid U([0,1])$,
\begin{equation}
    \left|r_t(S_{n,k}) - \frac{\sqrt{p_t}}{\sum_{i=1}^m \sqrt{p_i}}\right|\le \frac{m+1}{n}.
\end{equation}
for all $k\le \frac{\sqrt{p_m}}{\sum_{i=1}^m \sqrt{p_i}}n - m - 1.$
\end{proposition}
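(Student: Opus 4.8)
The plan is to reduce the objective to an explicit separable optimization, solve its continuous relaxation, and then control integrality. First I would compute $h(a)$, the expected value of the top $\min(a,k)$ order statistics of $a$ i.i.d.\ $U([0,1])$ draws. Using $\EE[X_{(r)}] = r/(a+1)$ for the $r$-th smallest of $a$ uniform draws, the expected sum of the top $k$ (when $a \ge k$) telescopes to $h(a) = k - \frac{k(k+1)}{2(a+1)}$. As explained in \Cref{sec:intuition}, since items within a type are exchangeable, a recommendation set is determined by the counts $a_t$, and it has objective value $\sum_{t=1}^m p_t\, h(a_t)$; substituting $h$ and using $\sum_t p_t = 1$, maximizing this over compositions $\sum_t a_t = n$ is \emph{equivalent} to minimizing the clean separable convex objective $g(a) = \sum_{t=1}^m \frac{p_t}{a_t+1}$. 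The hypothesis $k \le \frac{\sqrt{p_m}}{\sum_i \sqrt{p_i}}n - m - 1$ is exactly what I would use to guarantee that every type receives at least $k$ items at the optimum (the least-represented type, $p_m = \min_t p_t$, is binding), so that the top-$k$ formula for $h$ applies to all types simultaneously.

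Second, I would solve the continuous relaxation of minimizing $g$ subject to $\sum_t a_t = n$. Writing $b_t = a_t + 1$ so that $\sum_t b_t = n+m$, Cauchy--Schwarz gives $\big(\sum_t \sqrt{p_t}\big)^2 \le \big(\sum_t \tfrac{p_t}{b_t}\big)\big(\sum_t b_t\big)$ with equality iff $b_t \propto \sqrt{p_t}$, so the relaxed optimum is $b_t^\star = \frac{\sqrt{p_t}}{\sum_i\sqrt{p_i}}(n+m)$. The induced representation $a_t^\star/n = \frac{\sqrt{p_t}}{\sum_i\sqrt{p_i}}\big(1+\tfrac{m}{n}\big) - \tfrac1n$ already sits within $\frac{m-1}{n}$ of the target $\frac{\sqrt{p_t}}{\sum_i\sqrt{p_i}}$, since $\frac{\sqrt{p_t}}{\sum_i\sqrt{p_i}}\in(0,1)$ forces $\big|\tfrac{m\sqrt{p_t}}{\sum_i\sqrt{p_i}}-1\big| < m-1$ for $m\ge 2$.

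Third --- the crux --- I would control integrality. Because $g$ is separable and convex and the feasible integer compositions are connected by unit swaps, a composition is globally optimal iff no single-item transfer decreases $g$; comparing the marginal gain $\frac{p_s}{a_s(a_s+1)}$ of removing an item from $s$ to the marginal loss $\frac{p_t}{(a_t+1)(a_t+2)}$ of adding one to $t$ yields, for a common threshold $\theta := \min_s \frac{p_s}{a_s(a_s+1)}$, the two-sided bound $a_t(a_t+1) \le \frac{p_t}{\theta} \le (a_t+1)(a_t+2)$ for every $t$. Solving these quadratics pins $a_t+1$ to within $\tfrac12$ of $\sqrt{\tfrac{p_t}{\theta}+\tfrac14}$; that is, each $a_t+1$ equals a \emph{common} scale times (essentially) $\sqrt{p_t}$ up to an additive $O(1)$. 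Imposing $\sum_t a_t = n$ then determines the scale $1/\sqrt\theta$ and lets me convert the per-coordinate slack into a bound on $r_t(S_{n,k})$.

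The main obstacle is this last error-propagation step: turning the per-type $\pm\tfrac12$ slack, together with the uncertainty in the common scale $\theta$ (which is fixed only implicitly through the budget $\sum_t a_t = n$ after summing $m$ coordinate-wise bounds), into the stated $\frac{m+1}{n}$. I expect the $\frac{m-1}{n}$ from the continuous solution plus a $\frac{2}{n}$ contribution from integrality to combine to $\frac{m+1}{n}$; care is needed because the $+\tfrac14$ inside the square root and the accumulated slacks across all $m$ types must be bounded uniformly in $t$, and one must re-verify throughout that the regime $a_t \ge k$ (hence the validity of the $h$ formula and of every swap argument) is preserved under the hypothesis on $k$.
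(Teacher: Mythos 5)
Your proposal follows essentially the same route as the paper: compute $h(a) = k - \frac{k(k+1)}{2(a+1)}$ from the uniform order statistics, reduce to minimizing $\sum_t p_t/(a_t+1)$ (the $k(k+1)/2$ factor cancels, which is exactly why the answer is $k$-independent), solve the continuous relaxation to get $a_t+1 \propto \sqrt{p_t}$, and control integrality by a unit-swap argument --- the paper packages that last step as a general rounding lemma (\Cref{lem:roundingloss}), but its proof is precisely your exchange inequality, and it likewise verifies $a_t > k$ at the optimum from the hypothesis on $k$. The only substantive difference is constant bookkeeping: your explicit two-sided quadratic bounds propagate to roughly $\tfrac{3m}{2n}$ rather than $\tfrac{m+1}{n}$ as written, but the paper's own tracking is comparably loose (it drops the $+m$ arising from $\sum_t (x_t^*+1) = n+m$), so this is not a real gap in the approach.
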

Therefore, for any $n$, $S_{n,k}$ is approximately $\frac{1}{2}$-homogeneous. In addition, for any $k$ that is smaller than a constant fraction of $n$, the diversity $S_{n,k}$ does not depend on $k$. Thus, even for small sets of recommendations and large consumption constraints, diversity is optimal in this setting. We further give simulated results for small $n$ and large $k$ in \Cref{sec:simulations} corroborating our theoretical results.
\subsection{Heterogeneous Bernoulli conditional item values}\label{sec:bernoulli}
Now consider when $X_i^{(t)}$ are independent random variables drawn from $\Ber(q_i^{(t)})$, reflecting a model in which items have binary values (i.e., a user is either satisfied or not satisfied by an item). In this section, we allow $q_i^{(t)}$ to differ across $i$ and $t$, implying that the recommender has knowledge about which items are more likely to be successful conditional on a user's preferred type. Specifically, in \Cref{sec:ber-decay} we allow $q_i^{(t)}$ to vary across $i$ and in \Cref{sec:ber-varying} we allow $q_i^{(t)}$ to vary across $t$.

Our results will focus on $S_{n,1}$ and $S_{n,n}$, which both have natural interpretations in this setting:
\begin{itemize}[leftmargin=*]
    \item $S_{n,1}$ maximizes the probability that the user will be satisfied by at least one recommended item.
    \item $S_{n,n}$ maximizes the the expected number of recommended items the user will be satisfied by, which is equivalent to the standard metric of accuracy.
\end{itemize}
Before proceeding, we note that the basic case $q_i^{(t)} = q$ for all $i,t$ is handled as a direct corollary of \Cref{thm:general}(i).

\begin{corollary}[Conditional item values are i.i.d. Bernoulli]\label{cor:ber}
When $X_i^{(t)}\simiid \Ber(q)$ for $q>0,$ then $S_{n,1}$ is $0$-homogeneous.
\end{corollary}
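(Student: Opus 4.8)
The plan is to recognize the corollary as a direct specialization of \Cref{thm:general}(i), so that essentially all of the work has already been done; the only genuine task is to check that $\Ber(q)$ with $q>0$ falls into the ``finite discrete'' regime of that theorem. First I would verify the two hypotheses needed to invoke \Cref{thm:general}. For $q > 0$, the distribution $\Ber(q)$ is supported on the two-point set $\{0,1\}$, so it is a finite discrete distribution, and it has finite mean $\EE[X_i^{(t)}] = q < \infty$. Both requirements are therefore met. I would also note that the assumption $q > 0$ is precisely what keeps the distribution non-degenerate: its maximal value $1$ is attained with positive probability $q$, which is exactly the feature driving the conclusion of \Cref{thm:general}(i); were $q=0$, every item would have value $0$ and the optimization defining $S_{n,1}$ would be vacuous.

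With the hypotheses in place, I would apply \Cref{thm:general}(i) with $\DD = \Ber(q)$, which asserts that $\{S_{n,k}\}_{n=1}^\infty$ is $0$-homogeneous for each fixed $k$. Specializing to $k = 1$ yields that $\{S_{n,1}\}_{n=1}^\infty$ is $0$-homogeneous, i.e. $\lim_{n\to\infty} r_t(S_{n,1}) = \frac{1}{m}$ for every $t \in [m]$, which is exactly the statement of the corollary.

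Since the result is obtained purely by substituting $\DD = \Ber(q)$ into an already-established theorem, there is no real obstacle internal to this proof; all of the content resides in \Cref{thm:general}(i). The only point requiring a moment's care is the bookkeeping that the corollary's phrasing ``$S_{n,1}$ is $0$-homogeneous'' should be read, following the set-sequence definition of $\gamma$-homogeneity, as a statement about the sequence $\{S_{n,1}\}_{n=1}^\infty$ rather than about any single finite set. Under that reading the implication is immediate.
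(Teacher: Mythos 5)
Your proposal is correct and matches the paper exactly: the paper itself introduces this corollary with the remark that the case $q_i^{(t)}=q$ ``is handled as a direct corollary of \Cref{thm:general}(i),'' which is precisely your argument of noting that $\Ber(q)$ is a finite discrete distribution with finite mean and specializing to $k=1$. Your additional remark about reading the statement via the set-sequence definition of $\gamma$-homogeneity is the right reading and consistent with how the paper states its asymptotic results.
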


Therefore, if the success probability is the same for all items, optimal solutions are $0$-homogeneous (each item is equally represented) for large $n$, even as the likelihoods $p_t$ vary across type.

\subsubsection{Decaying success probabilities}\label{sec:ber-decay} We now consider a setting in which among items of the same type, the recommender knows that some items have higher success probability. This maps onto settings where the recommender knows which items of a type are most likely to be satisfactory, e.g., some action movies are more commonly liked than are others. Thus, we assume that the recommender has access to items with decaying success probabilities.

\begin{theorem}[Decaying success probabilities]\label{thm:ber-decay}
Suppose that $X_i^{(t)}\simiid \Ber(q_i^{(t)})$ are i.i.d. Bernoulli random variables such that $q_i^{(t)}= c(i+d)^{-\alpha}$ for all $i\ge 1$ and some $\alpha,c,d\ge 0.$ Then the following statements hold.
\begin{enumerate}
    \item[(i)] $\{S_{n,1}\}_{n=1}^\infty$ is $0$-homogeneous for $\alpha < 1.$
    \item[(ii)] $\{S_{n,1}\}_{n=1}^\infty$ is $\frac{1}{1+c}$-homogeneous for $\alpha = 1.$
    \item[(iii)] $\{S_{n,1}\}_{n=1}^\infty$ is $\frac{1}{\alpha}$-homogeneous for $\alpha > 1.$
\end{enumerate}
Additionally,
\begin{enumerate}
    \item[(iv)] $\{S_{n,n}\}_{n=1}^\infty$ is $\frac{1}{\alpha}$-homogeneous for $\alpha\ge 0.$
\end{enumerate}
\end{theorem}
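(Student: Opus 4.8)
The plan is to reduce all four claims to a single separable concave maximization over counts and then read off the representation from the decay rate of one marginal-gain sequence. First I would observe that since the success probabilities $q_i^{(t)} = c(i+d)^{-\alpha}$ are nonincreasing in $i$ and do not depend on $t$, any optimal set uses the $a_t$ highest-probability items of each type, so both objectives depend only on the counts $(a_1,\dots,a_m)$ with $\sum_t a_t = n$. Writing $h(a) = 1 - \prod_{i=1}^{a}(1-q_i)$ (the probability that at least one of the top $a$ items succeeds) and $g(a) = \sum_{i=1}^{a} q_i$ (the expected number of successes), the $S_{n,1}$ objective is $\sum_t p_t h(a_t)$ and the $S_{n,n}$ objective is $\sum_t p_t g(a_t)$. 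Both $h$ and $g$ are discretely concave: their marginal gains $\Delta h(j) = q_j\prod_{i<j}(1-q_i)$ and $\Delta g(j) = q_j$ are nonincreasing in $j$. Since maximizing a separable sum of concave functions subject to $\sum_t a_t = n$ is solved greedily, I can work with one marginal-gain function $\Delta\phi$, equal to $\Delta h$ for parts (i)--(iii) and $\Delta g$ for part (iv).

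Next I would extract the optimality conditions by an exchange argument: at the optimum, moving one item from any type $s$ with $a_s \ge 1$ to any type $t$ cannot help, giving $p_t\,\Delta\phi(a_t+1) \le p_s\,\Delta\phi(a_s)$ for all such $s,t$. A short lemma shows every $a_t \to \infty$ as $n \to \infty$: otherwise some type's marginal gain stays bounded below while another's tends to $0$, contradicting the inequalities. Since $\Delta\phi(j+1)/\Delta\phi(j) \to 1$ in every regime below, the two-sided inequalities then pin down $\Delta\phi(a_t)/\Delta\phi(a_s) \to p_s/p_t$ for all pairs $s,t$, so everything reduces to the decay rate of $\Delta\phi$.

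The remaining work is to determine that decay rate via the asymptotics of $\prod_{i=1}^{a}(1-q_i)$, which I would obtain from $\sum_i \log(1-q_i) = -\sum_i q_i + O\bigl(\sum_i q_i^2\bigr)$ together with the comparison of $\sum_i c(i+d)^{-\alpha}$ to $\int c(x+d)^{-\alpha}\,dx$. This gives three regimes: for $\alpha > 1$ the product converges to a constant $L>0$, so $\Delta h(j) \sim Lc\,j^{-\alpha}$ is regularly varying with index $-\alpha$; for $\alpha = 1$ it decays like $Kj^{-c}$, so $\Delta h(j) \sim cK\,j^{-(1+c)}$, index $-(1+c)$; and for $\alpha < 1$ it decays like $\exp(-\tfrac{c}{1-\alpha}j^{1-\alpha})$, faster than any power. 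For $\Delta g(j) = c(j+d)^{-\alpha}$ (part (iv)) the index is $-\alpha$ directly (the boundary $\alpha = 0$ is degenerate: $\Delta g \equiv c$ is constant, the objective is linear, and all items go to the highest-likelihood type, i.e.\ complete homogeneity, consistent with $\tfrac1\alpha = \infty$). In every power-law case, regular variation converts $\Delta\phi(a_t)/\Delta\phi(a_s)\to p_s/p_t$ into $a_t/a_s \to (p_t/p_s)^{1/\rho}$ with $\rho$ the decay exponent, and dividing through by a reference count gives $r_t \to p_t^{1/\rho}/\sum_s p_s^{1/\rho}$, i.e.\ $\tfrac1\rho$-homogeneity: this yields $\tfrac1\alpha$ for (iii) and (iv) and $\tfrac1{1+c}$ for (ii).

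I expect the main obstacle to be part (i). There $\Delta h$ decays faster than any power law, so the bounded ratio $\Delta h(a_t)/\Delta h(a_s) \to p_s/p_t$ forces the counts to equalize rather than to scale: taking logs, $\tfrac{c}{1-\alpha}\bigl(a_t^{1-\alpha} - a_s^{1-\alpha}\bigr)$ must converge to the constant $\log(p_t/p_s)$, and since $x^{1-\alpha}$ has vanishing increments this is possible only if $a_t/a_s \to 1$, giving $r_t \to 1/m$ and $0$-homogeneity. The delicate point is exactly this ``washing out'' of the likelihoods $p_t$ into lower-order terms; making it rigorous requires quantitative control of the product asymptotics, uniformly over the finitely many types, and care with the $+1$ shifts and floors in the exchange inequalities. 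The other careful steps are justifying that all $a_t \to \infty$ and that the greedy characterization is exact at the integer optimum, but these are standard once the concavity of $h$ and $g$ is in hand.
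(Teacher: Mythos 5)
Your proposal is correct in its essential structure, and after the shared first step---reducing both objectives to a separable sum $\sum_t p_t h(a_t)$ over counts, with $h(a)=1-\prod_{i\le a}(1-q_i)$ for $S_{n,1}$ and $h(a)=\sum_{i\le a}q_i$ for $S_{n,n}$---it diverges from the paper's route. The paper establishes asymptotics for the \emph{value function} $h(a)$ itself (e.g.\ $\log(1-h(a))\sim -\tfrac{c}{1-\alpha}a^{1-\alpha}$ for $\alpha<1$, $1-h(a)\sim Ba^{-c}$ for $\alpha=1$, $A-h(a)\sim Ba^{1-\alpha}$ for $\alpha>1$) and then feeds these into a general-purpose lemma (Lemma~\ref{lem:fennel}) whose proof is a global comparison: extract a convergent subsequence of the proportion vectors and show the candidate proportions strictly beat any other limit point in aggregate objective value, via Lagrange-multiplier facts about $\sum_t p_t x_t^\sigma$. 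You instead work with the \emph{marginal gains} $\Delta h(j)=q_j\prod_{i<j}(1-q_i)$, derive the local exchange condition $p_t\,\Delta h(a_t+1)\le p_s\,\Delta h(a_s)$, and convert $\Delta h(a_t)/\Delta h(a_s)\to p_s/p_t$ into count ratios via the power-law index of $\Delta h$. Both are sound; your route is more elementary and makes the mechanism transparent (optimal solutions equalize likelihood-weighted marginal gains), and your treatment of the $\alpha=1$ product via $\sum_i\log(1-q_i)=-c\log a+O(1)$ handles non-integer $c$ more cleanly than the paper's telescoping identity. What the paper's route buys is reusable machinery (the same lemma drives Theorem~1) and the avoidance of two points you correctly flag as delicate: (a) ruling out subsequences along which $a_t/a_s\to 0$ or $\infty$ before regular variation can be applied, and (b) in part~(i), showing that the error terms in $\log\Delta h$ (the $O(\sum_i q_i^2)$ correction, which diverges like $a^{1-2\alpha}$ when $\alpha<1/2$, plus the polynomial prefactor $q_{a_t}/q_{a_s}$) are dominated by $a_s^{1-\alpha}(\lambda^{1-\alpha}-1)$ whenever $a_t/a_s\to\lambda\neq 1$, so that the likelihoods genuinely wash out. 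Your sketch identifies both issues and the dominance argument for (b) does go through, so I see no gap---only details to be written out.
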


\begin{figure}
\caption{\Cref{thm:ber-decay} considers items decay in quality within each type. Here, we plot optimal diversity as a function of the rate of decay $\alpha$. For realistic rates of decay we ($\alpha < 1$), $S_{n,k}$ is completely diverse for large $n$ while $S_{n,n}$ is less than proportionally diverse.}
\begin{center}
\includegraphics{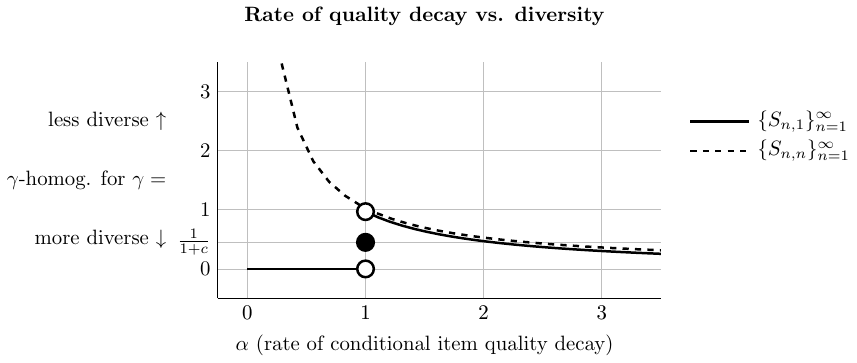}
\end{center}

\end{figure}

When the success probabilities of items have moderate decay ($\alpha<1$), then $0$-homogeneity is maintained in the case $k=1$ (note that $\alpha = 0$ recovers \Cref{cor:ber}). Moreover, for all rates of decay, optimal recommendations reflect at least proportional diversity for large $n$ and $k=1$.

\Cref{thm:ber-decay} also reveals surprising \textit{non-monotonic} behavior. In particular, there is a discontinuity at $\alpha=1$, where homogeneity suddenly increases, but then decreases as $\alpha$ continues to increase. At $\alpha=1,$ the optimal amount of diversity when $\alpha=1$ can range between $0$ and $1$ depending on $c$.

When $k=n$, a larger rate of decay induces more diverse recommendations. Intuitively, when there is a larger rate of decay, the recommender has fewer high-quality options of a given type and is more incentivized to recommend high-quality options of other types. Note that for moderate rates of decay ($\alpha < 1$), $S_{n,n}$ remains less than proportionally diverse for large $n$, unlike $S_{n,1}$.

\subsubsection{Varying success probability across types}\label{sec:ber-varying} We now consider a setting in which the success probability of an item varies across types. This can arise when a users are more picky for some types of items, or when the recommender has more information about items from one type than another.%

\begin{theorem}[Varying success probability across types]\label{thm:ber-varying}
Suppose that for each fixed $t$, $X_i^{(t)} \simiid \Ber(q_t)$ are i.i.d. Bernoulli random variables. Then
\begin{equation}
\lim_{n\rightarrow\infty}r_t(S_{n,1}) \propto \frac{1}{\log \frac{1}{1-q_t}}
\end{equation}
while $S_{n,n}$ contains only items of type $t = \argmax_{t\in [m]} p_tq_t.$
\end{theorem}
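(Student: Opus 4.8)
The plan is to treat the two objectives separately. For $S_{n,n}$ the argument is immediate: by linearity of expectation the optimal set consists of the $n$ items of highest individual expected value, and every item of type $t$ has expected value $p_t\EE[X_i^{(t)}] = p_t q_t$ (the user prefers type $t$ with probability $p_t$, and conditionally the item is $\Ber(q_t)$). Hence the maximizer uses only items of type $t=\argmax_{t\in[m]} p_t q_t$, giving the second claim. The substance is $S_{n,1}$, and here I would first put the objective in separable form. Since conditional item values lie in $\{0,1\}$, the highest realized value among the recommendations equals $1$ exactly when the user's preferred type has at least one successful recommended item. So if the set contains $a_t$ items of type $t$ (with $\sum_t a_t = n$), the objective equals $\sum_{t=1}^m p_t\bigl(1-(1-q_t)^{a_t}\bigr)$, matching the form $\sum_t p_t h_t(a_t)$ from \Cref{sec:intuition} with $h_t(a)=1-(1-q_t)^a$; finding $S_{n,1}$ thus reduces to maximizing this over nonnegative integers summing to $n$.

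Next I would exploit discrete concavity. The marginal gain of the $(a{+}1)$-st item of type $t$ is $h_t(a+1)-h_t(a)=p_t q_t (1-q_t)^{a}$, which is positive and (geometrically) decreasing in $a$. A separable objective with decreasing marginal gains is maximized greedily, so by a standard exchange argument $S_{n,1}$ selects the $n$ largest values in the multiset $\{p_t q_t(1-q_t)^a : t\in[m],\, a\ge 0\}$. To count how many come from each type I would threshold: for a cutoff $\tau>0$, the number of type-$t$ gains that are at least $\tau$ is the number of integers $a\ge 0$ with $p_t q_t(1-q_t)^a\ge\tau$, namely $\frac{\log(p_t q_t)-\log\tau}{\log\frac{1}{1-q_t}} + O(1) = \frac{-\log\tau}{L_t}+O(1)$, writing $L_t:=\log\frac{1}{1-q_t}$.

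Finally I would tune the threshold. Summing the per-type counts, the total number of gains exceeding $\tau$ is $(-\log\tau)\sum_t 1/L_t + O(1)$; choosing $\tau=\tau_n$ so this equals $n$ gives $-\log\tau_n = n/\sum_s(1/L_s) + O(1)$. Substituting back, the number of type-$t$ items selected is $a_t = \frac{n}{L_t\sum_s 1/L_s}+O(1)$, so $r_t(S_{n,1}) = a_t/n \to \frac{1/L_t}{\sum_s 1/L_s}$, i.e. $r_t(S_{n,1})\propto 1/L_t = 1/\log\frac{1}{1-q_t}$, as claimed.

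I expect the main obstacle to be controlling the integrality and lower-order terms rigorously: each type contributes an $O(1)$ additive error from the floor in the threshold count and from the integer jumps of the cumulative count, and I must verify these $O(1)$ (hence $o(n)$ in aggregate) corrections wash out of the limiting ratio $a_t/n$. A clean route is to sandwich the greedy selection between those at two nearby thresholds and let $n\to\infty$ (so $\tau_n\to 0$ and $-\log\tau_n\to\infty$), which makes the $O(1)$ terms negligible against the growing $-\log\tau_n/L_t$. I would also dispose of edge cases: the statement presumes $q_t\in(0,1)$ so that $L_t\in(0,\infty)$; if some $q_t=1$ then $h_t$ saturates after a single item and that type's representation correctly tends to $0$ (consistent with $1/L_t=0$), while any $q_t=0$ type has zero marginal gain and is never selected.
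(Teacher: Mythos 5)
Your proof is correct, but it takes a genuinely different route from the paper's. Both arguments reduce $S_{n,1}$ to maximizing $\sum_t p_t\bigl(1-(1-q_t)^{a_t}\bigr)$ over nonnegative integers summing to $n$, and both dispatch $S_{n,n}$ identically via linearity of expectation. From there the paper relaxes to real $x_t$, applies Lagrange multipliers to equate the partials $-p_t(1-q_t)^{x_t}\log(1-q_t)$, reads off $x_t\propto 1/\log\frac{1}{1-q_t}$ up to additive constants, and then invokes its rounding lemma (\Cref{lem:roundingloss}) to show the integer optimum is within $m$ of the real one. You instead stay discrete: you observe the marginal gains $p_tq_t(1-q_t)^a$ are decreasing in $a$, so an exchange argument shows the optimum is the greedy selection of the $n$ largest marginals, and a threshold count gives $a_t=\frac{n}{L_t\sum_s 1/L_s}+O(1)$ with $L_t=\log\frac{1}{1-q_t}$. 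Your route is more elementary (no continuous relaxation or rounding lemma needed) and in fact yields a slightly stronger, non-asymptotic conclusion with an explicit $O(1)$ additive error, in the spirit of \Cref{prop:uniform}; the paper's route is less self-contained here but reuses machinery shared with its other finite-$n$ results. One cosmetic slip: you call $p_tq_t(1-q_t)^a$ the increment $h_t(a+1)-h_t(a)$, whereas that increment is $q_t(1-q_t)^a$ and the factor $p_t$ enters only through the weight in $\sum_t p_t h_t(a_t)$; this does not affect the argument, since it is the weighted marginal that the greedy selection compares.
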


The surprising high-level takeaway from \Cref{thm:ber-varying} is that, for large $n$, a \textit{smaller} success probability $q_t$ results in \textit{more} representation of type $t$. The less likely an item of a given type is satisfactory, the more that type is recommended. Moreover, note that the amount of representation in this setting is independent of the popularities $p_1,\cdots,p_m.$

This paradox is illustrated in grocery stores, where more space is devoted to ice cream than milk, despite milk being much more popular than ice cream. Here, $p_1$ (the popularity of milk) is higher than $p_2$ (the popularity of ice cream). However, $q_1$ (the likelihood a given milk product satisfies a shopper looking for milk) is also higher than $q_2$ (the likelihood a given ice cream product satisfies a shopper looking for ice cream), since people tend to have more specific tastes for ice cream. Thus, since $q_2$ is smaller than $q_1$ and the grocery store should ``recommend'' many more ice creams than milks, explaining why more space is devoted to ice cream. Intuitively, while more shoppers want milk, these consumers can be satisfied with a small selection of milk; thus, it is more beneficial to devote more space to ice cream, for which shoppers have more specific tastes.
\section{Sketch of Proofs}\label{sec:proof-sketch}
We now sketch the proof of Theorem 1(ii). (The general strategy applies to the rest of Theorem 1, as well as Theorem 2 and Theorem 3.) First, recall the general setup in Theorem 1, where the conditional item values $X_i^{(t)}$ are drawn i.i.d. from a shared distribution $\DD$. In other words, a user prefers type $t\in [m]$ with probability $p_t$ such that the user prefers exactly one type of item, and conditioned on the user preferring type $t$, the value of an item of that type is drawn i.i.d. from $\DD$. We are interested in analyzing, depending on $\DD$, the composition of $S_{n,k}$, the set of $n$ items maximizing the expected sum of the $k$ highest value items in the set. If $S_{n,k}$ contains $a_t^{(n)}$ items of type $t$, we need to analyze
\begin{equation}\label{eq:lim}
\lim_{n\rightarrow \infty} r_t(S_{n,k}) = \lim_{n\rightarrow \infty} \frac{a_t^{(n)}}{n}.
\end{equation}
We first provide an expression for the expected sum of the $k$ highest value items in a set $S$ with $a_t$ items of type $t$. The following definition will be useful.
\begin{definition}
    Define $\mu_\DD(i,a)$ to be the expected value of the $i$-th order statistic\footnote{The $i$-th order statistic of $n$ random variables is the $i$-th smallest of the $n$ realized values.} of $a$ random variables drawn i.i.d. from $\DD$. (Thus, $\mu_\DD(1,a)$ is the expected minimum of $a$ i.i.d. draws from $\DD$ and $\mu_\DD(a,a)$ is the expected maximum.)
\end{definition}
Then conditioned on the user preferring type $t$, the expected sum of the $k$ highest value of items in $S$ is equal to
   $h(a_t) := \sum_{i=1}^{\min\{k, a_t\}} \mu_\DD(a_t-i+1,a_t),$
which follows from the linearity of expectation. Therefore, the expected sum of the $k$ highest value items in $S$ is $\sum_{t=1}^m p_t h(a_t)$. Define $A_n$ to be the set of tuples of non-negative integers whose entries sum to $n$. Then
$(a_1^{(n)}, a_2^{(n)}, \cdots, a_m^{(n)}) = \argmax_{(a_1,\cdots,a_m)\in A_n} \sum_{t=1}^m p_t h(a_t).$

We can then determine the limit in \eqref{eq:lim} given asymptotic information about $h$. In \Cref{lem:fennel} in the appendix, we develop general technical machinery for this task. Below, we state \Cref{lem:fennel}(ii), which can be used to prove Theorem 1(ii).
\begin{mylem}{\ref{lem:fennel}(ii)}
If $h$ is monotonically increasing and there exist constants $A,B>0$ and $\sigma<0$ such that $\lim_{a\rightarrow \infty} \frac{A-h(a)}{Ba^{\sigma}} = 1,$ then
$\lim_{n\rightarrow \infty} \frac{a_t^{(n)}}{n} = \frac{p_t^{\frac{1}{1-\sigma}}}{\sum_{i=1}^m p_i^{\frac{1}{1-\sigma}}}.$
\end{mylem}
Then, considering $\DD$ as in Theorem 1(ii), we can prove the necessary asymptotic result about $h$:
\begin{lemma}\label{lem:bob}
If $\DD$ has support bounded from above by $M$ with pdf $f_\DD$ such that
$\lim_{x\rightarrow M} \frac{f_\DD(x)}{(M-x)^{\beta-1}} = c$
for some $\beta, c>0$, then
$\lim_{a\rightarrow \infty} \frac{Mk - h(a)}{Ba^{-\frac{1}{\beta}}} = 1.$
\end{lemma}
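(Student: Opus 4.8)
The plan is to reduce the claim to the asymptotics of a single expected order statistic and then sum the $k$ top contributions. Since $a\to\infty$, eventually $\min\{k,a\}=k$, so $h(a)=\sum_{i=1}^k\mu_\DD(a-i+1,a)$ and $Mk-h(a)=\sum_{i=1}^k\big(M-\mu_\DD(a-i+1,a)\big)$. It therefore suffices to show that each gap $M-\mu_\DD(a-i+1,a)$ --- the shortfall of the $i$-th largest of $a$ draws from the endpoint $M$ --- is asymptotically proportional to $a^{-1/\beta}$ with an explicit positive constant depending on $i$. I would first reflect about $M$: let $U=M-X\ge 0$, with cdf $G$. Integrating the density hypothesis gives $G(u)\sim\tfrac{c}{\beta}u^{\beta}$ as $u\to 0^+$, and the gap becomes $M-\mu_\DD(a-i+1,a)=\EE[U_{(i:a)}]$, the expected $i$-th smallest of $a$ i.i.d.\ copies of $U$.

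The core is an extreme-value rescaling. Using $\EE[U_{(i:a)}]=\int_0^\infty\PP(U_{(i:a)}>u)\,du$ and the identity $\PP(U_{(i:a)}>u)=\PP\big(\mathrm{Bin}(a,G(u))\le i-1\big)$, I substitute $u=a^{-1/\beta}v$. Because $a\,G(a^{-1/\beta}v)\to\tfrac{c}{\beta}v^{\beta}$, the binomial lower tail converges pointwise to the Poisson tail $\sum_{j=0}^{i-1}\tfrac{(\frac{c}{\beta}v^\beta)^j}{j!}e^{-\frac{c}{\beta}v^\beta}$. Integrating this limit (via $w=\tfrac{c}{\beta}v^\beta$) yields $a^{1/\beta}\EE[U_{(i:a)}]\to(\tfrac{\beta}{c})^{1/\beta}\tfrac{\Gamma(i+1/\beta)}{(i-1)!}$, so summing over $i$ gives the lemma with $B=(\tfrac{\beta}{c})^{1/\beta}\sum_{i=1}^k\tfrac{\Gamma(i+1/\beta)}{(i-1)!}>0$. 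This is exactly the form $Mk-h(a)\sim Ba^{\sigma}$ with $\sigma=-1/\beta<0$ and $A=Mk$ needed to invoke \Cref{lem:fennel}(ii).

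The main obstacle is justifying the interchange of limit and integral, i.e.\ upgrading the pointwise Poisson convergence of the integrand to convergence of the rescaled expectations. I would split the integral at a fixed small $\epsilon>0$ lying inside the neighborhood where the density asymptotics hold. On $\{u>\epsilon\}$ we have $G(u)\ge\delta>0$, so $\PP(U_{(i:a)}>u)\le i\binom{a}{i-1}(1-G(u))^{a-i+1}$; peeling off a single factor $1-G(u)$ and using $\int_0^\infty(1-G)\,du=\EE[U]<\infty$ (finite mean) bounds $\int_\epsilon^\infty\PP(U_{(i:a)}>u)\,du$ by $i\binom{a}{i-1}(1-\delta)^{a-i}\EE[U]$, which remains exponentially small even after multiplying by $a^{1/\beta}$ --- this is the step that uses only the behavior of $f_\DD$ near $M$. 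On $\{u\le\epsilon\}$, after the rescaling $u=a^{-1/\beta}v$ I dominate the integrand uniformly in $a$ via the lower bound $G(u)\ge c'u^\beta$ and $1-G\le e^{-G}$, producing a dominating function of the form $\mathrm{poly}(v)\,e^{-c'v^\beta}$ that is integrable on $[0,\infty)$; dominated convergence then yields the claimed limit. The delicate point is securing this clean, $a$-uniform dominating bound for the binomial lower tail: the estimate $\PP(\mathrm{Bin}(a,p)\le i-1)\le i\binom{a}{i-1}(1-p)^{a-i+1}$ together with $\binom{a}{i-1}=O(a^{i-1})$ and $(1-p)^a\le e^{-ap}$ is what renders the bound integrable after rescaling.
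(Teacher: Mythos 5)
Your proposal follows essentially the same route as the paper's proof: reflect about $M$ so that $Mk-h(a)=\sum_{i=1}^k\EE[U_{(i:a)}]$ for $U=M-X$ with cdf $G(u)\sim\tfrac{c}{\beta}u^\beta$, write each expected low order statistic as $\int_0^\infty\PP(U_{(i:a)}>u)\,du$ via the binomial-tail identity, rescale $u=a^{-1/\beta}v$, and split the integral at a small threshold, killing the far region by an exponential bound (using the finite mean) and handling the near region by dominated convergence. The paper carries this out term-by-term in $j$, writing $\mu_{\DD'}(i,a)=\sum_{j=0}^{i-1}\int_0^\infty\binom{a}{j}G(x)^j(1-G(x))^{a-j}\,dx$, rather than keeping the whole binomial lower tail, and it never computes $B$ explicitly, so your closed form $B=(\beta/c)^{1/\beta}\sum_{i=1}^k\Gamma(i+1/\beta)/(i-1)!$ is a correct bonus. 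One repair is needed in your last step: the crude bound $\PP(\mathrm{Bin}(a,p)\le i-1)\le i\binom{a}{i-1}(1-p)^{a-i+1}$ discards the factors $p^j$, so after rescaling it leaves an $O(a^{i-1})$ prefactor and is \emph{not} an $a$-uniform dominating function on $\{u\le\epsilon\}$; you must retain $\binom{a}{j}p^j\le(ap)^j/j!\le(Cv^\beta)^j/j!$, which requires the upper bound $G(u)\le Cu^\beta$ near $0$ in addition to the lower bound you invoke --- this is exactly the cancellation built into the paper's integrand $\phi_a$. With that adjustment your dominated-convergence step goes through as described.
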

Combining \Cref{lem:bob} with \Cref{lem:fennel}(ii), with $\sigma = -\frac{1}{\beta}$, we show that for $\DD$ as in Theorem 1(ii),
\begin{equation}
    \lim_{n\rightarrow \infty}\frac{a_t^{(n)}}{n} = \frac{p_t^{\frac{\beta}{\beta+1}}}{\sum_{i=1}^m p_i^{\frac{\beta}{\beta+1}}}.
\end{equation}

\section{Conclusion}
We introduce and analyze a stylized model that reconciles the apparent accuracy-diversity trade-off in recommendations. We characterize the diversity of optimal sets both when ``optimality'' captures and does not capture user consumption constraints. Broadly speaking, we show that the former naturally induces diversity while the latter results in homogeneity. Therefore, the apparent accuracy-diversity trade-off is partially due to traditional accuracy metrics not accounting for consumption constraints.

\paragraph{Limitations and future work.} A particular strength of our model is that we were able to derive precise and interpretable characterizations of diversity in many settings. One limitation of our work is that many of our results are asymptotic (i.e., $n\rightarrow \infty$). We expect that it is possible to obtain further results in our model for finite $n$. We gave \Cref{prop:uniform} as one such example, and outline additional possible directions in \Cref{sec:simulations}.

The purposefully stylized nature of our model---in which users prefer only one type of item and items fall under only one type---allows for a particularly interpretable evaluation of diversity where we compare the representation of each type in comparison to the likelihood that the user prefers that type. This simple model is expressive enough to admit a wide range of results. Still, it would be interesting to generalize our findings in more complex models of users and items. In \Cref{sec:generalizations}, we discuss these possibilities further, providing some basic results both when users can prefer multiple types of items and when users and items are represented by embeddings.

Finally, there are possible model characteristics beyond consumption constraints that implicitly reward diversity. As discussed in \Cref{sec:intuition}, diversity in our model is a consequence of diminishing returns when recommending additional items of a type. Diminishing returns may arise from other assumptions, such as decaying user attention (e.g., \cite{Kleinberg2023CalibratedRF}). We point those interested in considering these alternate assumptions to \Cref{lem:fennel} in the appendix, in which we abstract the technical relationship between diminishing returns and diversity.

\paragraph{Broader impacts.} Recommender systems have broad societal consequences, particularly in high-stakes settings like employment. Our work does not model many additional reasons for diversity in these settings, including those based on fairness and equity. We emphasize again that our results are intended to convey \textit{minimal} assumptions that induce diversity and illustrate the tendency of certain objectives to produce homogeneous recommendations. Designers of recommender systems, and researchers in the area, should take a broader view when making decisions regarding diversity.

\newpage
{\small
\bibliographystyle{unsrt}
\bibliography{bib}

\begin{thebibliography}{10}

\bibitem{kunaver2017diversity}
Matev{\v{z}} Kunaver and Toma{\v{z}} Po{\v{z}}rl.
\newblock Diversity in recommender systems--a survey.
\newblock {\em Knowledge-based systems}, 123:154--162, 2017.

\bibitem{Adomavicius2012ImprovingAR}
Gediminas Adomavicius and YoungOk Kwon.
\newblock Improving aggregate recommendation diversity using ranking-based
  techniques.
\newblock {\em IEEE Transactions on Knowledge and Data Engineering},
  24:896--911, 2012.

\bibitem{Raza2021DeepNN}
Shaina Raza and Chen Ding.
\newblock Deep neural network to tradeoff between accuracy and diversity in a
  news recommender system.
\newblock {\em 2021 IEEE International Conference on Big Data (Big Data)},
  pages 5246--5256, 2021.

\bibitem{Isufi2021AccuracydiversityTI}
Elvin Isufi, Matteo Pocchiari, and Alan Hanjalic.
\newblock Accuracy-diversity trade-off in recommender systems via graph
  convolutions.
\newblock {\em Inf. Process. Manag.}, 58:102459, 2021.

\bibitem{Alexandridis2015AccuracyVN}
Georgios Alexandridis, Georgios Siolas, and Andreas Stafylopatis.
\newblock Accuracy versus novelty and diversity in recommender systems: A
  nonuniform random walk approach.
\newblock In {\em Recommendation and Search in Social Networks}, 2015.

\bibitem{Hou2020ATE}
Zhipeng Hou and Jing Liu.
\newblock A two-phase evolutionary algorithm for solving the accuracy-diversity
  dilemma in recommendation.
\newblock {\em 2020 IEEE Congress on Evolutionary Computation (CEC)}, pages
  1--8, 2020.

\bibitem{Alhijawi2023MultifactorRM}
Bushra Alhijawi, Salam Fraihat, and Arafat~A. Awajan.
\newblock Multi-factor ranking method for trading-off accuracy, diversity,
  novelty, and coverage of recommender systems.
\newblock {\em International Journal of Information Technology}, 15:1427 --
  1433, 2023.

\bibitem{Paudel2017FewerFA}
Bibek Paudel, Thilo Haas, and Abraham Bernstein.
\newblock Fewer flops at the top: Accuracy, diversity, and regularization in
  two-class collaborative filtering.
\newblock {\em Proceedings of the Eleventh ACM Conference on Recommender
  Systems}, 2017.

\bibitem{He2022DoesUO}
Xiaoyun He.
\newblock Does utilizing online social relations improve the diversity of
  personalized recommendations?
\newblock {\em Int. J. Strateg. Decis. Sci.}, 13:1--15, 2022.

\bibitem{FernndezTobas2016AccuracyAD}
Ignacio Fern{\'a}ndez-Tob{\'i}as, Paolo Tomeo, Iv{\'a}n Cantador, T.~D. Noia,
  and Eugenio~Di Sciascio.
\newblock Accuracy and diversity in cross-domain recommendations for cold-start
  users with positive-only feedback.
\newblock {\em Proceedings of the 10th ACM Conference on Recommender Systems},
  2016.

\bibitem{Gogna2017BalancingAA}
Anupriya Gogna and Angshul Majumdar.
\newblock Balancing accuracy and diversity in recommendations using matrix
  completion framework.
\newblock {\em Knowl. Based Syst.}, 125:83--95, 2017.

\bibitem{FerrariDacrema2021DemonstratingTE}
Maurizio~Ferrari Dacrema.
\newblock Demonstrating the equivalence of list based and aggregate metrics to
  measure the diversity of recommendations (student abstract).
\newblock {\em Proceedings of the AAAI Conference on Artificial Intelligence},
  2021.

\bibitem{Liu2012SolvingTA}
Jianguo Liu, Kerui Shi, and Qiang Guo.
\newblock Solving the accuracy-diversity dilemma via directed random walks.
\newblock {\em Physical review. E, Statistical, nonlinear, and soft matter
  physics}, 85 1 Pt 2:016118, 2012.

\bibitem{Eskandanian2020UsingSM}
Farzad Eskandanian and Bamshad Mobasher.
\newblock Using stable matching to optimize the balance between accuracy and
  diversity in recommendation.
\newblock {\em Proceedings of the 28th ACM Conference on User Modeling,
  Adaptation and Personalization}, 2020.

\bibitem{Wu2020ServiceRW}
Shengqi Wu, Huaizhen Kou, Chao Lv, Wanli Huang, Lianyong Qi, and Hongya Wang.
\newblock Service recommendation with high accuracy and diversity.
\newblock {\em Wirel. Commun. Mob. Comput.}, 2020:8822992:1--8822992:10, 2020.

\bibitem{Laci2017BeyondAO}
Emanuel Laci{\'c}, Dominik Kowald, Markus Reiter-Haas, Valentin Slawicek, and
  E.~Lex.
\newblock Beyond accuracy optimization: On the value of item embeddings for
  student job recommendations.
\newblock {\em ArXiv}, abs/1711.07762, 2017.

\bibitem{Javari2015APM}
Amin Javari and Mahdi Jalili.
\newblock A probabilistic model to resolve diversity–accuracy challenge of
  recommendation systems.
\newblock {\em Knowledge and Information Systems}, 44:609--627, 2015.

\bibitem{Kleinberg2023CalibratedRF}
Jon~M. Kleinberg, Emily Ryu, and {\'E}va Tardos.
\newblock Calibrated recommendations for users with decaying attention.
\newblock {\em ArXiv}, abs/2302.03239, 2023.

\bibitem{Seymen2021ACO}
Sinan Seymen, Himan Abdollahpouri, and Edward~C. Malthouse.
\newblock A constrained optimization approach for calibrated recommendations.
\newblock {\em Proceedings of the 15th ACM Conference on Recommender Systems},
  2021.

\bibitem{Abdollahpouri2023CalibratedRA}
Himan Abdollahpouri, Zahra Nazari, Alex Gain, Clay Gibson, Maria Dimakopoulou,
  J.~Anderton, Benjamin Carterette, Mounia Lalmas, and Tony Jebara.
\newblock Calibrated recommendations as a minimum-cost flow problem.
\newblock {\em Proceedings of the Sixteenth ACM International Conference on Web
  Search and Data Mining}, 2023.

\bibitem{Zhang2008AvoidingMI}
Mi~Zhang and Neil~J. Hurley.
\newblock Avoiding monotony: improving the diversity of recommendation lists.
\newblock In {\em ACM Conference on Recommender Systems}, 2008.

\bibitem{Ashkan2015OptimalGD}
Azin Ashkan, Branislav Kveton, Shlomo Berkovsky, and Zheng Wen.
\newblock Optimal greedy diversity for recommendation.
\newblock In {\em International Joint Conference on Artificial Intelligence},
  2015.

\bibitem{Zhou2008SolvingTA}
Tao Zhou, Zoltan Kuscsik, Jianguo Liu, Matus Medo, Joseph~R. Wakeling, and
  Yi-Cheng Zhang.
\newblock Solving the apparent diversity-accuracy dilemma of recommender
  systems.
\newblock {\em Proceedings of the National Academy of Sciences}, 107:4511 --
  4515, 2008.

\bibitem{facebook}
Ivan Medvedev, Taylor Gordong, and Haotian Wu.
\newblock Powered by ai: Instagram’s explore recommender system.
\newblock 2019.

\bibitem{ebay}
David Goldberg.
\newblock Diversity in search.
\newblock 2014.

\bibitem{10.1145/3366423.3380281}
Ashton Anderson, Lucas Maystre, Ian Anderson, Rishabh Mehrotra, and Mounia
  Lalmas.
\newblock Algorithmic effects on the diversity of consumption on spotify.
\newblock In {\em Proceedings of The Web Conference 2020}, WWW '20, page
  2155–2165, New York, NY, USA, 2020. Association for Computing Machinery.

\bibitem{S2022DiversityVR}
Jo{\~a}o S{\'a}, Vanessa~Queiroz Marinho, Ana~Rita Magalh{\~a}es, Tiago
  Lacerda, and Diogo Gonçalves.
\newblock Diversity vs relevance: A practical multi-objective study in luxury
  fashion recommendations.
\newblock {\em Proceedings of the 45th International ACM SIGIR Conference on
  Research and Development in Information Retrieval}, 2022.

\bibitem{Kim2021CustomerSO}
Jae kyeong Kim, Il~Young Choi, and Qinglong Li.
\newblock Customer satisfaction of recommender system: Examining accuracy and
  diversity in several types of recommendation approaches.
\newblock {\em Sustainability}, 13:6165, 2021.

\bibitem{Park2013FromAT}
Sung-Hyuk Park and Sang~Pil Han.
\newblock From accuracy to diversity in product recommendations: Relationship
  between diversity and customer retention.
\newblock {\em International Journal of Electronic Commerce}, 18:51 -- 72,
  2013.

\bibitem{steck2018calibrated}
Harald Steck.
\newblock Calibrated recommendations.
\newblock In {\em Proceedings of the 12th ACM conference on recommender
  systems}, pages 154--162, 2018.

\bibitem{Ghorbanzadeh2014AUP}
Mo~Ghorbanzadeh, Ahmed Abdel-Hadi, and Thomas~Charles Clancy.
\newblock A utility proportional fairness radio resource block allocation in
  cellular networks.
\newblock {\em 2015 International Conference on Computing, Networking and
  Communications (ICNC)}, pages 499--504, 2014.

\bibitem{easley2010networks}
David Easley and Jon Kleinberg.
\newblock {\em Networks, crowds, and markets: Reasoning about a highly
  connected world}.
\newblock Cambridge university press, 2010.

\bibitem{10.1145/1498759.1498766}
Rakesh Agrawal, Sreenivas Gollapudi, Alan Halverson, and Samuel Ieong.
\newblock Diversifying search results.
\newblock In {\em Proceedings of the Second ACM International Conference on Web
  Search and Data Mining}, WSDM '09, page 5–14, New York, NY, USA, 2009.
  Association for Computing Machinery.

\bibitem{Radlinski2008LearningDR}
Filip Radlinski, Robert~D. Kleinberg, and Thorsten Joachims.
\newblock Learning diverse rankings with multi-armed bandits.
\newblock In {\em International Conference on Machine Learning}, 2008.

\bibitem{Parapar2021TowardsUM}
Javier Parapar and Filip Radlinski.
\newblock Towards unified metrics for accuracy and diversity for recommender
  systems.
\newblock {\em Proceedings of the 15th ACM Conference on Recommender Systems},
  2021.

\bibitem{page2008difference}
Scott Page.
\newblock {\em The Difference}.
\newblock Princeton University Press, 2008.

\bibitem{hong2004groups}
Lu~Hong and Scott~E Page.
\newblock Groups of diverse problem solvers can outperform groups of
  high-ability problem solvers.
\newblock {\em Proceedings of the National Academy of Sciences},
  101(46):16385--16389, 2004.

\bibitem{kleinberg2018team}
Jon Kleinberg and Maithra Raghu.
\newblock Team performance with test scores.
\newblock {\em ACM Transactions on Economics and Computation (TEAC)},
  6(3-4):1--26, 2018.

\bibitem{kleinberg2018selection}
Jon Kleinberg and Manish Raghavan.
\newblock Selection problems in the presence of implicit bias.
\newblock {\em arXiv preprint arXiv:1801.03533}, 2018.

\bibitem{guo2021stereotyping}
Wenshuo Guo, Karl Krauth, Michael Jordan, and Nikhil Garg.
\newblock The stereotyping problem in collaboratively filtered recommender
  systems.
\newblock In {\em Equity and Access in Algorithms, Mechanisms, and
  Optimization}, pages 1--10. 2021.

\bibitem{kok2007demand}
A~G{\"u}rhan K{\"o}k and Marshall~L Fisher.
\newblock Demand estimation and assortment optimization under substitution:
  Methodology and application.
\newblock {\em Operations Research}, 55(6):1001--1021, 2007.

\bibitem{rusmevichientong2010dynamic}
Paat Rusmevichientong, Zuo-Jun~Max Shen, and David~B Shmoys.
\newblock Dynamic assortment optimization with a multinomial logit choice model
  and capacity constraint.
\newblock {\em Operations research}, 58(6):1666--1680, 2010.

\bibitem{Davis2014AssortmentOU}
James~Mario Davis, Guillermo Gallego, and Huseyin Topaloglu.
\newblock Assortment optimization under variants of the nested logit model.
\newblock {\em Oper. Res.}, 62:250--273, 2014.

\bibitem{jagabathula2014assortment}
Srikanth Jagabathula.
\newblock Assortment optimization under general choice.
\newblock {\em Available at SSRN 2512831}, 2014.

\bibitem{rusmevichientong2014assortment}
Paat Rusmevichientong, David Shmoys, Chaoxu Tong, and Huseyin Topaloglu.
\newblock Assortment optimization under the multinomial logit model with random
  choice parameters.
\newblock {\em Production and Operations Management}, 23(11):2023--2039, 2014.

\bibitem{gallego2014constrained}
Guillermo Gallego and Huseyin Topaloglu.
\newblock Constrained assortment optimization for the nested logit model.
\newblock {\em Management Science}, 60(10):2583--2601, 2014.

\bibitem{bertsimas2015data}
Dimitris Bertsimas and Velibor~V Mi{\v{s}}ic.
\newblock Data-driven assortment optimization.
\newblock {\em Management Science}, 1:1--35, 2015.

\bibitem{el2021joint}
Omar El~Housni, Omar Mouchtaki, Guillermo Gallego, Vineet Goyal, Salal Humair,
  Sangjo Kim, Ali Sadighian, and Jingchen Wu.
\newblock Joint assortment and inventory planning for heavy tailed demand.
\newblock {\em Columbia Business School Research Paper Forthcoming}, 2021.

\bibitem{chen2022fair}
Qinyi Chen, Negin Golrezaei, Fransisca Susan, and Edy Baskoro.
\newblock Fair assortment planning.
\newblock {\em arXiv preprint arXiv:2208.07341}, 2022.

\bibitem{el2022joint}
Omar El~Housni and Huseyin Topaloglu.
\newblock Joint assortment optimization and customization under a mixture of
  multinomial logit models: On the value of personalized assortments.
\newblock {\em Operations Research}, 2022.

\bibitem{carbonell1998use}
Jaime Carbonell and Jade Goldstein.
\newblock The use of mmr, diversity-based reranking for reordering documents
  and producing summaries.
\newblock In {\em Proceedings of the 21st annual international ACM SIGIR
  conference on Research and development in information retrieval}, pages
  335--336, 1998.

\bibitem{gimpel2013systematic}
Kevin Gimpel, Dhruv Batra, Chris Dyer, and Gregory Shakhnarovich.
\newblock A systematic exploration of diversity in machine translation.
\newblock In {\em Proceedings of the 2013 Conference on Empirical Methods in
  Natural Language Processing}, pages 1100--1111, 2013.

\bibitem{brown2022diversified}
William Brown and Arpit Agarwal.
\newblock Diversified recommendations for agents with adaptive preferences.
\newblock In Alice~H. Oh, Alekh Agarwal, Danielle Belgrave, and Kyunghyun Cho,
  editors, {\em Advances in Neural Information Processing Systems}, 2022.

\bibitem{patro2022fair}
Gourab~K Patro, Lorenzo Porcaro, Laura Mitchell, Qiuyue Zhang, Meike Zehlike,
  and Nikhil Garg.
\newblock Fair ranking: a critical review, challenges, and future directions.
\newblock In {\em 2022 ACM Conference on Fairness, Accountability, and
  Transparency}, pages 1929--1942, 2022.

\bibitem{zehlike2021fairness}
Meike Zehlike, Ke~Yang, and Julia Stoyanovich.
\newblock Fairness in ranking: A survey.
\newblock {\em arXiv preprint arXiv:2103.14000}, 2021.

\bibitem{rodolfa2021empirical}
Kit~T Rodolfa, Hemank Lamba, and Rayid Ghani.
\newblock Empirical observation of negligible fairness--accuracy trade-offs in
  machine learning for public policy.
\newblock {\em Nature Machine Intelligence}, 3(10):896--904, 2021.

\end{thebibliography}
}

\newpage
\pagebreak
\appendix
\section{Additional related work}\label{sec:additional-related-work}

Our work sits at the intersection of two broad sets of work. On the one hand are arguments that diversity is key to achieving efficiency. On the other are those that cast diversity as in conflict with efficiency or accuracy, but perhaps that diversity should nevertheless be pursued as an axiomatic good. 

Broadly, our work seeks to understand this tension by sharply characterizing the \textit{amount} of diversity in efficient solutions, as a function of key setting characteristics: user utilities and consumption constraints, and uncertainty in the item quality distribution. In particular, our results characterize \textit{in what settings} the intuition regarding diversity being efficient holds, and in what settings they may be in conflict.

\paragraph{The (efficiency) benefits of diversity} The importance of diversity for efficiency is an old idea present across many fields; \citet{page2008difference} synthesizes the conceptual and empirical arguments in support of this principle. \citet{hong2004groups} develop a model in which a randomly selected team of problem solvers outperforms a team of the individually best-performing agents, due to diversity in problem solving perspective (\citet{kleinberg2018team} show that, in some settings, there exist \textit{tests} under which selecting the best-performing agents again becomes optimal). \citet{kleinberg2018selection} show that constraints promoting diversity can improve efficiency when they work to counteract a decision-maker's biases. \citet{10.1145/1498759.1498766} develop an algorithm to diversify search results, to minimize the risk of user dissatisfaction. We are particularly influenced by the work of \citet{steck2018calibrated}, who presents the intuition that recommendations should be \textit{calibrated}: ``When a user has watched, say, 70 romance movies and 30 action movies, then it is reasonable to expect the personalized list of recommended movies to be comprised of about 70\% romance and 30\% action movies as well.'' \citet{guo2021stereotyping} show that collaborative filtering-based recommendations may not be able to effectively show users such a diverse set of content, harming efficiency.

More broadly, researchers studying various combinatorial optimization problems may find it obvious that homogeneous solutions can be sub-optimal; indeed, in classical problems like \textit{maximum coverage}, redundancy is undesirable. 

Our work particularly is intimately connected to the large literature on assortment optimization \cite{kok2007demand,rusmevichientong2010dynamic,Davis2014AssortmentOU,jagabathula2014assortment,rusmevichientong2014assortment,gallego2014constrained,bertsimas2015data,el2021joint,chen2022fair,el2022joint}. That literature also considers consumption-constrained consumer item selections based on an intermediary's recommendations (e.g., that customers picks one item  according to a multinomial choice model). The literature primarily devises \textit{approximation algorithms} to find the optimal recommendation (``assortment'') as a function of the consumer's choice model, platform objective, and the item distribution. In other words, an implicit premise of this literature is that the naive approach of presenting the items with highest individual expected values is sub-optimal, i.e., that optimal assortments are not completely `homogeneous.' On the other hand, optimal assortments are not necessarily diverse; roughly speaking, the results of \citet{el2022joint} imply that a standard assortment approach (Mixed MNL) might produce solutions that are not ``diverse'' enough to satisfy multiple customer types, and so there is benefit to personalize to each type.\footnote{We thank the authors for highlighting this connection to us.}\footnote{Furthermore, as \citet{chen2022fair} recently characterize, standard assortment optimization approaches may be ``unfair'' to items in other ways.} Our work contributes to this literature by (a) examining the implicit premise that optimal assortments are not homogeneous (i.e., when is the naive\footnote{Note that \textit{naive}
is much simpler than the \textit{greedy} approach studied in the literature, which picks items iteratively potentially as a function of previous items picked.} approach sufficient?); and (b) showing the characteristics under which optimal assortments are not diverse. 

\paragraph{Diversity and fairness as a contrast to efficiency and accuracy} On the other hand, many works start with the premise that---although diversity may conflict with efficiency or accuracy---it is an axiomatic good that should be pursued. For example, diversity is often considered to be inherently desirable from a fairness perspective and user satisfaction perspective. As a result, there is a wide body of work devoted to optimizing for various metrics of diversity. A common approach (taken, for example, in \citet{carbonell1998use} and \citet{gimpel2013systematic}) is to consider an objective function that balances a weighted measure of ``accuracy'' or ``relevance'' with a measure of diversity. More recently, \citet{brown2022diversified} consider set recommendation for an agent with adaptive preferences, to ensure that consumption over time is diverse. Numerous metrics for diversity have been proposed---we refer the reader to \citet{kunaver2017diversity} for a survey. Similarly, the fair ranking and recommendation literature (see \citet{patro2022fair} and \citet{zehlike2021fairness} for recent surveys) considers metrics and methods for fairness in such problems. On the other hand, empirical work has demonstrated that such tradeoffs may be small in practice \cite{rodolfa2021empirical}. Such formulations imply that there is a tension between diversity and measures of accuracy.
\section{Computational results for finite $n$ and general $k$}\label{sec:simulations}

\begin{figure}
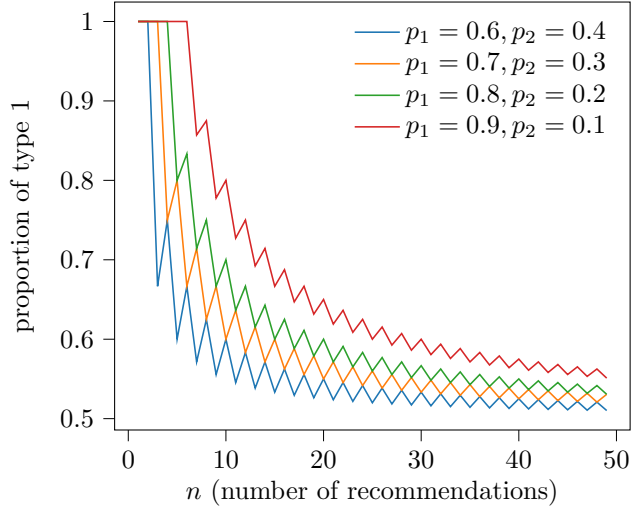

\caption{Optimal recommendations when there are two types (say, comedy movies and action movies) and $X_i^{(t)}$ are i.i.d. Bernoulli random variables with success probability $0.4$. For large $n$, optimal recommendations have equal representation. Here, we plot the optimal representation of type 1 depending on the popularities $p_1$ and $p_2$ for small $n$. When popularities are very skewed (e.g., the user prefers comedy with probability 0.9), then the recommendations are relatively homogeneous for small $n$.}
\label{fig:chart}
\begin{center}
\include{figures/small-n-chart.tex}
\end{center}
\end{figure}

\begin{figure}
\caption{Computational estimates for the diversity of optimal sets across a range of beta and Pareto distributions. Here, dark blue corresponds to equal representation ($\gamma=0$), grey corresponds to proportional representation ($\gamma=1$), and brown corresponds to homogeneous ($\gamma=\infty$). Observe that tighter consumption constraints (smaller $k$) correspond to more diversity. Notice in particular that when $\beta=1$ (corresponding to the uniform distribution), the optimal solution is $\frac12$-homogeneous for all $k\le cn$ for $c\approx \sqrt{0.3}/(\sqrt{0.3}+\sqrt{0.7})\approx 0.4,$ matching our theoretical result in \Cref{prop:uniform}.}
\label{fig:heatmap}
\begin{center}
\includegraphics[width=11cm]{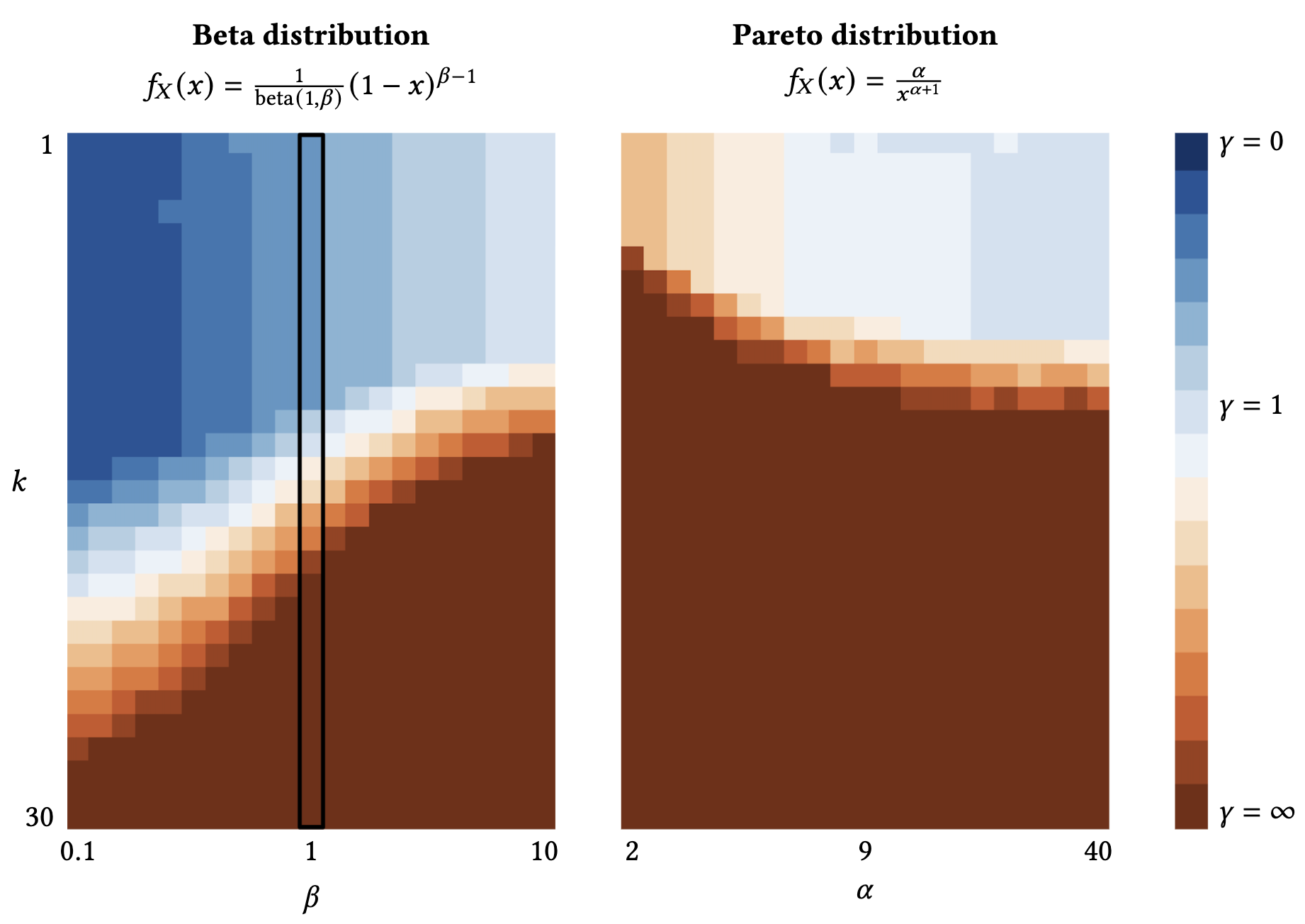}
\end{center}
\end{figure}

One limitation of our general theoretical approach is that it does not directly yield results for a fixed number of recommended items $n$ or for intermediate values of $k$. In particular, a finer-grained theoretical analysis requires closed-form solutions for the order statistics of general distributions, which are often not available. We were able to overcome this in \Cref{prop:uniform}, where we leveraged the simplicity of uniform order statistics.

Here, we take a computational approach to determine the optimal amount of diversity for small $n$ and arbitrary $k.$ \Cref{fig:heatmap} shows the results of a computational approach to calculate the diversity of the optimal set for finite $n$ and general $k$ for a spectrum of Beta and Pareto distributions. For each particular distribution $\DD$, we take $n=30, m=2, p_1=0.7, p_2=0.3$ and find the optimal choice of $a_1$ and $a_2$ by calculating
\begin{equation}
\sum_{t=1}^2 p_t \sum_{i=1}^{\min\{a_t,k\}} \mu_\DD(a_t-i+1,a_t)
\end{equation}
for each possible pair $(a_1,a_2)$ summing to $30$. To do this, we obtained empirical estimates of $\mu_\DD(a-i+1,a)$ by computationally sampling $10^6$ sets of $a$ points from $\DD$. Our computational results corroborate \Cref{thm:general}, showing how the optimal amount of diversity can range from equal representation to homogeneity.

We also compute how diverse optimal solutions are in the case of a Bernoulli distribution as $n$, the number of recommendations, varies. In \Cref{fig:chart}, we consider the distribution $\Ber(0.4)$ and different pairs of popularities $(p_1,p_2).$ We find that as $n$ grows, optimal solutions indeed approach having equal representation, corroborating \Cref{thm:general}(i). For small $n$, optimal solutions tend to be more homogeneous---especially when the popularities are heavily skewed. Indeed, this aligns with intuition: if a user almost exclusively prefers romance, a recommender should still recommend mostly romance movies (even if as the number of recommendations grows very large, the recommendations should become more diverse according to \Cref{thm:general}(i)).
\section{Generalizations}\label{sec:generalizations}
In this section, we consider two directions in which our model can be extended. First, we consider a model in which users can prefer multiple item types at a time. Second, we consider a model in which users and items are both represented by vector embeddings, a common paradigm in recommender systems.

\subsection{Preference for multiple item types}

In this section, we consider an extension of our model in which a user can prefer multiple types of items. In particular, we focus on the case when $m=2$ and when the user prefers only type $1$ with probability $p_1$, only type $2$ with probability $p_2$, and both types with probability $p_{12}.$ We also assume that an item of type $i$ satisfies a user that prefers type $i$ independently with probability $q$. We consider when $k=1$, when the user can only use one item. Then, we would like to minimize the probability of having no satisfying items:
\begin{equation}
    p_1(1-q)^{a_1} + p_2(1-q)^{a_2} + p_{12}(1-q)^n,
\end{equation}
where $a_1 + a_2 = n$.
This is equivalent to minimizing
\begin{equation}
    p_1(1-q)^{a_1} + p_2(1-q)^{a_2},
\end{equation}
which is mathematically equivalent to \Cref{thm:general}(i). Thus, we have that as $n$ grows large, both item types are represented equally, irrespective of $p_1, p_2,$ and $p_{12}$---in other words, allowing for the possibility that the user prefers multiple item types does not change the result in this setting.

The setup here can be naturally generalized to the case in which there are more than two types, and where item values can come from distributions other than Bernoulli. We leave these generalizations to future work.

\subsection{User and item embeddings} In this section, we consider the setting in which users and items each correspond to vector embeddings, and where the likelihood a user $u$ is satisfied by an item $v$ is a function of their cosine distance. Suppose that a user has preference $u$ drawn according to a probability measure $\mu$ supported on the unit $d-$dimensional sphere $S_d\subset \RR^d.$ Further suppose that for a movie $v\in S_d,$ the probability that a user is not satisfied by the movie is $p(u,v) = q(\norm{u-v})$, a function of the cosine distance between $u$ and $v$.

If the goal is to maximize the probability a user is satisfied by at least one item, what is the optimal choice of items to recommend as the number of recommendations we can make grows large? Here, we will show that items should be chosen ``uniformly'' from $S_d$---a result that may seem surprising as it is independent of the distribution of user preferences, and which can be viewed as an analog of \Cref{thm:general}(i).

To make things precise and tractable, rather than consider the recommendation of individual items, we will focus on the recommendation of a ``distribution of items.'' This is analogous to making a continuous relaxation of the discrete item recommendation problem. For a set of items $V = \{v_1,\cdots,v_n\},$ the probability a user with preference drawn according to a probability measure $\mu$ does not like any item in $V$ is
\begin{equation}
    \int_{S_d} \mu(u) \prod_{v\in V} p(u,v) \,du = \int_{S_d} \mu(u) \exp \sum_{v\in V} \log p(u,v) \,du.
\end{equation}
Then for a density function $\alpha: S_d \rightarrow [0,\infty)$, we can consider the expression
\begin{equation}
    \int_{S_d} \mu(u) \exp \left[n\int_{S_d} \alpha(v) \log p(u,v)\,dv\right] \,du,
\end{equation}
which can be thought of splitting the $n$ item recommendations continuously across $S_d$ according to the density function $\alpha$. We can think of $\alpha$ as representing some ``profile'' of items to show. Going forward, we consider optimal distributions rather than optimal discrete sets of items.

\begin{proposition}
Consider a non-constant function $p(u,v): S_d \times S_d \rightarrow (0,1]$, interpreted as the probability that item $v$ does not satisfy a user with preference $u$, such that $p(u,v) = q(\norm{u-v})$ can be expressed as a function of the cosine distance between $u$ and $v$.
Then let $\mu$ be a probability measure with support $S_d$, representing the distribution of user preferences. Then for a probability measure $\alpha$ on $S_d$, define
\begin{equation}
    \Gamma(\alpha) = \lim_{n\rightarrow \infty}\frac{1}{n}\log \int_{S_d} \mu(u) \exp \left[n \int_{S_d} \alpha(v)\log p(u,v)\, dv\right]\,du.
\end{equation}
Then
\begin{equation}
    \Gamma(\pi) \in \inf_{\alpha} \Gamma(\alpha),
\end{equation}
where $\pi$ is the uniform probability measure over $S_d$.
\end{proposition}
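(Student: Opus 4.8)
The plan is to collapse the variational problem to a minimax via a Laplace/Varadhan argument and then exploit the rotational symmetry of the sphere. Write $g_\alpha(u) := \int_{S_d} \alpha(v)\log p(u,v)\,dv$ for the exponent inside the outer integral. Since $p$ takes values in $(0,1]$ and (by continuity on the compact sphere) is bounded away from $0$, $\log p$ is bounded, so $g_\alpha$ is continuous in $u$ and satisfies $g_\alpha(u)\le 0$.

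First I would show that $\Gamma(\alpha)=\max_{u\in S_d} g_\alpha(u)$. For the upper bound, bound the outer integrand pointwise by $e^{n\max_u g_\alpha(u)}$ and use that $\mu$ is a probability measure, giving $\tfrac1n\log(\cdot)\le \max_u g_\alpha(u)$. For the matching lower bound, fix a maximizer $u^\star$; by continuity $g_\alpha>\max_u g_\alpha-\epsilon$ on some neighborhood $N$ of $u^\star$, and since $\mu$ has full support, $\mu(N)>0$, so the outer integral is at least $\mu(N)\,e^{n(\max_u g_\alpha-\epsilon)}$. Taking $\tfrac1n\log$, letting $n\to\infty$ and then $\epsilon\to 0$ yields the claim. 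Thus minimizing $\Gamma$ is exactly the minimax problem $\inf_\alpha \max_u g_\alpha(u)$.

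Next I would evaluate $\Gamma(\pi)$ by symmetry. Because $p(u,v)=q(\norm{u-v})$ depends only on the distance and $\pi$ is $O(d)$-invariant, for any rotation $R$ the substitution $v\mapsto Rv$ gives $g_\pi(Ru)=\int_{S_d}\log q(\norm{Ru-Rv})\,d\pi(v)=g_\pi(u)$; since $O(d)$ acts transitively on $S_d$, $g_\pi$ is a constant $C:=\int_{S_d}\log p(u_0,v)\,d\pi(v)$ for any fixed $u_0$, and hence $\Gamma(\pi)=\max_u g_\pi(u)=C$. The final step is an averaging inequality: by Fubini, $\int_{S_d} g_\alpha(u)\,d\pi(u)=\int_{S_d}\alpha(v)\bigl[\int_{S_d}\log p(u,v)\,d\pi(u)\bigr]\,dv$, and the same symmetry argument (now in the $u$-variable) shows the inner integral equals the constant $C$ for every $v$, so the double integral equals $C\int_{S_d}\alpha(v)\,dv=C$ because $\alpha$ is a probability measure. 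Since the maximum of a function dominates its $\pi$-average, $\Gamma(\alpha)=\max_u g_\alpha(u)\ge \int_{S_d} g_\alpha\,d\pi=C=\Gamma(\pi)$ for every $\alpha$, so $\pi$ attains the infimum, i.e.\ $\Gamma(\pi)=\inf_\alpha\Gamma(\alpha)$.

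The main obstacle is the first step: rigorously justifying that the normalized log-integral converges to the pointwise supremum of the exponent (Varadhan's lemma). This requires $g_\alpha$ continuous and $\mu$ of full support, both of which hold under the stated hypotheses ($q$ continuous, $p\in(0,1]$ bounded away from $0$ on the compact sphere, $\operatorname{supp}(\mu)=S_d$), but care is needed because $\alpha$ ranges over \emph{all} probability measures, so one must confirm the neighborhood estimate is uniform in the relevant quantities. Everything after this reduction is a short symmetry-plus-averaging argument in which the only property of $p$ used is that it depends on $\norm{u-v}$ alone.
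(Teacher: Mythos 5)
Your proposal is correct and follows essentially the same route as the paper's proof: reduce $\Gamma(\alpha)$ to $\sup_u \int \alpha(v)\log p(u,v)\,dv$ via the Laplace principle, then use Fubini plus rotational symmetry to show the $\pi$-average of this exponent is a constant independent of $\alpha$, which the uniform measure attains with equality. The only difference is that you spell out the Laplace/full-support step that the paper merely cites, and you normalize the average by $\pi$ rather than by surface measure, which is immaterial.
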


\begin{proof}
Let
\begin{equation}
    \rho(u;\alpha) = \int_{S_d} \alpha(v)\log p(u,v)\,dv.
\end{equation}
Then
\begin{equation}
    \Gamma(\alpha) = \lim_{n\rightarrow\infty} \frac{1}{n} \log \int_{S_d} e^{n\rho(u;\alpha)} \mu(u)\, du = \lim_{n\rightarrow\infty} \frac{1}{n} \log \int_{S_d} e^{n\rho(u;\alpha)}\, du = \sup_{u\in S_d} \rho(u;\alpha),
\end{equation}
where the final equality follows from the Laplace principle from large deviations theory. Now note that
\begin{align}
    \int_{S_d} \rho(u;\alpha)\,du &= \int_{S_d} \int_{S_d} \alpha(v) \log p(u,v) \,dv\,du\\
    &= \int_{S_d} \alpha(v) \int_{S_d} \log p(u,v) \,du\,dv \\
    &= \int_{S_d} \alpha(v)C\, dv\\
    &= C,
\end{align}
where the second to last equality follows from the observation that
\begin{equation}
    \int_{S_d} \log p(u,v)\, du = \int_{S_d} \log q(\norm{u-v})\, du = C
\end{equation}
for a constant $C$ independent of $v$.
Therefore, there exists $u$ such that $\rho(u;\alpha) \ge \frac{C}{m(A)},$ so $\sup_{u\in A}\rho(u;\alpha) \ge \frac{C}{m(A)}.$ Now note that when $\pi$ is the uniform probability measure over $A$, we have
\begin{equation}
    \rho(u;\pi) = \int_{S_d} \pi(v) \log p(u,v) \,dv = \frac{C}{m(A)}
\end{equation}
for all $u$, so $\sup_{u\in A}\rho(u;\pi) = \frac{C}{m(A)}.$ So $\Gamma(\pi) = \inf_\alpha \Gamma(\alpha).$
\end{proof}
\newpage
\section{Proofs}\label{sec:proofs}
In this section, we prove each of our results. In \Cref{sec:central-lemma}, we prove a technical lemma that plays a central role in the proofs of each part of \Cref{thm:general} and \Cref{thm:ber-decay}, which we prove in \Cref{sec:proof-thm1} and \Cref{sec:proof-thm2} respectively. In \Cref{sec:proof-rounding-lemma}, we prove another useful lemma---showing that integer solutions are close to real solutions for a class of optimization problems---which will be helpful in the proofs of \Cref{prop:uniform} and \Cref{thm:ber-varying}, which we give in \Cref{sec:proof-of-uniform} and \Cref{sec:proof-thm3} respectively.

\subsection{A central lemma}\label{sec:central-lemma}
Let $\ZZ$ denote the set of non-negative integers and $A_n\subset \ZZ^m$ denote the set of $m$-tuples whose elements sum to $n$. We will say that a function $h: \ZZ\rightarrow \RR$ is \vocab{strictly concave} if $h(a+1)-h(a) < h(a)-h(a-1)$ for all $a$.

\begin{lemma}\label{lem:fennel}
Consider an integer $m$ and $p_1,p_2,\cdots,p_m\ge 0$. Let $h:\ZZ \rightarrow \RR$ be monotonically increasing. For each positive integer $n$, choose $(a_1^{(n)},\cdots,a_m^{(n)})$ such that
 \begin{equation}
        (a_1^{(n)},\cdots,a_m^{(n)}) \in \argmax_{(a_1,\cdots,a_m)\in A_n} \sum_{t=1}^m p_t h(a_t),
    \end{equation}
and define 
\begin{equation}
    r_t^{(n)}:= \frac{a_t^{(n)}}{n}.
\end{equation}
Then the following statements hold.
\begin{enumerate}
        \item[(i)] Suppose there exist constants $A,B>0$ and $\sigma<0$ such that
        \begin{equation}\lim_{a\rightarrow \infty} \frac{\log(A - h(a))}{Ba^\sigma} = 1.\end{equation}
        Then
        \begin{equation}
            \lim_{n\rightarrow \infty} r_t^{(n)} = \frac{1}{m}.
        \end{equation}
        \item[(ii)] Suppose there exist constants $A,B>0$ and $\sigma<0$ such that
        \begin{equation}
            \lim_{a\rightarrow \infty} \frac{A-h(a)}{Ba^{\sigma}} = 1.
        \end{equation}
        Then
        \begin{equation}
            \lim_{n\rightarrow \infty} r_t^{(n)} = \frac{p_t^{\frac{1}{1-\sigma}}}{\sum_{i=1}^m p_i^{\frac{1}{1-\sigma}}}.
        \end{equation}
        \item[(iii)] Suppose $h$ is strictly concave, and that there exist constants $B,C>0$ such that
        \begin{equation}
        \lim_{a\rightarrow\infty} h(a) - B\log a - C = 0.
        \end{equation}
        Then
        \begin{equation}
            \lim_{n\rightarrow \infty} r_t^{(n)} = p_t.
        \end{equation}
        \item[(iv)] Suppose $h$ is strictly concave, and that there exist constants $B>0$ and $0 < \sigma < 1$ such that
        \begin{equation}
            \lim_{a\rightarrow \infty} \frac{h(a)}{Ba^{\sigma}} = 1.
        \end{equation}
        Then
        \begin{equation}
            \lim_{n\rightarrow \infty} r_t^{(n)} = \frac{p_t^{\frac{1}{1-\sigma}}}{\sum_{i=1}^m p_i^{\frac{1}{1-\sigma}}}.
        \end{equation}
    \end{enumerate}
\end{lemma}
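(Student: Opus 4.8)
The plan is to recast the problem as a separable integer resource-allocation (water-filling) problem and to analyze the maximizer through its marginal increments $\Delta h(a) := h(a) - h(a-1)$. Since the objective $\sum_t p_t h(a_t)$ is separable and we optimize over $A_n$, local optimality says that at the maximizer no unit can be profitably transferred between types: for all $s,t$ with $a_t^{(n)}\ge 1$ we have $p_s\,\Delta h(a_s^{(n)}+1)\le p_t\,\Delta h(a_t^{(n)})$. When the increments $\Delta h$ are eventually decreasing---which holds under strict concavity in (iii)--(iv), and which I would first verify also holds eventually from the stated asymptotics in (i)--(ii)---this is exactly the greedy/water-filling description of the optimum: there is a common marginal level $\lambda_n$ with $a_t^{(n)}=\#\{\,j\ge 1 : p_t\,\Delta h(j)\ge\lambda_n\,\}$ up to an $O(1)$ boundary term. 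A preliminary step is to establish that every coordinate grows, $a_t^{(n)}\to\infty$ for each $t$ with $p_t>0$: a coordinate staying bounded would keep its marginal value $p_t\,\Delta h(a_t^{(n)}+1)$ bounded away from $0$, while the growing coordinates have marginal values tending to $0$ (as $\Delta h(a)\to 0$ in every case), contradicting local optimality. Consequently $\lambda_n\to 0$, so the asymptotic forms of $h$ govern each $\Delta h(a_t^{(n)})$.

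Step two is to invert the water-filling relation. Because $\Delta h$ is eventually strictly decreasing, $p_t\,\Delta h(a_t^{(n)})\approx\lambda_n$ yields $a_t^{(n)}\approx(\Delta h)^{-1}\!\bigl(\lambda_n/p_t\bigr)$, and the constraint $\sum_t a_t^{(n)}=n$ pins down $\lambda_n$. Taking the ratio $r_t^{(n)}=a_t^{(n)}/\sum_s a_s^{(n)}$ then isolates the $p_t$-dependence, and everything reduces to the decay rate of $\Delta h$, which I read off from the hypotheses: in (iii), $h(a)=B\log a + C + o(1)$ gives $\Delta h(a)\sim B/a$; in (ii) and (iv) the power-law forms give $\Delta h(a)\sim c\,a^{\sigma-1}$ for an explicit $c>0$; and in (i) the stated asymptotic forces $\Delta h(a)$ to decay faster than any power of $a$ (geometrically, as for a finite discrete $\DD$).

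Step three is the explicit computation in each regime. For power-law increments $\Delta h(a)\sim c\,a^{\sigma-1}$ (cases (ii) and (iv)), inversion gives $a_t^{(n)}\propto(p_t/\lambda_n)^{1/(1-\sigma)}$, so the $\lambda_n$-factor cancels in the ratio and $r_t^{(n)}\to p_t^{1/(1-\sigma)}/\sum_i p_i^{1/(1-\sigma)}$. For $\Delta h(a)\sim B/a$ (case (iii), the $\sigma\to 0$ endpoint), inversion gives $a_t^{(n)}\propto p_t/\lambda_n$ and hence $r_t^{(n)}\to p_t$. For the rapidly decaying increments of case (i), the level equation $p_t\,\Delta h(a_t^{(n)})\approx\lambda_n$ shifts each $a_t^{(n)}$ away from the common value $(\Delta h)^{-1}(\lambda_n)$ only by lower-order, $\log(1/p_t)$-type corrections; since all coordinates then share the same leading growth while their pairwise differences remain $o(n)$, the proportions equalize and $r_t^{(n)}\to 1/m$.

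The main obstacle is the rigor underlying the water-filling picture when $h$ is not globally concave (cases (i) and (ii)): I must show the maximizer eventually lands in the region where $\Delta h$ is monotone and that the greedy description is exact there, so that the inequalities $p_s\,\Delta h(a_s^{(n)}+1)\le p_t\,\Delta h(a_t^{(n)})$ genuinely determine the allocation. The second delicate point is uniform control of the discretization error: each coordinate is determined only up to one increment ($\pm O(1)$), and I need these per-type $O(1)$ errors---together with the $o(1)$ slack in the asymptotic equivalences for $\Delta h$---to vanish after dividing by $n$. Making the ``$\approx$'' in the level equation quantitative, via two-sided bounds on $\lambda_n$ and on $a_t^{(n)}$, and then sending $n\to\infty$, is where the real work lies; case (i) additionally requires verifying that the $p_t$-dependent corrections are genuinely of lower order than the common leading term.
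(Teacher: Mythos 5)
Your overall architecture differs from the paper's: you characterize the optimum locally, via (approximately) equalized marginal increments $p_t\,\Delta h(a_t^{(n)})\approx\lambda_n$ and then invert $\Delta h$, whereas the paper never touches increments after its preliminary step. Instead it argues globally by contradiction: extract a convergent subsequence $r_t^{(n)}\to r_t\neq\wh r_t$ (Bolzano--Weierstrass), build an explicit candidate allocation $\wh a_t^{(n)}$ with $\wh a_t^{(n)}/n\to\wh r_t$, and show directly from the asymptotics of $h$ (not $\Delta h$) that $\sum_t p_t h(a_t^{(n)})<\sum_t p_t h(\wh a_t^{(n)})$ for large $n$, using that $\sum_t p_t x_t^\sigma$ (resp.\ $\sum_t p_t\log x_t$) is uniquely optimized over the simplex at $\wh r$. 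Your preliminary step that $a_t^{(n)}\to\infty$ matches the paper's exchange argument.

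The genuine gap is the one you yourself flag but do not close, and I do not think it can be closed along your lines in parts (i) and (ii). There $h$ is only assumed monotonically increasing, and the hypotheses constrain the \emph{values} $h(a)$, not the increments: $A-h(a)\sim Ba^{\sigma}$ does not imply $\Delta h(a)\sim \sigma B a^{\sigma-1}$, nor even that $\Delta h$ is eventually monotone. A monotone $h$ can satisfy the value asymptotics while its increments oscillate by constant factors or vanish on long stretches, in which case $(\Delta h)^{-1}$ is ill-defined, the local-optimality inequalities do not pin down the allocation to within $O(1)$ per coordinate, and the level equation $p_t\,\Delta h(a_t^{(n)})\approx\lambda_n$ has no content. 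Since the entire derivation of the limiting proportions in (i) and (ii) rests on reading off $\Delta h$ from $h$, those two parts are unproven as written. (In (iii) and (iv) strict concavity does let you squeeze $\Delta h(a)\sim\sigma Ba^{\sigma-1}$ out of $h(a)\sim Ba^{\sigma}$ by bounding $\Delta h(a)$ between difference quotients $\frac{h((1+\epsilon)a)-h(a)}{\epsilon a}$ and $\frac{h(a)-h((1-\epsilon)a)}{\epsilon a}$, but that justification also needs to be written out.) The fix is essentially to abandon the marginal analysis for (i)--(ii) and adopt a value-level comparison against a candidate allocation, as the paper does.
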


We spend the remainder of the section proving \Cref{lem:fennel}. A useful first step is to show that in each of parts (i)-(iv), we have that
\begin{equation}\label{eq:lim-a}
    \lim_{n\rightarrow \infty} a_t^{(n)} = \infty
\end{equation}
for each $t\in [m],$ allowing us to use the asymptotic assumptions in the lemma's statement.

Assume for the sake of contradiction that there exists $t\in [m]$ and an integer $d$ such that for any integer $N$ there exists $n>N$ for which $a_t^{(n)} < d.$ 
Since $h$ is strictly increasing and $d$ is finite, there exists $\delta>0$ such that 
\begin{equation}
    h(a+1)-h(a) > \delta
\end{equation} for all $a < d.$ Also, there exists an integer $N'$ such that for all $a>N'$, 
\begin{equation}\label{eq:sand}
    h(a)-h(a-1) < \delta\cdot \min_{i\in [m]} \frac{p_t}{p_i}. 
\end{equation}
\eqref{eq:sand} holds in parts (i)-(ii) because $h$ is monotonically increasing and is upper bounded by $A$, and in parts (iii)-(iv) because $h$ is strictly concave.

Now consider $N = N'm$. Then there exists $n > N$ such that $a_t^{(n)} < d.$ Since $\sum_{t=1}^m a_t^{(n)} = n > N'm$, there exists $t'\in [m]$ such that $a_{t'}^{(n)} > N'.$ Thus,
\begin{equation}
    p_{t'} h(a_{t'}^{(n)}) - p_{t'} h(a_{t'}^{(n)} - 1) < p_t\delta < p_t h(a_t^{(n)} + 1) - p_t h(a_t^{(n)}),
\end{equation}
which implies that the switch $a_t^{(n)} \rightarrow a_t^{(n)} + 1, a_{t'}^{(n)}\rightarrow a_{t'}^{(n)} - 1$ increases
\begin{equation}
    \sum_{t=1}^m p_t h(a_t^{(n)}),
\end{equation}
contradicting the optimality of $(a_1^{(n)},\cdots, a_m^{(n)}).$

With \eqref{eq:lim-a} in hand, we turn to the bulk of the proof. In each part, we would like to show that
\begin{equation}
\lim_{n\rightarrow \infty} (r_1^{(n)},\cdots,r_m^{(n)}) = (\wh{r}_1, \cdots, \wh{r}_m)    
\end{equation} 
for some specified $(\wh{r}_1, \cdots, \wh{r}_m)$ depending on the part. We assume for the sake of contradiction that $\{(r_1^{(n)},\cdots,r_m^{(n)})\}_{n=1}^\infty$ does not converge to $(\wh{r}_1, \cdots, \wh{r}_m).$ If this is the case, then by the Bolzano-Weierstrass theorem, since $[0,1]^m$ is compact, there is a subsequence
\begin{equation}\{(r_1^{(s_i)},\cdots,r_m^{(s_i)})\}_{i=1}^\infty
\end{equation}
such that
$\lim_{i\rightarrow \infty} (r_1^{(s_i)},\cdots,r_m^{(s_i)}) = (r_1,\cdots,r_m)$
for some $(r_1,\cdots,r_m) \neq (\wh{r}_1, \cdots, \wh{r}_m).$ For notational ease, we will simply assume that
\begin{equation}
\lim_{n\rightarrow \infty} (r_1^{(n)},\cdots,r_m^{(n)}) = (r_1,\cdots,r_m)
\end{equation}
for some $(r_1,\cdots,r_m) \neq (\wh{r}_1, \cdots, \wh{r}_m)$. The proof holds analogously when the subsequence $\{(r_1^{(s_i)},\cdots,r_m^{(s_i)})\}_{i=1}^\infty$ differs from $\{(r_1^{(n)},\cdots,r_m^{(n)})\}_{n=1}^\infty$.

Then consider any sequence $\{(\wh{a}_1^{(n)},\cdots, \wh{a}_m^{(n)})\}_{n=1}^\infty$ such that
\begin{equation}
    \lim_{n\rightarrow \infty} \left(\frac{\wh{a}_1^{(n)}}{n},\cdots, \frac{\wh{a}_m^{(n)}}{n}\right) = (\wh{r}_1, \cdots, \wh{r}_m).
\end{equation}
(Clearly, such a sequence exists.) In each part, we will show that for sufficiently large $n$,
\begin{equation}
    \sum_{t=1}^m p_t h(a_t^{(n)}) < \sum_{t=1}^m p_t h(\wh{a}_t^{(n)}),
\end{equation}
contradicting the optimality of $(a_1^{(n)},\cdots, a_m^{(n)}).$ To complete the proof, we analyze each part separately:

\begin{enumerate}
    \item[(i)] In this part, there exist constants $A,B>0$ and $\sigma<0$ such that
        \begin{equation}\lim_{a\rightarrow \infty} \frac{\log(A - h(a))}{Ba^\sigma} = 1.\end{equation}
        We set $\wh{r}_t := \frac{1}{m}$ for each $t\in [m].$
    Observe that
    \begin{align}
        \lim_{a\rightarrow \infty} \frac{\log (A - h(a))}{Ba^\sigma} = 1
    \end{align}
    implies that for all $\epsilon > 0$, there exists $c$ such that for all $a>c,$
    \begin{equation}
        e^{(1-\epsilon)Ba^\sigma} \le A-h(a)\le e^{(1+\epsilon)Ba^\sigma}.
    \end{equation}
    Then observe that by taking sufficiently small $\epsilon,$ we have
\begin{align}
\lim_{n\rightarrow \infty} \frac{\sum_{t=1}^m p_t (A - h(a_t^{(n)}))}{\sum_{t=1}^m p_t (A - h(\wh{a}_t^{(n)}))} &\ge \lim_{n\rightarrow \infty} \frac{\sum_{t=1}^m p_t \exp[(1-\epsilon)B(a_t^{(n)})^\sigma)]}{\sum_{t=1}^m p_t \exp[(1+\epsilon)B(\wh{a}_t^{(n)})^\sigma)]}\\
&= \lim_{n\rightarrow \infty} \frac{\sum_{t=1}^m p_t \exp[(1-\epsilon)B(nr_t)^\sigma]}{\exp[(1+\epsilon)B(n/m)^\sigma]}\\
&= \lim_{n\rightarrow \infty} \sum_{t=1}^m p_t \exp[Bn^{\sigma}((1-\epsilon)r_t^\sigma - (1+\epsilon)(1/m)^\sigma)] = \infty
\end{align}
where the last limit holds for $\epsilon$ sufficiently small because $r_t - \frac{1}{m} > 0$ for some $t$.

It follows that for $n$ sufficiently large, $\sum_{t=1}^m p_t h(a_t^{(n)}) < \sum_{t=1}^m p_t h(\wh{a}_t^{(n)}),$ as desired.
    
    \item[(ii)] In this part, there exist constants $A,B>0$ and $\sigma<0$ such that
        \begin{equation}
            \lim_{a\rightarrow \infty} \frac{A-h(a)}{Ba^{\sigma}} = 1.
        \end{equation}
        We set
    \begin{equation}
       \wh{r}_t := \frac{p_t^{\frac{1}{1-\sigma}}}{\sum_{i=1}^m p_i^{\frac{1}{1-\sigma}}} 
    \end{equation}
    for each $t\in [m].$
    Then observe that
    \begin{align}
&\lim_{n\rightarrow \infty} \frac{\sum_{t=1}^m p_t  (A - h(a_t^{(n)}))}{\sum_{t=1}^m p_t  (A - h(\wh{a}_t^{(n)}))}\\
= &\lim_{n\rightarrow \infty} \frac{\sum_{t=1}^m p_t  (A - h(a_t^{(n)}))}{\sum_{t=1}^m p_t  (A - h(\wh{a}_t^{(n)}))} \cdot \lim_{n\rightarrow \infty} \frac{\sum_{t=1}^m p_t B(a_t^{(n)})^\sigma}{\sum_{t=1}^m p_t  (A - h(a_t^{(n)}))} \cdot \lim_{n\rightarrow \infty} \frac{\sum_{t=1}^m p_t  (A - h(\wh{a}_t^{(n)}))}{\sum_{t=1}^m p_t B(\wh{a}_t^{(n)})^\sigma} \label{eq:mug-2}\\
= &\lim_{n\rightarrow \infty} \frac{\sum_{t=1}^m p_t  (A - h(a_t^{(n)}))}{\sum_{t=1}^m p_t  (A - h(\wh{a}_t^{(n)}))} \cdot \frac{\sum_{t=1}^m p_t B(a_t^{(n)})^\sigma}{\sum_{t=1}^m p_t  (A - h(a_t^{(n)}))} \cdot \frac{\sum_{t=1}^m p_t  (A - h(\wh{a}_t^{(n)}))}{\sum_{t=1}^m p_t B(\wh{a}_t^{(n)})^\sigma} \label{eq:mug-3}\\
= &\lim_{n\rightarrow \infty} \frac{\sum_{t=1}^m p_t B(a_t^{(n)})^\sigma}{\sum_{t=1}^m p_t B(\wh{a}_t^{(n)})^\sigma} \\
= & \frac{\sum_{t=1}^m p_tr_t^\sigma}{\sum_{t=1}^m p_t\wh{r}_t^\sigma} > 1,\label{eq:mug}
\end{align}
where \eqref{eq:mug-2} follows from the latter two limits being equal to 1, \eqref{eq:mug-3} follows from the product rule for limits, and \eqref{eq:mug} follows from the observation that for $\sigma < 0$
\begin{equation}
    \sum_{t=1}^m p_t x_t^\sigma,
\end{equation}
subject to the constraint $\sum_{t=1}^m x_t = 1$ for $x_t\ge 0$ has a unique minimum at $(x_1,\cdots,x_m)=(\wh{r}_1,\cdots,\wh{r}_m).$ This is direct, for example, by using Lagrange multipliers. \eqref{eq:mug} implies that
\begin{equation}
    \lim_{n\rightarrow \infty} \frac{\sum_{t=1}^m p_t  h(a_t^{(n)})}{\sum_{t=1}^m p_t  h(\wh{a}_t^{(n)})} < 1.
\end{equation}

It follows that for $n$ sufficiently large, $\sum_{t=1}^m p_t h(a_t^{(n)}) < \sum_{t=1}^m p_t h(\wh{a}_t^{(n)}),$ as desired.

    \item[(iii)] In this part, $h$ is strictly concave, and there exist constants $B,C>0$ such that
        \begin{equation}
        \lim_{a\rightarrow\infty} h(a) - B\log a - C = 0.
        \end{equation}
        We set $\wh{r}_t := p_t$ for each $t\in [m].$
    Then observe that
\begin{align}
\lim_{n\rightarrow \infty} \sum_{t=1}^m p_t h(a_t^{(n)}) - \sum_{t=1}^m p_t h(\wh{a}_t^{(n)}) &= \lim_{n\rightarrow \infty} \sum_{t=1}^m p_t B\log a_t^{(n)}-\sum_{t=1}^m p_t B\log \wh{a}_t^{(n)}\\
&= B\log n + B\sum_{t=1}^m p_t \log r_t - B\log n - B\sum_{t=1}^m p_t \log \wh{r}_t\\
&< 0.
\end{align}
The final inequality here follows from the observation that
\begin{equation}
    \sum_{t=1}^m p_t \log x_t,
\end{equation}
subject to the constraint $\sum_{t=1}^m x_t = 1$ for $x_t>0$ has a unique minimum at $(x_1,\cdots,x_m)=(\wh{r}_1,\cdots,\wh{r}_m).$ This is direct, for example, by using Lagrange multipliers.

It follows that for $n$ sufficiently large, $\sum_{t=1}^m p_t h(a_t^{(n)}) < \sum_{t=1}^m p_t h(\wh{a}_t^{(n)}),$ as desired.
    
    \item[(iv)] In this part, $h$ is strictly concave, and there exist constants $B>0$ and $0 < \sigma < 1$ such that
        \begin{equation}
            \lim_{a\rightarrow \infty} \frac{h(a)}{Ba^{\sigma}} = 1.
        \end{equation}
        We set
    \begin{equation}
       \wh{r}_t := \frac{p_t^{\frac{1}{1-\sigma}}}{\sum_{i=1}^m p_i^{\frac{1}{1-\sigma}}} 
    \end{equation}
    for each $t\in [m].$
    Then observe that
\begin{align}
\lim_{n\rightarrow \infty} \frac{\sum_{t=1}^m p_t h(a_t^{(n)})}{\sum_{t=1}^m p_t h(\wh{a}_t^{(n)})} = \lim_{n\rightarrow \infty} \frac{\sum_{t=1}^m p_t B(a_t^{(n)})^\sigma}{\sum_{t=1}^m p_t B(\wh{a}_t^{(n)})^\sigma} = \frac{\sum_{t=1}^m p_tr_t^\sigma}{\sum_{t=1}^m p_t\wh{r}_t^\sigma} < 1.
\end{align}
The first equality is a consequence of the asymptotic assumption on $h$ and the product rule for limits (as in part (ii). The final inequality here follows from the observation that for $\sigma > 0$,
\begin{equation}
    \sum_{t=1}^m p_t x_t^\sigma,
\end{equation}
subject to the constraint $\sum_{t=1}^m x_t = 1$ for $x_t>0$ has a unique maximum at $(x_1,\cdots,x_m)=(\wh{r}_1,\cdots,\wh{r}_m).$ This is direct, for example, by using Lagrange multipliers.

It follows that for $n$ sufficiently large, $\sum_{t=1}^m p_t h(a_t^{(n)}) < \sum_{t=1}^m p_t h(\wh{a}_t^{(n)}),$ as desired.
    
\end{enumerate}

\subsection{Proof of \Cref{thm:general}}\label{sec:proof-thm1}
We now turn to the proofs of \Cref{thm:general}(i)-(iv). In each of these parts, we consider a set of recommendations with $a_t$ items of type $t$ for each $t\in [m].$ Then observe that the expected total value of the $k$ highest value recommended items is equal to
\begin{equation}
    \sum_{t=1}^m p_t h(a_t),
\end{equation}
for
\begin{equation}
    h:\ZZ \rightarrow \RR, \quad h:a\mapsto \EE\left[\topp_k\{X_1,\cdots,X_a\simiid \DD\}\right],
\end{equation}
where $\topp_k$ evaluates the sum of the $k$ highest values in a set. Intuitively, conditional on a user preferring type $t$, the top $k$ items are just the top $k$ items recommended of type $t$. The sum of their values, conditioned on the user preferring type $t$, is simply the sum of the $k$ highest values among $a$ random draws from $\DD$. Clearly, $h$ here is monotonically increasing.

Then, with \Cref{lem:fennel} in hand, parts (i)-(iv) reduces to showing the following:
\begin{enumerate}
    \item[(i)] If $\DD$ is a finite discrete distribution, there exist constants $A,B>0$ and $\sigma>0$ such that  \begin{equation}\lim_{a\rightarrow \infty} \frac{\log(A - h(a))}{Ba^\sigma} = 1.\end{equation}
    \item[(ii)] If $\DD$ has support bounded from above by $M$ with pdf $f_\DD$ satisfying
    \begin{equation}
    \lim_{x\rightarrow M} \frac{f_\DD(x)}{(M-x)^{\beta-1}} = c
    \end{equation}
    for some $\beta, c>0$, then there exist constants $A,B>0$ such that
        \begin{equation} \lim_{a\rightarrow \infty} \frac{A-h(a)}{Ba^{-\frac{1}{\beta}}} = 1.
        \end{equation}

    \item[(iii)] If $\DD = \Exp(\lambda)$ for $\lambda > 0,$ then $h$ is strictly concave and there exists a constant $B>0$ such that
        \begin{equation}
            \lim_{a\rightarrow \infty} \frac{h(a)}{B\log a} = 1.
        \end{equation}
    \item[(iv)] If $\DD = \Pareto(\alpha)$ for $\alpha>1$, then $h$ is strictly concave and there exists a constant $B>0$ such that
        \begin{equation}
            \lim_{a\rightarrow \infty} \frac{h(a)}{Ba^{\frac{1}{\alpha}}} = 1.
        \end{equation}
\end{enumerate}

The following identity, mentioned in \Cref{sec:proof-sketch}, will be useful for parts (ii)-(iv).
\begin{proposition}\label{prop:mu}
For $X_i^{(t)}\simiid \DD$,
    \begin{equation}
    h(a) = \sum_{i=1}^{\min\{k, a\}} \mu_\DD(a-i+1,a).
    \end{equation}
\end{proposition}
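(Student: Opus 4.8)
The plan is to prove this identity purely by linearity of expectation, exploiting the fact that the top $k$ values among $a$ i.i.d.\ draws are, by definition, the $k$ largest order statistics. First I would fix an arbitrary realization of the draws $X_1,\ldots,X_a \simiid \DD$ and write their order statistics as $X_{(1)} \le X_{(2)} \le \cdots \le X_{(a)}$, so that $X_{(a)}$ is the maximum, $X_{(a-1)}$ the second-largest, and so on. The quantity $\topp_k\{X_1,\ldots,X_a\}$ is by definition the sum of the $\min\{k,a\}$ largest realized values (all of them when $a < k$), which are precisely $X_{(a)}, X_{(a-1)}, \ldots, X_{(a-\min\{k,a\}+1)}$. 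Reindexing with $i$ running from $1$ to $\min\{k,a\}$, the $i$-th largest realized value is $X_{(a-i+1)}$, so for every realization we have the pointwise identity
\begin{equation}
\topp_k\{X_1,\ldots,X_a\} = \sum_{i=1}^{\min\{k,a\}} X_{(a-i+1)}.
\end{equation}

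Next I would take expectations of both sides. Because the right-hand sum has a deterministic number of terms $\min\{k,a\}$, linearity of expectation applies directly to give
\begin{equation}
h(a) = \EE\left[\topp_k\{X_1,\ldots,X_a\}\right] = \sum_{i=1}^{\min\{k,a\}} \EE\left[X_{(a-i+1)}\right] = \sum_{i=1}^{\min\{k,a\}} \mu_\DD(a-i+1, a),
\end{equation}
where the final equality is just the definition of $\mu_\DD(a-i+1,a)$ as the expected value of the $(a-i+1)$-th order statistic of $a$ i.i.d.\ draws from $\DD$.

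There is no substantive obstacle here; the only points requiring care are bookkeeping. I would verify that the index map $i \mapsto a-i+1$ correctly sends $i=1$ to the largest statistic (index $a$) and $i=\min\{k,a\}$ to the smallest statistic that is actually included, and I would confirm the edge case $a < k$: there the top-$k$ operation returns the sum of all $a$ values, and the upper limit $\min\{k,a\} = a$ ensures the identity reads off every order statistic exactly once. Finiteness of each $\mu_\DD(\cdot,a)$, which justifies the linearity step, follows from the finite-mean assumption on $\DD$ maintained throughout Theorem 1.
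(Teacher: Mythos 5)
Your proof is correct and follows essentially the same route as the paper: write $\topp_k$ as the sum of the $\min\{k,a\}$ largest order statistics pointwise, then apply linearity of expectation and the definition of $\mu_\DD$. The extra care you take with the index map and the $a<k$ edge case is sound bookkeeping that the paper leaves implicit.
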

Recall that $\mu_\DD(i,a)$ is the expected value of the $i$-th order statistic of $a$ random variables drawn i.i.d. from $\DD$.

\begin{proof}
    Let $Y_{k,n}$ be the $k$-th order statistic of $n$ random variables distributed i.i.d. from $\DD$. Then
\begin{equation}\label{eq:pond}
\topp_{k}\{X_1^{(t)},\cdots,X_{a}^{(t)}\}
= \sum_{i=1}^{\min\{k,a\}} Y_{a-i+1, a}.
\end{equation}
So, as desired,
\begin{equation}
    \EE\left[\topp_{k}\{X_1^{(t)},\cdots,X_{a}^{(t)}\}\right] = \sum_{i=1}^{\min\{k,a\}} \EE\left[Y_{a-i+1,a}\right] = \sum_{i=1}^{\min\{k, a\}} \mu_\DD(a-i+1,a),
\end{equation}
where the first equality follows from \eqref{eq:pond} and the linearity of expectation.
\end{proof}

\paragraph{Proof of \Cref{thm:general}(i).}
Suppose $\DD$ has support $\{x_1,\cdots,x_r\}$ with $x_1>\cdots>x_r$ such that for $X\sim \DD$, $\Pr[X=x_1] = q.$ Now consider a set of recommendations with $a_t$ items of type $t$ for each $t\in [m].$ Then consider $X_1,\cdots,X_a\simiid \DD$. Let $E$ be the event that at least $k$ of $X_1,\cdots,X_a$ equal $x_1$. Then,
\begin{align}
    h(a) &\ge \EE[\topp_k\{X_1,\cdots,X_a\}|E] \cdot \Pr[E]\\
    &= x_1k\cdot \left(1 - \sum_{j=0}^{k-1}\binom{a}{j}(1-q)^{a-j}q^j\right)\\
    &\ge x_1k(1 - a^k(1-q)^{a-k+1})
\end{align}
for all $a>2.$ Now let $E'$ be the event that at least one of $X_1,\cdots, X_a$ equals $x_1$. Then,
\begin{align}
    h(a) &= \EE[\topp_k\{X_1,\cdots,X_a\}|E'] \cdot \Pr[E'] + \EE[\topp_k\{X_1,\cdots,X_a\}|\overline{E'}] \cdot (1-\Pr[E'])\\
    &\le x_1k(1 - (1-q)^{a}) + x_2k(1-q)^{a}\\
    &= x_1k(1 - (1-\frac{x_2}{x_1})(1-q)^{a}).
\end{align}
Now note that for $A = x_1k,$ we have that
\begin{align}
    x_1k(1 - \frac{x_2}{x_1})(1-q)^{a} &\le A - h(a) \le x_1k a^k (1-q)^{a-k+1}\\
    \log(x_1k(1 - \frac{x_2}{x_1})(1-q)^{a}) &\le \log(A - h(a)) \le \log(x_1k a^k (1-q)^{a-k+1})\\
    \log(x_1k) + \log(1 - \frac{x_2}{x_1}) + a\log(1-q) &\le \log(A - h(a)) \le \log(x_1k) + k\log(a) + (a-k+1)\log(1-q).
\end{align}
It follows that for $B = \log(1-q),$
\begin{equation}
    \lim_{a\rightarrow \infty} \frac{\log(A - h(a))}{Ba} = 1,
\end{equation}
as desired. The result follows from \Cref{lem:fennel}(i).

\paragraph{Proof of \Cref{thm:general}(ii).}

First recall from \Cref{prop:mu} that
\begin{equation}
    h(a) = \sum_{i=1}^{\min\{k, a\}} \mu_\DD(a-i+1,a).
\end{equation}
We will show that \begin{equation}
\lim_{a\rightarrow \infty} \frac{Mk - h(a)}{Ba^{-\frac{1}{\beta}}} = 1
\end{equation}
for a constant $B>0.$ \Cref{thm:general}(ii) then follows immediately by applying \Cref{lem:fennel}(ii) with $\sigma = -\frac{1}{\beta}$.

Consider a probability distribution $\DD'$ with pdf $g_X(x)=f_X(M-x)$ and cdf $G_X(x).$ Then
\begin{equation}
\mu_\DD(a-i+1,a) = M - \mu_{\DD'}(i,a),
\end{equation}
which implies that
\begin{equation}
    Mk - \sum_{i=1}^{k} \mu_\DD(a-i+1,a) = \sum_{i=1}^{k} \mu_{\DD'}(i,a)
\end{equation}
Since
\begin{equation}
    \mu_{\DD'}(i,a) = \sum_{j=0}^{i-1} \int_0^\infty \binom{a}{j}G_X(x)^j (1-G_X(x))^{a-j}\,dx,
\end{equation}
it remains to show that for all fixed $j$,
\begin{equation}\label{eq:orangepeel}
    \lim_{a\rightarrow \infty}\frac{\int_0^\infty \binom{a}{j}G_X(x)^j (1-G_X(x))^{a-j}\,dx}{a^{-\frac{1}{\beta}}} = B
\end{equation}
for some constant $B$ (that can vary depending on $j$). Verifying \eqref{eq:orangepeel} comprises the bulk of the technical work of the proof, and we isolate it in the following lemma.

\begin{lemma}
For $\beta>0$,
\begin{equation}
\int_0^\infty \binom{a}{j} G_X(x)^j (1 - G_X(x))^{a-j}\,dx \propto a^{-\frac{1}{\beta}}.
\end{equation}
\end{lemma}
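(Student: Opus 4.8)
The plan is to show that the integral concentrates near $x=0$ as $a\to\infty$ and reduces, after a change of variables, to a Beta integral whose asymptotics I can read off from Stirling's formula. First I would translate the tail hypothesis on $f_\DD$ into local behavior of the reflected density at the origin: since $g_X(x)=f_X(M-x)$ and $\lim_{x\to M}f_\DD(x)/(M-x)^{\beta-1}=c$, we have $g_X(x)\sim c\,x^{\beta-1}$ as $x\to 0^+$, hence $G_X(x)\sim \tfrac{c}{\beta}x^{\beta}$ and its inverse $x(u):=G_X^{-1}(u)\sim(\beta u/c)^{1/\beta}$ near $0$. The factor $(1-G_X(x))^{a-j}$ decays rapidly away from $x=0$, so the entire mass of the integral sits in a shrinking neighborhood of the origin, which is exactly where these local estimates apply.

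Next I would substitute $u=G_X(x)$ (valid on a small initial interval since $g_X>0$ there), turning the integral into $\binom{a}{j}\int_0^1 u^j(1-u)^{a-j}\psi(u)\,du$ with $\psi(u):=1/g_X(x(u))$. The computation above gives $\psi(u)\sim K\,u^{1/\beta-1}$ as $u\to 0^+$, for a positive constant $K=c^{-1/\beta}\beta^{(1-\beta)/\beta}$. Replacing $\psi$ by its leading term produces the exact Beta integral
\begin{equation}
\binom{a}{j}\int_0^1 u^{\,j+1/\beta-1}(1-u)^{a-j}\,du=\binom{a}{j}\,B\!\left(j+\tfrac{1}{\beta},\,a-j+1\right)=\frac{\Gamma(j+1/\beta)}{\Gamma(j+1)}\cdot\frac{\Gamma(a+1)}{\Gamma(a+1+1/\beta)},
\end{equation}
and the standard ratio asymptotic $\Gamma(a+1)/\Gamma(a+1+1/\beta)\sim a^{-1/\beta}$ yields the claimed rate $a^{-1/\beta}$, with limiting constant $\tfrac{\Gamma(j+1/\beta)}{\Gamma(j+1)}K$.

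To make this rigorous I would split $\int_0^1=\int_0^\delta+\int_\delta^1$ for small $\delta$ and argue by a Watson's-lemma-style sandwich. On $[0,\delta]$, choosing $\delta$ so that $(K-\epsilon)u^{1/\beta-1}\le\psi(u)\le(K+\epsilon)u^{1/\beta-1}$, I bound the contribution between $(K\mp\epsilon)\binom{a}{j}\int_0^\delta u^{j+1/\beta-1}(1-u)^{a-j}\,du$; extending each such integral to $[0,1]$ introduces only the full Beta integral above plus a tail error that vanishes after multiplying by $a^{1/\beta}$, so the main term is pinned to $(K\mp\epsilon)\tfrac{\Gamma(j+1/\beta)}{\Gamma(j+1)}$ and letting $\epsilon\to0$ identifies the limit. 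The tail $\int_\delta^1$ (equivalently $x\ge x_0$ with $G_X(x_0)=\delta$) I would control in the original variable: for $x\ge x_0$ one has $(1-G_X(x))^{a-j}\le(1-\delta)^{a-j-1}(1-G_X(x))$, so the tail is at most $\binom{a}{j}(1-\delta)^{a-j-1}\int_0^\infty(1-G_X(x))\,dx=\binom{a}{j}(1-\delta)^{a-j-1}\,\EE[M-X]$, which is finite precisely because $\DD$ has finite mean, and which decays geometrically even after multiplication by the polynomial factor $a^{1/\beta}\binom{a}{j}$.

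\textbf{Main obstacle.} The delicate step is the sandwich on $[0,\delta]$: one must ensure that the error from replacing $\psi(u)$ by its leading power $Ku^{1/\beta-1}$ is genuinely lower order after rescaling by $a^{1/\beta}$, and that the Beta-function asymptotic controls both the main term and the discarded pieces uniformly. The finite-mean hypothesis on $\DD$ is exactly what guarantees the tail integral is finite and hence negligible; without it the bulk-vs-tail separation would fail. The remaining steps (the local expansion of $G_X$, the substitution, and the Gamma ratio limit) are routine once this concentration-of-mass estimate is in place.
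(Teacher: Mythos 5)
Your proposal is correct, and the overall skeleton matches the paper's: both arguments reflect the density to the origin via $g_X(x)=f_X(M-x)$, split the integral at a small $\delta$ chosen so that $(1\mp\epsilon)cx^{\beta-1}$ sandwiches $g_X$, and kill the tail with exactly the same bound $(1-G_X(x))^{a-j}\le(1-G_X(\delta))^{a-j-1}(1-G_X(x))$ integrated against the finite mean. Where you genuinely diverge is in evaluating the main term. The paper stays in the $x$ variable, rescales $x\mapsto x\,a^{-1/\beta}$, and applies the dominated convergence theorem to a sequence $\phi_a$ converging pointwise to $\frac{1}{j!}\left(\frac{c}{\beta}\right)^j x^{\beta j}e^{-\frac{c}{\beta}x^\beta}$, obtaining the limiting constant as $\int_0^\infty \frac{1}{j!}\left(\frac{c}{\beta}\right)^j x^{\beta j}e^{-\frac{c}{\beta}x^\beta}\,dx$. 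You instead pass to $u=G_X(x)$, reduce to the exact Beta integral $\binom{a}{j}B\left(j+\tfrac{1}{\beta},a-j+1\right)$, and read off $a^{-1/\beta}$ from the Gamma-ratio asymptotic; a substitution $v=\frac{c}{\beta}x^\beta$ confirms your closed-form constant $\frac{\Gamma(j+1/\beta)}{\Gamma(j+1)}K$ agrees with the paper's integral. Your route buys an explicit constant and replaces the dominated-convergence step with a classical special-function asymptotic; the paper's route avoids inverting $G_X$ (you correctly note invertibility only needs $g_X>0$ near $0$, which the hypothesis supplies) and works directly from the two-sided density bounds. The "main obstacle" you flag—making the sandwich on $[0,\delta]$ uniform after rescaling by $a^{1/\beta}$—is real but is handled by exactly the $\epsilon\to 0$ limit interchange you describe, which is also where the paper invokes dominated convergence a second time.
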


\begin{proof}
We have that
\begin{equation}
    \lim_{x\rightarrow 0^+} \frac{g_X(x)}{cx^{\beta-1}} = \lim_{x\rightarrow M^{-}} \frac{f_X(x)}{c(M-x)^{\beta-1}} = 1
\end{equation}
for a positive constant $c$. So for all $\epsilon>0$ there exists $\delta>0$ such that
\begin{equation}
(1 - \epsilon)cx^{\beta-1} \le g_X(x) \le (1 + \epsilon)cx^{\beta-1}
\end{equation}
for all $x<\delta.$
Now note that $g_X(x) \le (1 + \epsilon)cx^{\beta-1}$ implies that
\begin{equation}
    G_X(x) = \int_0^x g_X(u)\,du \le (1+\epsilon)\int_0^x cu^{\beta-1}\,du = (1+\epsilon)\frac{c}{\beta}x^\beta.
\end{equation}
Likewise, $g_X(x) \ge (1 - \epsilon)cx^{\beta-1}$ implies that
\begin{equation}
    G_X(x) = \int_0^x g_X(u)\,du \ge (1-\epsilon)\int_0^x cu^{\beta-1}\,du = (1-\epsilon)\frac{c}{\beta}x^\beta.
\end{equation}
Now write
\begin{align}
&a^{\frac{1}{\beta}}\int_0^\infty \binom{a}{j} G_X(x)^j (1 - G_X(x))^{a-j}\,dx\\ &=a^{\frac{1}{\beta}}\int_0^\delta \binom{a}{j} G_X(x)^j (1 - G_X(x))^{a-j}\,dx + a^{\frac{1}{\beta}}\int_\delta^\infty \binom{a}{j} G_X(x)^j (1 - G_X(x))^{a-j}\,dx.
\end{align}
We will analyze these two integral separately. It will turn out that the second integral vanishes as $a$ grows.
\end{proof}
\paragraph{The first integral.} We have that
\begin{align}
&a^{\frac{1}{\beta}}\int_0^\delta \binom{a}{j} G_X(x)^j (1 - G_X(x))^{a-j}\,dx\\
&\le a^{\frac{1}{\beta}}\int_0^\delta \binom{a}{j} (1+\epsilon)^j\left(\frac{c}{\beta}\right)^j x^{\beta j} (1 - (1-\epsilon)\frac{c}{\beta}x^\beta)^{a-j}\,dx\\
&= \int_0^{\delta a^{\frac{1}{\beta}}} \binom{a}{j} (1+\epsilon)^j\left(\frac{c}{\beta}\right)^j \left(\frac{x}{a^{\frac{1}{\beta}}}\right)^{\beta j} \left(1 - (1-\epsilon)\frac{c}{\beta}\left(\frac{x}{a^{\frac{1}{\beta}}}\right)^\beta\right)^{a-j}\,dx\\
&= \int_0^{\delta a^{\frac{1}{\beta}}} \binom{a}{j} (1+\epsilon)^j\left(\frac{c}{\beta}\right)^j \frac{x^\beta j}{a^j} \left(1 - (1-\epsilon)\frac{c}{\beta}\frac{x}{a}\right)^{a-j}\,dx.
\end{align}
Then
\begin{equation}
    \int_0^{\delta a^{\frac{1}{\beta}}} \binom{a}{j} (1+\epsilon)^j\left(\frac{c}{\beta}\right)^j \frac{x^\beta j}{a^j} \left(1 - (1-\epsilon)\frac{c}{\beta}\frac{x}{a}\right)^{a-j}\,dx
    = \int_0^\infty \phi_{a}(x)\,dx,
\end{equation}
where
\begin{equation}
    \phi_{a}(x) := \begin{cases}
    \binom{a}{j} (1+\epsilon)^j\left(\frac{c}{\beta}\right)^j \frac{x^\beta j}{a^j} \left(1 - (1-\epsilon)\frac{c}{\beta}\frac{x}{a}\right)^{a-j}\,dx &\quad \text{for }0\le x\le \delta a^{\frac{1}{\beta}}\\
    0&\quad \text{for }x>\delta a^{\frac{1}{\beta}}.
    \end{cases}
\end{equation}
We have that
\begin{equation}
\lim_{a\rightarrow \infty} \phi_{a}(x) = \frac{1}{j!}(1+\epsilon)^j\left(\frac{c}{\beta}\right)^jx^{\beta j}e^{-(1-\epsilon)\frac{c}{\beta}x^{\beta}}
\end{equation}
and
\begin{equation}
    \phi_{a}(x) \le \frac{1}{j!}(1+\epsilon)^j\left(\frac{c}{\beta}\right)^jx^{\beta j}e^{-(1-\epsilon)\frac{c}{\beta}x^{\beta}} (1 - (1-\epsilon)\frac{c}{\beta}\epsilon^\beta))^{-j} = C(j,\epsilon)x^{\beta j}e^{-(1-\epsilon)\frac{c}{\beta}x^{\beta}}
\end{equation}
for a constant $C(j,\epsilon)$ independent of $a.$ Now note that $\int_0^\infty x^{\beta j}e^{-(1-\epsilon)\frac{c}{\beta}x^{\beta}} < \infty.$ It follows from the dominated convergence theorem that
\begin{equation}
    \lim_{a\rightarrow \infty} \int_0^\infty \phi_{a}(x)\,dx
    = \int_0^\infty \lim_{a\rightarrow \infty} \phi_{a}(x)\,dx
    = \int_0^\infty \frac{1}{j!}(1+\epsilon)^j\left(\frac{c}{\beta}\right)^jx^{\beta j}e^{-(1-\epsilon)\frac{c}{\beta}x^{\beta}}\,dx < \infty.
\end{equation}
Therefore, for $a$ sufficiently large,
\begin{equation}
    \int_0^\delta \binom{a}{j} G_X(x)^j (1 - G_X(x))^{a-j}\,dx \le a^{-\frac{1}{\beta}}(1+\epsilon)\int_0^\infty \frac{1}{j!}(1+\epsilon)^j\left(\frac{c}{\beta}\right)^jx^{\beta j}e^{-(1-\epsilon)\frac{c}{\beta}x^{\beta}}\,dx.
\end{equation}
Analogously, we can show that for $a$ sufficiently large,
\begin{equation}
    \int_0^\delta \binom{a}{j} G_X(x)^j (1 - G_X(x))^{a-j}\,dx \ge a^{-\frac{1}{\beta}}(1-\epsilon)\int_0^\infty \frac{1}{j!}(1-\epsilon)^j\left(\frac{c}{\beta}\right)^jx^{\beta j}e^{-(1+\epsilon)\frac{c}{\beta}x^{\beta}}\,dx.
\end{equation}
Now observe that
\begin{align}
&\lim_{\epsilon\rightarrow 0^+} (1+\epsilon)\int_0^\infty \frac{1}{j!}(1+\epsilon)^j\left(\frac{c}{\beta}\right)^jx^{\beta j}e^{-(1-\epsilon)\frac{c}{\beta}x^{\beta}}\,dx\\
&= \int_0^\infty \frac{1}{j!}\left(\frac{c}{\beta}\right)^j x^{\beta j}e^{-\frac{c}{\beta}x^\beta}\,dx\\
&= \lim_{\epsilon\rightarrow 0^+} (1-\epsilon)\int_0^\infty \frac{1}{j!}(1-\epsilon)^j\left(\frac{c}{\beta}\right)^jx^{\beta j}e^{-(1+\epsilon)\frac{c}{\beta}x^{\beta}}\,dx,
\end{align}
where we once again apply the dominated convergence theorem. It follows that
\begin{equation}\label{eq:onion}
    \lim_{a\rightarrow \infty} \frac{\int_0^\delta \binom{a}{j} G_X(x)^j (1 - G_X(x))^{a-j}\,dx}{a^{-\frac{1}{\beta}}} = \int_0^\infty \frac{1}{j!}\left(\frac{c}{\beta}\right)^j x^{\beta j}e^{-\frac{c}{\beta}x^\beta}\,dx.
\end{equation}

\paragraph{The second integral.}
We now analyze
\begin{equation}
    a^{\frac{1}{\beta}}\int_\delta^\infty \binom{a}{j} G_X(x)^j (1 - G_X(x))^{a-j}\,dx.
\end{equation}
Observe that
\begin{align}
    a^{\frac{1}{\beta}}\int_\delta^\infty \binom{a}{j} G_X(x)^j (1 - G_X(x))^{a-j}\,dx
    &< a^{\frac{1}{\beta}} \binom{a}{j} \int_\delta^\infty (1 - G_X(x))^{a-j}\,dx\\
    &< a^{\frac{1}{\beta}} \binom{a}{j} (1 - G_X(\delta))^{a-j} \int_\delta^\infty 1 - G_X(x)\,dx\\
    &< a^{\frac{1}{\beta}} \binom{a}{j} (1 - G_X(\delta))^{a-j} \EE[X].
\end{align}
Thus,
\begin{equation}\label{eq:carrot}
    \lim_{a\rightarrow \infty} \frac{\int_\delta^\infty \binom{a}{j} G_X(x)^j (1 - G_X(x))^{a-j}\,dx}{a^{\frac{1}{\beta}}} = 0.
\end{equation}

Combining \eqref{eq:onion} and \eqref{eq:carrot} gives us that
\begin{equation}
\int_0^\infty \binom{a}{j} G_X(x)^j (1 - G_X(x))^{a-j}\,dx \propto a^{-\frac{1}{\beta}},
\end{equation}
as desired.

\paragraph{Proof of \Cref{thm:general}(iii).} Recall again that
\begin{equation}
    h(a) := \sum_{i=1}^{\min\{k, a\}} \mu_\DD(a-i+1,a).
\end{equation}
We show that $h$ is strictly concave and
\begin{equation}\label{eq:mouse-2}
\lim_{a\rightarrow \infty} h(a) - B\log a - C = 0
\end{equation}
for constants $B,C>0$. Both of these facts follow directly from the lemma below. \Cref{thm:general}(iii) then follows immediately by applying \Cref{lem:fennel}(iii).

\begin{lemma}\label{lem:mouse}
For $\DD$ an exponential distribution with rate parameter $\lambda$, so that $f_X(x) = \lambda e^{-\lambda x}$ for $\lambda > 0,$
\begin{equation}\label{eq:papaya}
    \lim_{a\rightarrow \infty} \mu_\DD(a-i,a) - \log a - B(j) = 0
\end{equation}
for a constant $B(j)>0.$ Moreover, $\mu_\DD(a-i,a)$ is strictly concave.
\end{lemma}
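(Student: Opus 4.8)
The plan is to reduce both claims to an exact closed form for the expected order statistics of i.i.d. exponentials, after which the asymptotic statement \eqref{eq:papaya} and the strict concavity each follow in a single line. The closed form I would aim for is the harmonic-number identity $\mu_\DD(a-j,a)=\frac{1}{\lambda}\left(H_a-H_j\right)$, where $H_n=\sum_{u=1}^{n}\frac{1}{u}$ (with $H_0=0$) is the $n$-th harmonic number.

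First I would establish this identity. Write $X_{(1)}\le \cdots \le X_{(a)}$ for the order statistics of $a$ i.i.d. $\Exp(\lambda)$ draws. By memorylessness, the minimum $X_{(1)}$ is distributed as $\Exp(a\lambda)$, and conditioned on $X_{(1)}$ the remaining $a-1$ values are again i.i.d. $\Exp(\lambda)$ shifted by $X_{(1)}$; iterating this (equivalently, invoking R\'enyi's representation for exponential spacings) shows that the spacings $X_{(s)}-X_{(s-1)}$ are independent with $X_{(s)}-X_{(s-1)}\sim \Exp((a-s+1)\lambda)$. Summing expectations gives $\EE[X_{(s)}]=\frac{1}{\lambda}\sum_{u=1}^{s}\frac{1}{a-u+1}$. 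Since $\mu_\DD(a-j,a)$ is the expectation of the $(a-j)$-th smallest order statistic, I substitute $s=a-j$ and reindex the sum by $w=a-u+1$, which collapses it to $\frac{1}{\lambda}\sum_{w=j+1}^{a}\frac{1}{w}=\frac{1}{\lambda}(H_a-H_j)$, giving the claimed identity.

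Given this identity, the limit in \eqref{eq:papaya} is immediate from the asymptotic $H_a=\log a+\gamma+o(1)$, where $\gamma$ is the Euler--Mascheroni constant: the coefficient on $\log a$ is $1/\lambda$ and the additive constant converges to $B(j)=\frac{1}{\lambda}(\gamma-H_j)$. (Only this coefficient matters for the downstream use of \Cref{lem:fennel}(iii), since the additive constant is common across types and cancels in the optimality comparison.) For strict concavity in the discrete sense used in the excerpt, I view $g(a):=\mu_\DD(a-j,a)=\frac{1}{\lambda}(H_a-H_j)$ as a function of $a$ with $j$ fixed; its forward difference is $g(a+1)-g(a)=\frac{1}{\lambda(a+1)}$, which is strictly decreasing in $a$, so $g(a+1)-g(a)<g(a)-g(a-1)$ and $g$ is strictly concave. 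The only real work is the closed form: once the spacing representation is in hand, both parts are elementary facts about $\sum 1/u$. The point to be careful about is the reindexing between ``the $(a-j)$-th smallest'' and ``the $(j+1)$-th largest'' order statistic, which is exactly what produces the cancellation down to $H_a-H_j$ (a difference of two harmonic numbers) rather than a partial harmonic sum depending jointly on $a$ and $j$.
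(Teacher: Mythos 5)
Your proof is correct and takes essentially the same route as the paper: both rest on the harmonic-number closed form $\mu_\DD(a-j,a)=\frac{1}{\lambda}(H_a-H_j)$, followed by $H_a=\log a+\gamma+o(1)$ for the limit and the decreasing forward differences $\frac{1}{\lambda(a+1)}$ for strict concavity. The only difference is that you derive the closed form via the exponential spacings representation, whereas the paper simply cites it as well known; you also correctly flag the $1/\lambda$ coefficient on $\log a$, which the lemma statement elides.
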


\begin{proof}
For an exponential distribution with rate parameter $\lambda,$ it is well known that
\begin{equation}\label{eq:bridge}
    \mu_\DD(a-i,a) = \sum_{j=i+1}^{a} \frac{1}{\lambda n}.
\end{equation}
It is clear, then, that $\mu_\DD(a-i,a)$ is strictly concave. \eqref{eq:bridge}  is equal to
\begin{equation}
    \frac{1}{\lambda}\left(\log n + \gamma + \epsilon(a) - \sum_{j=1}^i \frac{1}{j}\right),
\end{equation}
where $\gamma$ is the Euler-Mascheroni constant and $\lim_{a\rightarrow \infty}\epsilon(a) = 0,$ from which \eqref{eq:papaya} follows.
\end{proof}

\paragraph{Proof of \Cref{thm:general}(iv).}

Recall again that
\begin{equation}
    h(a) := \sum_{i=1}^{\min\{k, a\}} \mu_\DD(a-i+1,a).
\end{equation} Then it suffices to show that $h$ is strictly concave and
\begin{equation}
\lim_{a\rightarrow \infty} \frac{h(a)}{Ba^{\frac{1}{\alpha}}} = 1
\end{equation}
for a constant $B>0.$ Both of these facts follow directly from the lemma below. \Cref{thm:general}(iv) then follows immediately by applying \Cref{lem:fennel}(iv).

\begin{lemma}
For $\DD$ a Pareto distribution with pdf $f_X(x) = x^{-\alpha-1}$ for $\alpha > 1,$
\begin{equation}
    \lim_{a_t\rightarrow \infty} \frac{\mu_\DD(a-i,a)}{a^\frac{1}{\alpha}} = C
\end{equation}
for a constant $C>0.$ Moreover, $\mu_\DD(a-i,a)$ is strictly concave.
\end{lemma}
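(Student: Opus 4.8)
The plan is to obtain an exact closed form for $\mu_\DD(a-i,a)$ using the standard order-statistic/Beta representation, from which both the asymptotic and the concavity claims fall out immediately. Throughout I fix $i$ and treat $a$ as the variable, and I take the standard Pareto with cdf $F(x)=1-x^{-\alpha}$ on $[1,\infty)$ (any normalizing constant only rescales the eventual constant $C$). Recall that for $a$ i.i.d.\ draws the $j$-th order statistic obeys $F(X_{(j)})\eqdist U_{(j)}$, the $j$-th order statistic of $a$ uniforms, with $U_{(j)}\sim\betaa(j,a-j+1)$. Applying this with $j=a-i$ and inverting $F$ gives $X_{(a-i)}\eqdist V^{-1/\alpha}$, where $V:=1-U_{(a-i)}\sim\betaa(i+1,a-i)$. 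Thus $\mu_\DD(a-i,a)=\EE[V^{-1/\alpha}]$.

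Next I would evaluate this Beta moment. Since $\alpha>1$ we have $i+1-\tfrac1\alpha>0$, so the moment is finite, and the beta-moment identity yields
\begin{equation*}
\mu_\DD(a-i,a)=\EE[V^{-1/\alpha}]=\frac{\Gamma\!\left(i+1-\tfrac{1}{\alpha}\right)}{\Gamma(i+1)}\cdot\frac{\Gamma(a+1)}{\Gamma\!\left(a+1-\tfrac{1}{\alpha}\right)}.
\end{equation*}
Writing $\beta:=\tfrac1\alpha\in(0,1)$ and $C:=\Gamma(i+1-\beta)/\Gamma(i+1)>0$, the limit claim reduces to the standard gamma-ratio asymptotic $\Gamma(a+1)/\Gamma(a+1-\beta)\sim a^{\beta}$ as $a\to\infty$, which gives $\lim_{a\to\infty}\mu_\DD(a-i,a)/a^{1/\alpha}=C$.

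For strict concavity I would work directly with the $a$-dependent factor $g(a):=\Gamma(a+1)/\Gamma(a+1-\beta)$. The one-step ratio telescopes to $g(a)/g(a-1)=a/(a-\beta)$, so the forward difference is $g(a+1)-g(a)=g(a)\cdot\beta/(a+1-\beta)$. Comparing successive differences, the ratio $\bigl(g(a+1)-g(a)\bigr)/\bigl(g(a)-g(a-1)\bigr)$ simplifies to $a/(a+1-\beta)$, which is strictly less than $1$ because $\beta<1$. Hence the forward differences are strictly decreasing, so $g$---and therefore $\mu_\DD(a-i,a)$, which is a positive constant times $g$---is strictly concave for $a\ge i+1$.

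The only genuine work is pinning down the Beta representation and its parameters exactly (so that the gamma arguments line up, in particular that $V\sim\betaa(i+1,a-i)$) and confirming that $\Gamma(i+1-\beta)$ is finite and positive, which is precisely where the hypothesis $\alpha>1$ enters; the asymptotic and concavity steps are then routine gamma-function manipulations. This parallels the exponential case in \Cref{lem:mouse}, with the Rényi sum representation used there replaced by the Beta representation here.
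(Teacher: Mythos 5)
Your proof is correct, and it takes a more self-contained route than the paper. The paper's proof simply cites two external lemmas from \cite{kleinberg2018selection} --- one giving $\lim_{a\to\infty}\mu_\DD(a,a)/a^{1/\alpha}=\Gamma\!\left(\frac{\alpha-1}{\alpha}\right)$ and one giving the identity $\mu_\DD(a-i,a)=\prod_{j=1}^i\bigl(1-\frac{1}{j\alpha}\bigr)\mu_\DD(a,a)$ --- and then asserts concavity on the grounds that the mean of the largest order statistic is concave in the sample size. Your closed form $\mu_\DD(a-i,a)=\frac{\Gamma(i+1-\beta)}{\Gamma(i+1)}\cdot\frac{\Gamma(a+1)}{\Gamma(a+1-\beta)}$ with $\beta=1/\alpha$ is exactly equivalent to the cited identities (the product $\prod_{j=1}^i(1-\beta/j)$ is the ratio $\Gamma(i+1-\beta)/(\Gamma(1-\beta)\Gamma(i+1))$), but you derive it from scratch via the Beta representation of order statistics, and you correctly isolate where $\alpha>1$ is needed (finiteness of the moment when $i=0$). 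The main thing your approach buys is the concavity step: the paper's justification is an unproven general assertion, whereas your telescoping computation $\bigl(g(a+1)-g(a)\bigr)/\bigl(g(a)-g(a-1)\bigr)=a/(a+1-\beta)<1$ gives an explicit, rigorous proof of strictly decreasing forward differences in this case. The trade-off is that the paper's route generalizes verbatim to any distribution for which the two cited facts hold, while yours exploits the exact Pareto--Beta algebra.
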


\begin{proof}
The result follows directly from Lemmas D.10 and D.11 in \cite{kleinberg2018selection}, where it is shown (in our notation) that
\begin{align}
    \lim_{a\rightarrow \infty}\frac{\mu_\DD(a,a)}{a^{\frac{1}{\alpha}}} = \Gamma\left(\frac{\alpha-1}{\alpha}\right)
\end{align}
and
\begin{equation}
    \mu_\DD(a-i,a) = \prod_{j=1}^i \left(1 - \frac{1}{j\alpha}\right)\mu_\DD(a,a).
\end{equation}
\end{proof}

Thus,
\begin{equation}
    \lim_{a\rightarrow \infty} \frac{\sum_{i=1}^{k} \mu_\DD(a-i+1,a)}{B \log a} = 1
\end{equation}
for a constant $B$. Also, note that $\mu_\DD(a-i,a)$ is a constant multiple of $\mu_\DD(a,a)$, and that $\mu_\DD(a,a)$ is strictly concave, since the mean of the largest order statistic of a distribution is strictly concave in sample size. Thus, $\mu_\DD(a-i,a)$ is strictly concave.

\subsection{Proof of \Cref{thm:ber-decay}}\label{sec:proof-thm2}

Suppose that for each fixed $i$, $X_i^{(t)}$ are i.i.d. Bernoulli random variables with success probability $q_i$ such that $q_i= c(i+d)^{-\beta}$ for some $\beta,c,d\ge 0.$ We begin by proving parts (i),(ii), and (iii), wherein the user only utilizes the highest value recommended item. In the Bernoulli setting, this amounts to determining the probability that \textit{at least one} movie is satisfactory, i.e.,
\begin{equation}
    \sum_{t=1}^m p_t h(a_t),
\end{equation}
where
\begin{align}\label{eq:grass}
    h(a) = 1 - \prod_{i=1}^{a} (1 - c(i+d)^{-\beta}).
\end{align}
It suffices now to show the desired asymptotic properties for $h$ depending on $\beta$, and applying \Cref{lem:fennel}.

We note the following fact, which will be helpful in our analysis:
\begin{equation}\label{eq:exp-bound}
    1-x>e^{-x-x^2}\quad \text{for $0< x < \frac{1}{2}$}.
\end{equation}

\subsubsection{Proof of \Cref{thm:ber-decay}(i)}
We now consider the case $0\le \beta < 1$. We analyze
\begin{equation}
    1 - h(a) = \prod_{i=1}^{a} (1 - c(i+d)^{-\beta}) = \prod_{i=d+1}^{a+d} (1 - ci^{-\beta}),
\end{equation}
and bound it from above and below. Let $i'$ be the smallest $i'$ such that $c(i')^{-\beta}<\frac{1}{2}$. Bounding from above,
\begin{align}
\prod_{i=i'+1}^{a+d} (1 - ci^{-\beta})
&< \prod_{i=i'+1}^{a+d} e^{-ci^{-\beta}}\\
&= \exp \left[\sum_{i=i'}^{a+d} -ci^{-\beta}\right]\\
&< \exp \left[\int_{i'}^{a+d+1} -cx^{-\beta}\,dx\right]\\
&= \exp \left[ -\left[ \frac{cx^{1-\beta}}{1-\beta} \right]_{i'}^{a+d+1}\right]\\
&= \exp \left[-\frac{c}{1-\beta} \left((a+d+1)^{1-\beta}-(i')^{1-\beta}\right)\right]
\end{align}
Now, bounding from below,
\begin{align}
\prod_{i=i'}^{a+d} (1 - ci^{-\beta})
&> \prod_{i=i'}^{a+d} e^{-ci^{-\beta}-c^2i^{-2\beta}}\\
&= \exp \left[\sum_{i=i'}^{a+d} -ci^{-\beta}-c^2i^{-2\beta}\right]\\
&> \exp \left[\int_{i'-1}^{a+d} -cx^{-\beta}-c^2x^{-2\beta}\,dx\right]\\
&= \exp \left[ -\left[ \frac{cx^{1-\beta}}{1-\beta} + \frac{c^2x^{2-\beta}}{2-\beta} \right]_{i'-1}^{a+d}\right]\\
&= \exp \left[-\frac{c}{1-\beta} \left((a+d)^{1-\beta}-(i'-1)^{1-\beta}\right) -\frac{c^2}{1-2\beta} \left((a+d)^{2-\beta}-(i'-1)^{2-\beta}\right)\right],
\end{align}
where the first inequality follows from \eqref{eq:exp-bound}. Then observing that
\begin{equation}
    1 - h(a) = \prod_{i=d+1}^{i'-1} (1 - ci^{-\beta})\cdot \prod_{i=i'}^{a+d} (1 - ci^{-\beta}),
\end{equation}
where the first product is constant in $a$, it follows that
\begin{equation}
    \lim_{a \rightarrow \infty} \frac{\log (1-h(a))}{-\frac{c}{1-\beta}a^{1-\beta}} = 1,
\end{equation}
as desired. The result follows from \Cref{lem:fennel}(i).

\subsubsection{Proof of \Cref{thm:ber-decay}(ii)}
We turn to the case $\beta=1$. We have that
\begin{align}
    1-h(a) = \prod_{i=d+1}^{a+d} \left(1 - \frac{c}{i}\right) &= \frac{d+1-c}{d}\cdot \frac{d+2-c}{d+1}\cdot \ldots \cdot \frac{a+d-c}{a+d},
\end{align}
which telescopes to
\begin{equation}
    \prod_{i=1}^c \frac{d+i-c}{a+d-i+1}.
\end{equation}
Now note that
\begin{equation}
    \lim_{a\rightarrow \infty} \frac{\prod_{i=1}^c \frac{d+i-c}{a+d-i+1}}{a^{-c}} = \prod_{i=1}^c (d+i-c)
\end{equation}
is a finite constant. Therefore,
\begin{equation}
    \lim_{a\rightarrow \infty} \frac{1 - h(a)}{Ba^{-c}} = 1
\end{equation}
for constant $B$.

\subsubsection{Proof of \Cref{thm:ber-decay}(iii)}
We now consider the case $\beta > 1$. In this case, we will again bound \eqref{eq:grass} from above and below. First note that
\begin{equation}
    1 - h(a) = \prod_{i=d+1}^{\infty} (1 - ci^{-\beta}) = S
\end{equation}
for a finite constant $S$.

We have that
\begin{align}
\prod_{i=a+d+1}^{\infty} (1 - ci^{-\beta})
&< \prod_{i=a+d+1}^{\infty} e^{-ci^{-\beta}}\\
&= \exp \left[\sum_{i=a+d+1}^{\infty} -ci^{-\beta}\right]\\
&< \exp \left[\int_{a+d+1}^{\infty} -cx^{-\beta}\,dx\right]\\
&= \exp \left[ -\left[ \frac{cx^{1-\beta}}{1-\beta} \right]_{a+d+1}^{\infty}\right]\\
&= \exp \left[-\frac{c}{1-\beta} (a+d+1)^{1-\beta}\right]
\end{align}
Therefore,
\begin{equation}
    \prod_{i=1}^{a+d} (1 - ci^{-\beta}) = \frac{S}{\prod_{i=a+d+1}^{\infty}} (1 - ci^{-\beta}) > S / \exp \left[-\frac{c}{1-\beta} (a+d+1)^{1-\beta}\right].
\end{equation}

Also,
\begin{align}
\prod_{i=a+d+1}^{\infty} (1 - ci^{-\beta})
&> \prod_{i=a+d+1}^{\infty} e^{-ci^{-\beta}-c^2i^{-2\beta}}\\
&= \exp \left[\sum_{i=a+d+1}^{\infty} -ci^{-\beta}-c^2i^{-2\beta}\right]\\
&< \exp \left[\int_{a+d}^{\infty} -cx^{-\beta}-c^2x^{-2\beta}\,dx\right]\\
&= \exp \left[ -\left[ \frac{cx^{1-\beta}}{1-\beta} + \frac{c^2x^{2-\beta}}{2-\beta} \right]_{a+d}^{\infty}\right]\\
&= \exp \left[-\frac{c}{1-\beta} (a+d)^{1-\beta} -\frac{c^2}{1-2\beta} (a+d)^{2-\beta}\right],
\end{align}
where the first inequality holds for $a$ sufficiently large due to \eqref{eq:exp-bound}.

Therefore,
\begin{equation}
    \prod_{i=1}^{a+d} (1 - ci^{-\beta}) = \frac{S}{\prod_{i=a+d+1}^{\infty}} (1 - ci^{-\beta}) < S / \exp \left[-\frac{c}{1-\beta} (a+d-1)^{1-\beta} -\frac{c^2}{1-2\beta} (a+d)^{2-\beta}\right].
\end{equation}

Now observe that
\begin{equation}
    \lim_{a\rightarrow \infty} -\frac{c}{1-\beta} (a+d+1)^{1-\beta} = 0
\end{equation}
and
\begin{equation}
    \lim_{a\rightarrow \infty} -\frac{c}{1-\beta} (a+d-1)^{1-\beta} -\frac{c^2}{1-2\beta} (a+d)^{2-\beta} = 0.
\end{equation}

Therefore,
\begin{equation}
    \lim_{a\rightarrow \infty} \frac{\prod_{i=1}^{a+d} (1 - ci^{-\beta})}{\frac{S}{1 - \frac{c}{1-\beta} (a+d+1)^{1-\beta}}} = 1.
\end{equation}
Also note that
\begin{equation}
    \frac{S}{1 - \frac{c}{1-\beta} (a+d+1)^{1-\beta}}
    = S - \frac{S \frac{c}{1-\beta}}{(a+d+1)^{\beta-1} + \frac{c}{1-\beta}}
\end{equation}

Also,
\begin{equation}
    \lim_{a\rightarrow \infty} \frac{\prod_{i=1}^{a+d} (1 - ci^{-\beta})}{ \frac{S}{1 - \frac{c}{1-\beta} (a+d-1)^{1-\beta} -\frac{c^2}{1-2\beta} (a+d)^{2-\beta}}} = 1
\end{equation}
and
\begin{equation}
    \lim_{a\rightarrow \infty} \frac{1 - \frac{c}{1-\beta} (a+d)^{1-\beta} -\frac{c^2}{1-2\beta} (a+d)^{2-\beta}}{1 - \frac{c}{1-\beta} (a+d)^{1-\beta}} = 1.
\end{equation}

It follows that
\begin{equation}
    \lim_{a\rightarrow \infty} \frac{1 - S - h(a)}{-\frac{Sc}{1-\beta}a^{1-\beta}} = 1
\end{equation}
as desired. The result follows from \Cref{lem:fennel}(ii).

\subsubsection{Proof of \Cref{thm:ber-decay}(iv)}
We now prove part (iv), where we consider the expected total value of all recommended items, which is equal to
\begin{equation}
    \sum_{t=1}^{m} p_t h(a_t),
\end{equation}
where
\begin{equation}
    h(a) = \sum_{i=1}^{a} c(i+d)^{-\beta}.
\end{equation}
Again, we consider three cases, $0< \beta < 1, \beta=1, \beta > 1.$

\paragraph{Case 1: $0 < \beta < 1$.} For $0 <\beta < 1,$ observe that
\begin{align}
    \sum_{i=1}^{a} c(i+d)^{-\beta} < \int_{d}^{a+d} cx^{-\beta}\,dx &= \left[\frac{c}{1-\beta}x^{1-\beta}\right]_{d}^{a+d}\\
    &= \frac{c}{1-\beta}(a+d)^{1-\beta} - \frac{c}{1-\beta}d^{1-\beta}
\end{align}
and
\begin{align}
    \sum_{i=1}^{a} c(i+d)^{-\beta} > \int_{d+1}^{a+d+1} cx^{-\beta}\,dx &= \left[\frac{c}{1-\beta}x^{1-\beta}\right]_{d+1}^{a+d+1}\\
    &= \frac{c}{1-\beta}(a+d+1)^{1-\beta} - \frac{c}{1-\beta}(d+1)^{1-\beta}.
\end{align} 
It follows that
\begin{align}
\lim_{a\rightarrow \infty}\frac{h(a)}{a^{1-\beta}} = \frac{c}{1-\beta},
\end{align} 
and the result in this case follows by applying \Cref{lem:fennel}(iv).

\paragraph{Case 2: $\beta = 1$.} Now for $\beta=1,$ we have that
\begin{align}
    \sum_{i=1}^{a} c(i+d)^{-\beta} = c\sum_{i=d+1}^{a+d} \frac{1}{i}.
\end{align}
\begin{align}
\lim_{a\rightarrow \infty}h(a) - c\log a + c\gamma - c\sum_{i=1}^d \frac{1}{i} = 0,
\end{align} 
where $\gamma$ is the Euler-Mascheroni constant The result in this case follows by applying \Cref{lem:fennel}(iii).

\paragraph{Case 3: $\beta > 1$.} Finally, for $\beta>1,$ we have that 
\begin{equation}
     \sum_{i=1}^{\infty} c(i+d)^{-\beta} = S
\end{equation}
for some finite $S$. Then note that
\begin{equation}
    \sum_{i=1}^{a} c(i+d)^{-\beta} = S - \sum_{i=a+1}^{\infty} c(i+d)^{-\beta}.
\end{equation}
Then we have
\begin{align}
    \sum_{i=a+1}^{\infty} c(i+d)^{-\beta} < \int_{a}^{\infty} cx^{-\beta}\,dx &= \left[\frac{c}{1-\beta}x^{1-\beta}\right]_{a}^{\infty}
    = - \frac{c}{1-\beta}a^{1-\beta}
\end{align}
and
\begin{align}
    \sum_{i=a+1}^{\infty} c(i+d)^{-\beta} > \int_{a+1}^{\infty} cx^{-\beta}\,dx &= \left[\frac{c}{1-\beta}x^{1-\beta}\right]_{a+1}^{\infty}
    = - \frac{c}{1-\beta}(a+1)^{1-\beta}
\end{align}
It follows that
\begin{equation}
    \lim_{a\rightarrow \infty}\frac{h(a)}{S - \frac{c}{\beta - 1}a^{1-\beta}} = 1.
\end{equation}
The result in this case follows again by applying \Cref{lem:fennel}(ii).

\subsection{A rounding lemma}\label{sec:proof-rounding-lemma}
The following lemma is useful for showing that---for the class of problems we consider here---optimal integer solutions are well-approximated by optimal real solutions.

\begin{lemma}
\label{lem:roundingloss}
Let $g_1,g_2,\cdots,g_m: [0,\infty)^m \rightarrow \RR$ be strictly convex functions over the non-negative reals. Then define
\begin{equation}
    g(x_1,\cdots,x_m):=\sum_{t=1}^m g_t(x_t).
\end{equation}
Then, under the constraint that $\sum_{t=1}^m x_t = n$, if $(x_1^*,\cdots,x_m^*)$ is the maximum of $g$ over the non-negative reals and $(a_1^*,\cdots,a_m^*)$ is the maximum of $g$ over the non-negative integers, then
\begin{equation}
\lfloor x_t^* \rfloor - m < a_t^* < \lfloor x_t^* \rfloor + m
\end{equation}
for all $t$.
\end{lemma}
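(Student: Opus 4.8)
The plan is to combine the first-order (KKT) description of the continuous optimizer with a single-unit exchange argument for the integer optimizer, bridging the two by a pigeonhole count on the floors $\lfloor x_t^*\rfloor$. I first dispose of $m=1$, where $a_1^*=x_1^*=n$ and the claim is trivial, and assume $m\ge 2$. The integer optimum exists since it ranges over the finite set $A_n$, and, since $\sum_t g_t$ is strictly convex on the compact simplex $\{x\ge 0:\sum_t x_t=n\}$, the continuous optimizer $x^*$ is unique. (As literally stated, a sum of strictly convex functions is \emph{maximized} at a vertex of the simplex, so the bound would fail; the meaningful and applied regime is minimization, so I read $(x^*),(a^*)$ as minimizers, equivalently maximizers of the concave $-g_t$ as used in \Cref{prop:uniform} and \Cref{thm:ber-varying}, and the argument is identical up to a global sign.) The KKT conditions then furnish a threshold $\lambda$ with $g_t'(x_t^*)=\lambda$ whenever $x_t^*>0$ and $g_t'(0)\ge\lambda$ whenever $x_t^*=0$; strict convexity makes each $g_t'$ strictly increasing. (If some $g_t$ is non-differentiable I would read $g_t'$ as a subgradient, which exists throughout the interior.)

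Two facts drive the argument. First, a counting fact: since $\lfloor x_t^*\rfloor\le x_t^*<\lfloor x_t^*\rfloor+1$ and $\sum_t x_t^*=n$, the deviations $d_t:=a_t^*-\lfloor x_t^*\rfloor$ satisfy $\sum_t d_t=n-\sum_t\lfloor x_t^*\rfloor\in[0,m)$. Second, local optimality of $a^*$: writing $\Delta_t(a):=g_t(a+1)-g_t(a)$, which is strictly increasing in $a$ by strict convexity, no single-unit transfer from a coordinate with positive count to another coordinate can strictly decrease $g$. I reach each bound by exhibiting exactly such an improving transfer.

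For the upper bound I assume, for contradiction, $a_t^*\ge\lfloor x_t^*\rfloor+m$ for some $t$, i.e.\ $d_t\ge m$. Then $\sum_{s\ne t}d_s=(\sum_s d_s)-d_t<m-m=0$, so some coordinate $s$ has $d_s\le-1$, i.e.\ $a_s^*+1\le\lfloor x_s^*\rfloor\le x_s^*$. Because $[a_s^*,a_s^*+1]$ lies weakly left of the then-positive $x_s^*$, monotonicity of $g_s'$ gives $\Delta_s(a_s^*)\le g_s'(x_s^*)=\lambda$; because $a_t^*-1>x_t^*$ (here I use $m\ge 2$), the interval $[a_t^*-1,a_t^*]$ lies strictly right of $x_t^*$, so strict monotonicity gives $\Delta_t(a_t^*-1)>g_t'(x_t^*)\ge\lambda$. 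Hence $\Delta_s(a_s^*)<\Delta_t(a_t^*-1)$, and transferring a unit from $t$ (which has $a_t^*\ge m\ge 1$) to $s$ strictly decreases $g$, contradicting optimality of $a^*$. Thus $a_t^*<\lfloor x_t^*\rfloor+m$.

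The lower bound is symmetric but with the twist I expect to be the main obstacle. Assuming $a_t^*\le\lfloor x_t^*\rfloor-m$ (so $d_t\le-m$) gives $\sum_{s\ne t}d_s\ge m$ spread over only $m-1$ coordinates, so by pigeonhole some $s$ has $d_s\ge 2$, i.e.\ $a_s^*\ge\lfloor x_s^*\rfloor+2\ge x_s^*+1$. This ``$+2$'' (rather than merely $d_s\ge 1$) is precisely what is needed: it forces $a_s^*-1\ge x_s^*$, so that $\Delta_s(a_s^*-1)>g_s'(x_s^*)=\lambda$, restoring the clean bound that was automatic in the upper-bound case. Combined with $\Delta_t(a_t^*)<\lambda$ (from $a_t^*+1<x_t^*$), transferring a unit from $s$ to $t$ strictly decreases $g$, the desired contradiction, giving $a_t^*>\lfloor x_t^*\rfloor-m$. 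The points requiring care are exactly (i) this asymmetry forcing the pigeonhole to yield surplus at least $2$, (ii) keeping every marginal comparison strict via strict convexity, and (iii) the boundary coordinates $x_t^*=0$ together with any non-differentiability, both handled by interpreting $g_t'$ as a subgradient and using $g_t'(0)\ge\lambda$.
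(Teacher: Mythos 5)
Your proof is correct and takes essentially the same route as the paper's: a single-unit exchange argument that compares the discrete marginal differences $g_t(a+1)-g_t(a)$ on either side of the continuous optimizer to the common first-order value there, plus a counting step tying the sum constraint to the two-sided bound. In fact your version is more careful than the paper's on exactly the points that matter: you fix the sign convention (the lemma as stated maximizes a strictly convex function, which would be optimized at a vertex; the intended and applied regime is minimization/concave maximization), you handle boundary coordinates via the KKT inequality $g_t'(0)\ge\lambda$, and your pigeonhole "surplus at least $2$" step avoids the straddling issue in the paper's chain $g_j(a_j^*+1)-g_j(a_j^*)\ge \partial g_j/\partial x_j$, which is not justified when the unit interval $[a_j^*,a_j^*+1]$ contains $x_j^*$ in its interior.
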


\begin{proof}
The key idea is to show that there cannot be $i,j$ such that $a_i^*\ge \lceil x_i \rceil + 1$ and $a_j^*\le \lceil x_j \rceil - 1.$ If such a pair does exist, we show that
\begin{equation}
g(\cdots,a_i^*-1,\cdots,a_j^*+1,\cdots) \ge g(\cdots,a_i^*,\cdots,a_j^*,\cdots),
\end{equation}
contradicting the optimality of $a_1^*,\cdots,a_m^*.$ It suffices to show that
\begin{equation}
g_i(a_i^*-1) + g_j(a_j^*+1)\ge g_i(a_i^*) + g_j(a_j^*),
\end{equation}
or equivalently,
\begin{equation}
g_j(a_j^*+1) - g_j(a_j^*)\ge g_i(a_i^*) - g_i(a_i^*-1).
\end{equation}
This holds, as
\begin{align}
   g_j(a_j^*+1) - g_j(a_j^*) \ge \frac{\partial g_j}{\partial x_j} = \frac{\partial g_i}{\partial x_i} \ge g_i(a_i^*) - g_i(a_i^*-1).
\end{align}
\end{proof}

\subsection{Proof of \Cref{prop:uniform}}\label{sec:proof-of-uniform}
 For ease of exposition, we will begin by proving the result for $k=1$ and then handle the more general case.
\paragraph{First case ($k=1$).} Let us first consider the case $k=1.$ We would like to find an ordered tuple of non-negative integers $(a_1,\cdots,a_m)$ that maximizes
\begin{equation}
\sum_{t=1}^m p_t\mu_\DD(a_t,a_t)=\sum_{t=1}^m p_t\left(1 - \frac{1}{a_t+1}\right)
\end{equation}
subject to the constraint $\sum_{t=1}^m a_t = n.$
Our strategy will be to solve the relaxed optimization problem over non-negative reals, and then show that the optimal integer solution is ``close to'' the optimal real solution. Consider the function $g:[0,\infty)^m \rightarrow \RR$, where
\begin{equation}
g(x_1,\cdots,x_m) = \sum_{t=1}^m p_t\left(1 - \frac{1}{x_t+1}\right).
\end{equation}
Subject to the constraint $\sum_{t=1}^m x_t = n,$ $g(x_1,\cdots,x_m)$ is maximized exactly when 
\begin{equation}
\frac{\partial g}{\partial x_1} = \frac{\partial g}{\partial x_2} = \cdots = \frac{\partial g}{\partial x_m}.
\end{equation}
This is clear after noting that $g$ is convex on its domain and applying Lagrange multipliers. We have that
\begin{equation}
\frac{\partial g}{\partial x_t} = p_t\left(\frac{1}{x_t+1}\right)^2.
\end{equation}
So if $(x_1^*,\cdots,x_m^*)$ is a maximum, then $x_t^*+1 \propto \sqrt{p_t},$ meaning that
\begin{equation}
x_t^* = \left(\frac{\sqrt{p_t}}{\sum_{i=1}^m \sqrt{p_i}}\right)n-1.
\end{equation}
We now apply \Cref{lem:roundingloss}: if $(a_1^*,\cdots,a_m^*)$ is the optimal integer solution, then $|a_t^*-x_t^*|\le m$. Thus,
\begin{equation}
    \left|a_t^* - \frac{\sqrt{p_t}}{\sum_{i=1}^m \sqrt{p_i}}n\right|\le m+1
\end{equation}
for all $n$.

\paragraph{General case $k$.}  More generally, we would like to maximize
\begin{equation}
\sum_{t=1}^m p_t \sum_{i=1}^{\min\{a_t,k(n)\}} \mu_\DD(a_t-i+1, a_t).
\end{equation}
subject to the constraint $\sum_{t=1}^m a_t = n,$ where we let $k=k(n)$ be a function of $n$.
We first analyze each case of the inner summation:
\begin{equation}
\sum_{i=1}^{\min\{a_t,k(n)\}} \mu_\DD(a_t,a_t-i+1) =
\begin{cases}
\frac{a_t}{2} &\quad a_t\le k(n)\\
\sum_{i=1}^{k(n)} 1 - \frac{i}{a_t+1} = k(n) - \frac{k(n)^2 + k(n)}{2(a_t+1)}& \quad a_t> k(n)
\end{cases}.
\end{equation}
The top case $a_t \leq k(n)$ follows from all $a_t$ items of that type contributing to the objective, each with a mean value of $\frac12$.

Then define
\begin{equation}
g:[0,\infty)^m \rightarrow \RR,\quad
(x_1,\cdots,x_m)\mapsto \sum_{t=1}^m p_th(x_t)
\end{equation}
where
\begin{equation}
    h(x):=\begin{cases}
\frac{x}{2} &\quad x\le k(n)\\
\sum_{i=1}^{k(n)} 1 - \frac{i}{x+1} = k(n) - \frac{k(n)^2 + k(n)}{2(x+1)}& \quad x> k(n)
\end{cases}.
\end{equation}

So we have
\begin{equation}
\frac{\partial g}{\partial x_t} = 
\begin{cases}
\frac{p_tx}{2} &\quad x_t\le k(n)\\
p_t(k(n)^2 + k(n))\left(\frac{1}{x_t + 1}\right)^2 &\quad x_t> k(n)
\end{cases}.
\end{equation}
We first consider possible solutions where $x_t > k(n)$ for all $t\in[m].$ In this case,
\begin{equation}
\frac{\partial g}{\partial x_t} = p_t(k(n)^2 + k(n))\left(\frac{1}{x_t + 1}\right)^2
\end{equation}
is equal for all $t$ at the maximum, as before. Remarkably, the new $k(n)^2 + k(n)$ term drops out. So if $(x_1^*,\cdots,x_m^*)$ is the optimal real solution, again,
$x_t^*+1 \propto \sqrt{p_t},$ and the result follows as in the first case. Here, $x_t^* > k(n)$ holds whenever $k\le \frac{\sqrt{p_m}}{\sum_{i=1}^m \sqrt{p_i}}n - m - 1.$

\subsection{Proof of \Cref{thm:ber-varying}}\label{sec:proof-thm3}
\begin{proof}
We would like to find $a_1,\cdots,a_m$ that maximizes
\begin{equation}
\sum_{t=1}^m p_t \left(1 - (1 - q_t)^{a_t}\right) = 1 - \sum_{t=1}^m p_t(1-q_t)^{a_t}.
\end{equation}
subject to the constraint $\sum_{t=1}^m p_t = n.$ This is equivalent to minimizing
\begin{equation}
\sum_{t=1}^m p_t(1-q_t)^{a_t}.
\end{equation}
Now define a function
\begin{equation}
    g:[0,\infty)^m\rightarrow \RR,\quad (x_1,\cdots,x_m)\mapsto \sum_{t=1}^m p_t(1-q_t)^{x_t}.
\end{equation}
Subject to the constraint $\sum_{t=1}^m x_t = n,$ $g(x_1,\cdots,x_m)$ is maximized exactly when 
\begin{equation}
\frac{\partial g}{\partial x_1} = \frac{\partial g}{\partial x_2} = \cdots = \frac{\partial g}{\partial x_m}.
\end{equation}
We have
\begin{equation}
    \frac{\partial g}{\partial x_t} = -p_t(1-q_t)^{x_t}\log(1-q_t).
\end{equation}
Solving $\partial g/\partial x_i = \partial g/\partial x_j$ gives
\begin{align}
    p_i(1-q_i)^{x_i}\log(1-q_i) &= p_j(1-q_j)^{x_j}\log(1-q_j)\\
    \implies \log p_i + x_i\log(1-q_i) + \log\log(1-q_i) &= \log p_j + x_j\log(1-q_j) + \log\log(1-q_j)
\end{align}
It follows that $a_t\propto \frac{1}{\log(1-q_t)}$ for all $t$, where we have once again applied \Cref{lem:roundingloss}.
\end{proof}

\end{document}